\documentclass[a4paper,11pt,final]{article}


\usepackage[english]{babel}
\usepackage[utf8]{inputenc}
\usepackage{graphicx}
\usepackage{algorithm}
\usepackage{algorithmic}
\usepackage{mathtools}
\usepackage{amssymb}
\usepackage{mathptmx}
\usepackage{xcolor}
\usepackage{amsmath}
\usepackage{amsthm}
\usepackage{multirow}
\usepackage[margin=2cm]{geometry}

\newtheorem{theorem}{Theorem}
\newtheorem{lemma}[theorem]{Lemma}
\newtheorem{definition}[theorem]{Definition}

\usepackage{lastpage}
\usepackage{fancyhdr}
\usepackage{titling}

\usepackage{url}
\usepackage{doi}
\usepackage[autostyle]{csquotes}
\usepackage[style=authoryear,citestyle=authoryear-comp,maxcitenames=2,maxbibnames=99,backend=biber,backref=true,dashed=false]{biblatex}
\addbibresource{references.bib}

\usepackage
	{todonotes}

\setlength{\headheight}{13.6pt} 

\usepackage{lineno}
\usepackage{empheq}

\usepackage{tikz}
\usetikzlibrary{arrows}
\usetikzlibrary{positioning,decorations.pathreplacing}
\definecolor{myred}{RGB}{204,0,0}
\definecolor{mygreen}{RGB}{36,143,36}
\definecolor{myblue}{RGB}{0,0,255}
\tikzstyle{nn}=[circle,thick,draw=black!75,minimum size=6mm,fill=white]
\tikzstyle{pp}=[circle,inner sep=1pt,fill=black,minimum size=2mm]
\tikzstyle{ppp}=[circle,inner sep=1pt,fill=black,minimum size=1mm]
\tikzstyle{pppr}=[circle,inner sep=1pt,fill=myred,minimum size=1mm]
\tikzstyle{pppg}=[circle,inner sep=1pt,fill=mygreen,minimum size=1mm]
\tikzstyle{pppb}=[circle,inner sep=1pt,fill=myblue,minimum size=1mm]
\tikzstyle{redlink}=[line width=0.2cm,draw=red!50,line cap=rect]
\tikzstyle{redlinkb}=[line width=0.2cm,draw=red!50,line cap=butt]
\tikzstyle{redlinkd}=[line width=0.2cm,draw=red!50,dashed]
\tikzstyle{bluelink}=[line width=0.2cm,draw=blue!50,line cap=rect]
\tikzstyle{bluelinkb}=[line width=0.2cm,draw=blue!50,line cap=butt]
\tikzstyle{greylink}=[line width=0.2cm,draw=black!40,line cap=rect]

\title{Continuous Average Straightness in Spatial Graphs}
\author{\vspace{-0.2cm}Vincent Labatut\\
	\vspace{-0.2cm}\small Laboratoire Informatique d'Avignon -- LIA EA 4128, Université d'Avignon, France \\
    \small\href{mailto:vincent.labatut@univ-avignon.fr}{\texttt{vincent.labatut@univ-avignon.fr}}}
\date{}

\hypersetup{
    pdftitle={\thetitle},
    bookmarksnumbered=true,bookmarksopen=true,
	unicode=true,colorlinks=true,linktoc=all,
	linkcolor=blue,citecolor=blue,filecolor=blue,urlcolor=blue,
	pdfstartview=FitH
}


\usepackage{enumitem}
\setlist{nolistsep}

\pagestyle{fancy}
\fancyhf{}

\cfoot{\thepage\ / \pageref*{LastPage}} 
\chead{V. Labatut~--~Continuous Average Straightness in Spatial Graphs}

\usepackage{adjustbox}

\begin{document}
\maketitle

\begin{abstract}
\addcontentsline{toc}{section}{Abstract}
The Straightness is a measure designed to characterize a pair of vertices in a spatial graph. It is defined as the ratio of the Euclidean distance to the graph distance between these vertices. It is often used as an average, for instance to describe the accessibility of a single vertex relatively to all the other vertices in the graph, or even to summarize the graph as a whole. In some cases, one needs to process the Straightness between not only vertices, but also any other points constituting the graph of interest. Suppose for instance that our graph represents a road network and we do not want to limit ourselves to crossroad-to-crossroad itineraries, but allow any street number to be a starting point or destination. In this situation, the standard approach consists in: 1) discretizing the graph edges, 2) processing the vertex-to-vertex Straightness considering the additional vertices resulting from this discretization, and 3) performing the appropriate average on the obtained values. However, this discrete approximation can be computationally expensive on large graphs, and its precision has not been clearly assessed. In this article, we adopt a continuous approach to average the Straightness over the edges of spatial graphs. This allows us to derive $5$ distinct measures able to characterize precisely the accessibility of the whole graph, as well as individual vertices and edges. Our method is generic and could be applied to other measures designed for spatial graphs. We perform an experimental evaluation of our continuous average Straightness measures, and show how they behave differently from the traditional vertex-to-vertex ones. Moreover, we also study their discrete approximations, and show that our approach is globally less demanding in terms of both processing time and memory usage. Our R source code is publicly available under an open source license.

\vspace{0.3cm}
\noindent \textbf{Keywords:} Spatial graph, Straightness, Centrality measure, Graph characterization.

\vspace{0.3cm}
\noindent \textcolor{red}{\textbf{Cite as:} V. Labatut. \href{https://academic.oup.com/comnet/article-abstract/6/2/269/4090988}{Continuous Average Straightness in Spatial Graphs}, Journal of Complex Networks, 6(2):269-296, 2018. DOI: \href{https://doi.org/10.1093/comnet/cnx033}{10.1093/comnet/cnx033}}
\end{abstract}

\section{Introduction}
\label{sec:Intro}
In a spatial graph, vertices (and consequently edges) hold a position in a metric space \parencite{Barthelemy2011}. Such graphs allow modeling a variety of real-world systems in which spatial constraints affect the topology, and are particularly popular in quantitative geography. For instance, a spatial graph can be used to model the road transportation system of a city, the edges and vertices representing the streets and crossroads, respectively.

When characterizing a spatial graph, it can be interesting, or even necessary, to use this additional spatial information: a number of specific measures have been developed for this purpose \parencite{Barthelemy2011}. In this article, we focus on the \textit{Straightness} \parencite{Vragovic2005,Porta2006}, which characterizes a pair of vertices. It corresponds to the ratio of the \textit{Euclidean} to the \textit{graph} distances between the vertices. In other words, it compares the distance as the crow flies, to the distance obtained when following the shortest path over the graph edges. It quantifies how efficient the graph is in providing the most direct path from one vertex to the other. The Straightness gets close to zero if the path is very tortuous, and it can reach one if it is completely straight. Sometimes, its reciprocal is used instead, under the names \textit{Circuity} \parencite{OSullivan1996}, \textit{Directness} (though \textit{in}directness would be more relevant) \parencite{Hess1997}, \textit{Tortuosity} \parencite{Kansal2001}, \textit{Route Factor} \parencite{Gastner2006a} and \textit{Detour Index} \parencite{Louf2013}.

The Straightness and its reciprocal are used in various ways in the literature. Besides the description of a given pair of vertices, they are also averaged to characterize a single vertex \parencite{Porta2006}, a subgraph \parencite{Kansal2001} or the whole graph \parencite{Kansal2001, Vragovic2005, Gastner2006a, Louf2013}. For a single vertex, authors consider the mean value for all pairs containing the vertex of interest. For a (sub)graph, they average over all pairs of vertices in the (sub)graph. Alternatively, on large graphs, authors average only over a subset of the vertex pairs \parencite{Hess1997, Levinson2009}, primarily to limit the computational cost, but also to focus on certain parts of the graphs (e.g. daily home-office journeys in \parencite{Levinson2009}).

It is important to notice that the relevance of such a discrete average implicitly relies on the assumption that the only considered journeys go from one vertex to another. This is appropriate, for example, for graphs representing subway transportation systems: travelers can only go from one station (i.e. vertex) to another, they cannot get off or on the train anywhere else. However, there are also situations in which the journey could start or end anywhere on an edge, and not necessarily exactly on a vertex. This is for instance the case, on a road network, when the travelers use a personal car or simply walk: they can stop and start anywhere on a street (i.e. an edge), not necessarily at a crossroad (i.e. a vertex). In this case, averaging over pairs of vertices is likely to constitute a very rough approximation of the actual mean Straightness (or Tortuosity). Moreover, on a large network, this processing can be computationally expensive. Working on only a subset of the vertex pairs can solve this problem, but it will increase the imprecision of the already approximate computed value.

In this paper, to solve both precision and computational cost problems, we adopt a continuous approach, by considering all the points constituting the graph instead of just its vertices. We let a journey start and end not only on any vertex, but also on any point lying on an edge. Instead of processing the average Straightness through a sum over pairs of vertices, we integrate the measure over the concerned edges. In other words, we propose a different way of averaging the Straightness, while keeping the same definition of the Straightness itself. The continuous nature of our approach leads to a significantly lower computational cost, compared to a discrete approximation. Our method allows the classic uses of the average Straightness, as a vertex accessibility measure or to characterize a (sub)graph, and we additionally derive a new accessibility measure for individual edges.

The rest of this document is organized as follows. In Section~\ref{sec:DefinitionsNotations}, we introduce the required definitions and notations, and give the sketch of the method we use to derive the continuous average Straightness. In Sections~\ref{sec:StraightnessReformulation} and~\ref{sec:ContinuousAverageStraightness}, we derive $5$ variants of the continuous average Straightness. In Section~\ref{sec:ComplexityApproximation}, we compute their algorithmic complexity, and describe some discrete approximations. Section~\ref{sec:EmpiricalValidation} is dedicated to the empirical validation of our measures. This includes a performance study, comparing their processing time and memory usage to those of the discrete approximations, and an analysis of their behavior on artificial graphs and real-world road networks. Finally, we conclude with some possible extensions in Section~\ref{sec:Conclusion}.

\section{Definitions, Notations and Proof Sketch}
\label{sec:DefinitionsNotations}
We consider an \textit{undirected graph} $G=(V,E)$, where $V$ is the set of vertices and $E \subset V^2$ is the set of undirected edges. 
This graph is \textit{spatial}, so each vertex $v \in V$ is characterized by a \textit{position} $(x_v,y_v)$ in the Euclidean plane. Several distinct vertices can \textit{coincide}, i.e. hold the same position. An undirected \textit{edge} is a straight segment connecting two vertices $u,v \in V$, and is denoted as a pair of \textit{lexicographically ordered} vertices: $(u,v) \in E$.

We define the set of the \textit{graph points} $P$ as the set of all points constituting the graph. It includes the vertices ($V \subset P$) as well as all the points lying on the edges. Each \textit{point} $p \in P$ has a spatial position $(x_p,y_p)$. If the point is a vertex, it can lie on zero to several edges (as an end-vertex), depending on its degree. If it is not a vertex, it lies on \textit{exactly} one edge. Note that several \textit{distinct} non-vertex points can coincide (hold the same spatial position) in case of intersecting edges (vertices can also coincide, as mentioned earlier).

A path between two points is, in general, a sequence of adjacent segments starting from one point and leading to the other. When the path origin and destination are both vertices, as represented in solid red on the left graph of Figure~\ref{fig:BothDistances}, each segment corresponds to an edge. However, if the origin (resp. destination) is not a vertex, then the first (resp. last) segment is only a \textit{portion} of edge, connecting the point to one of the end-vertices of its edge, as shown on the center and right graphs of Figure~\ref{fig:BothDistances}.

We express the spatial distance between two points $p_1$ and $p_2$ through two distinct measures. The first is the classic \textit{Euclidean distance}, noted $d_E(p_1,p_2)$, and represented in dashed red in Figure~\ref{fig:BothDistances}. Based on this distance, we can define the \textit{length of an edge} $(u,v)$, which is the Euclidean distance $d_E(u,v)$ between its end-vertices. The \textit{length of a path} is the sum of the lengths of its constituting segments. The \textit{shortest path} between two points corresponds to the path of minimal length. The second measure is the \textit{weighted graph distance}, or \textit{graph distance} for short, noted $d_G(p_1,p_2)$. It is the length of the shortest path connecting $p_1$ and $p_2$ on the graph. Note that, since this is a Euclidean space, $d_E(p_1,p_2) \leq d_G(p_1,p_2)$. Figure~\ref{fig:BothDistances} shows three shortest paths, in solid red, whose length thus corresponds to the graph distance between the concerned points.

\begin{figure}[ht]
	\centering
	\begin{tikzpicture}
		\node[nn] (v1) at (0,0)    {$v_1$};
		\node[nn] (v2) at (-0.5,3) {$v_2$};
		\node[nn] (v3) at (1,4)    {$v_3$};
		\node[nn] (v4) at (3,2)    {$v_4$};
		\node[nn] (v5) at (2,0.5)  {$v_5$};
 		
		\draw[redlink]  (v1) edge (v2);
		\draw[redlink]  (v2) edge (v3);
		\draw[redlinkd] (v1) edge (v3);
     	
		\node[nn] (v1) at (0,0)    {$v_1$};
		\node[nn] (v2) at (-0.5,3) {$v_2$};
		\node[nn] (v3) at (1,4)    {$v_3$};
		\node[nn] (v4) at (3,2)    {$v_4$};
		\node[nn] (v5) at (2,0.5)  {$v_5$};
 		
		\draw (v1) edge (v2);
		\draw (v1) edge (v5);
		\draw (v2) edge (v3);
		\draw (v3) edge (v4);
		\draw (v4) edge (v5);
	\end{tikzpicture}
	\hfill
	\begin{tikzpicture}
		\node[nn] (v1) at (0,0)    {$v_1$};
		\node[nn] (v2) at (-0.5,3) {$v_2$};
		\node[nn] (v3) at (1,4)    {$v_3$};
		\node[nn] (v4) at (3,2)    {$v_4$};
		\node[nn] (v5) at (2,0.5)  {$v_5$};
		\node[pp] (p1) at (1.75,3.25) {};
 		
		\draw[redlink]  (v1) edge (v2);
		\draw[redlink]  (v2) edge (v3);
		\draw[redlink]  (v3) edge (p1);
		\draw[redlinkd] (v1) edge (p1); 
     	
		\node[nn] (v1) at (0,0)    {$v_1$};
		\node[nn] (v2) at (-0.5,3) {$v_2$};
		\node[nn] (v3) at (1,4)    {$v_3$};
		\node[nn] (v4) at (3,2)    {$v_4$};
		\node[nn] (v5) at (2,0.5)  {$v_5$};
		\node[pp] (p1) at (1.75,3.25) {};
		\node[below left=-0.3 and 0.01cm of p1] {$p_1$};
 		
		\draw (v1) edge (v2);
		\draw (v1) edge (v5);
		\draw (v2) edge (v3);
		\draw (v3) edge (v4);
		\draw (v4) edge (v5);
	\end{tikzpicture}
	\hfill
	\begin{tikzpicture}
		\node[nn] (v1) at (0,0)    {$v_1$};
		\node[nn] (v2) at (-0.5,3) {$v_2$};
		\node[nn] (v3) at (1,4)    {$v_3$};
		\node[nn] (v4) at (3,2)    {$v_4$};
		\node[nn] (v5) at (2,0.5)  {$v_5$};
		\node[pp] (p1) at (1.75,3.25) {};
		\node[pp] (p2) at (-0.2,1.2) {};
 		
		\draw[redlink]  (p2) edge (v2);
		\draw[redlink]  (v2) edge (v3);
		\draw[redlink]  (v3) edge (p1);
		\draw[redlinkd] (p1) edge (p2); 
     	
		\node[nn] (v1) at (0,0)    {$v_1$};
		\node[nn] (v2) at (-0.5,3) {$v_2$};
		\node[nn] (v3) at (1,4)    {$v_3$};
		\node[nn] (v4) at (3,2)    {$v_4$};
		\node[nn] (v5) at (2,0.5)  {$v_5$};
		\node[pp] (p1) at (1.75,3.25) {};
		\node[below left=-0.3 and 0.01cm of p1] {$p_1$};
		\node[pp] (p2) at (-0.2,1.2) {};
		\node[below left=-0.2 and 0.01cm of p2] {$p_2$};
 		
		\draw (v1) edge (v2);
		\draw (v1) edge (v5);
		\draw (v2) edge (v3);
		\draw (v3) edge (v4);
		\draw (v4) edge (v5);
	\end{tikzpicture}
	\caption{Representation of the Euclidean (dashed) and graph (solid) distances (in red) for three cases: between two vertices $v_1$ and $v_3$ (left), between a vertex $v_1$ and a non-vertex point $p_1$ (center) and between two non-vertex points $p_1$ and $p_2$ (right).}
    \label{fig:BothDistances}
\end{figure}
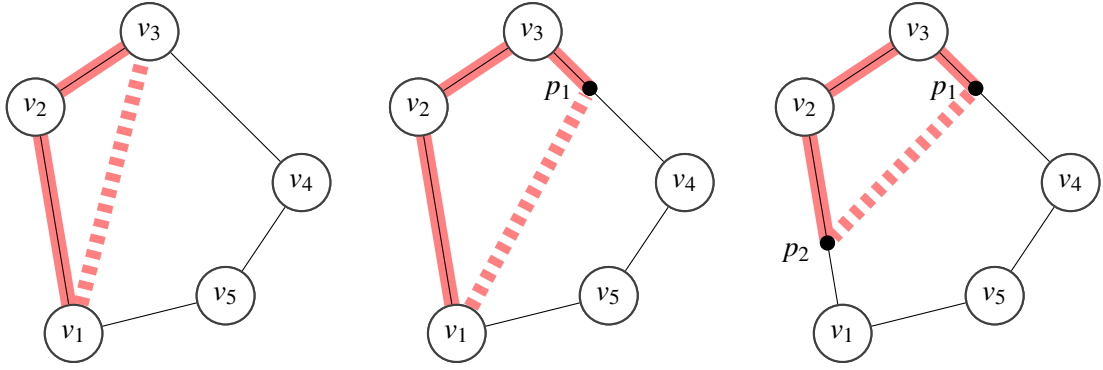

Based on both distances, we can define the Straightness measure. We present here a generalized version of this measure, in the sense that it is defined for a pair of points (and not the traditional pair of vertices).
\begin{definition}[Straightness]
	\label{def:Straightness}
	The \textit{Straightness} $S(p_1,p_2)$ between two points $p_1 $ and $p_2 \in P$ is the ratio of their Euclidean to their graph distances:
	\begin{equation}
	    S(p_1,p_2) = 
	    \begin{cases}
			0 & \text{if }d_G(p_1,p_2) = +\infty, \\
			1 & \text{if }p_1 = p_2, \\
    		\displaystyle \frac{d_E(p_1,p_2)}{d_G(p_1,p_2)} & \text{otherwise.}
		\end{cases}
	\end{equation}
\end{definition}

Since $d_E(p_1,p_2) \leq d_G(p_1,p_2)$, the maximal Straightness value is $1$, which occurs when both distances are equal, i.e. when the shortest path on the graph between $p_1$ and $p_2$ is a straight line. On the contrary, it tends towards $0$ when this path includes more and more detours. The ratio is not defined when $p_1=p_2$: by convention, we set it to $1$ since both distances are equal (to zero). When the points are not connected in the graph, the graph distance is conventionally $d_G(p_1,p_2)= +\infty$, and we set the Straightness to zero.

As mentioned in the introduction, the approach we present in this article consists in averaging a spatial measure through integration over edges instead of summation over vertices. We perform this integration by considering a point $p$ moving along the considered edge. For this purpose, we need to express the position of the point not in terms of its \textit{absolute coordinates} $(x_p,y_p)$, but rather with respect to its edge, using the notion of \textit{relative position}.

\begin{definition}[Relative position]
	\label{def:RelativePosition}
	The position of a point $p \in P$ relatively to its edge $(u,v) \in E$ is :
	\begin{equation}
		\ell_p = d_E(p,u), 
	\end{equation}
	i.e. its Euclidean distance to the first end-vertex of its edge.
\end{definition}

The \textit{first} end-vertex refers to the fact our edges are lexicographically ordered pairs of vertices. The notation $\ell_p$ can be ambiguous if $p$ is a vertex, since it could then lie on several edges. However, in this case the context will allow determining the concerned edge. The left-hand graph in Figure \ref{fig:RelativePosBreakEven} illustrates the notion of relative position.

\begin{figure}[ht]
	\centering
	\begin{tikzpicture}
		\node[nn] (v1) at (0,0)    {$v$};
		\node[nn] (v2) at (-0.5,3) {$u$};
		\node[pp] (p2) at (-0.2,1.2) {};
		\node[below left=-0.2 and 0.01cm of p2] {$p$};
 		
		\draw (v1) edge (v2);
		\draw[dashed] (v1) edge (1,1);
		\draw[dashed] (v1) edge (1,0.4);
		\draw[dashed] (v1) edge (-1.5,0.2);
		\draw[dashed] (v2) edge (-1.5,2.75);
		\draw[dashed] (v2) edge (1,2.5);
		\draw[dashed] (v2) edge (1,4);
		
		\draw[decoration={brace,raise=5pt},decorate] (v2) -- node[right=0.2cm] {$\ell_p$} (p2);
	\end{tikzpicture}
	\hspace{2cm}
	\begin{tikzpicture}
		\node[nn]  (v1) at (0,0)      {};
		\node[nn]  (v2) at (-0.5,3)   {};
		\node[nn]  (v3) at (1,4)      {$u$};
		\node[nn]  (v4) at (3,2)      {$v$};
		\node[nn]  (v5) at (2,0.5)    {};
		\node[ppp] (p1) at (2.6,2.4)  {};
		\node[pp]  (p2) at (-0.2,1.2) {};
 		
		\draw[redlink]   (p2) edge (v2);
		\draw[redlink]   (v2) edge (v3);
		\draw[redlinkb]  (v3) edge (2.6,2.4);
		\draw[bluelink]  (p2) edge (v1);
		\draw[bluelink]  (v1) edge (v5);
		\draw[bluelink]  (v5) edge (v4);
		\draw[bluelinkb] (v4) edge (2.6,2.4);
     	
		\node[nn]  (v1) at (0,0)      {};
		\node[nn]  (v2) at (-0.5,3)   {};
		\node[nn]  (v3) at (1,4)      {$u$};
		\node[nn]  (v4) at (3,2)      {$v$};
		\node[nn]  (v5) at (2,0.5)    {};
		\node[ppp] (p1) at (2.6,2.4)  {};
		\node[below left=-0.1 and 0.01cm of p1] {$q$};
		\node[pp]  (p2) at (-0.2,1.2) {};
		\node[below left=-0.2 and 0.01cm of p2] {$p$};
 		
		\draw (v1) edge (v2);
		\draw (v1) edge (v5);
		\draw (v2) edge (v3);
		\draw (v3) edge (v4);
		\draw (v4) edge (v5);

		\draw[decoration={brace,raise=5pt},decorate] (v3) -- node[above right=0.1cm and 0.2cm] {$\lambda_p$} (p1);
	\end{tikzpicture}
	\caption{Representation of the relative position $\ell_p$ of a point $p$ (left), and of the break-even point $q$ and break-even distance $\lambda_p$ of an edge$(u,v)$ for a point $p$ (right). The blue and red paths have the same length.}
    \label{fig:RelativePosBreakEven}
\end{figure}
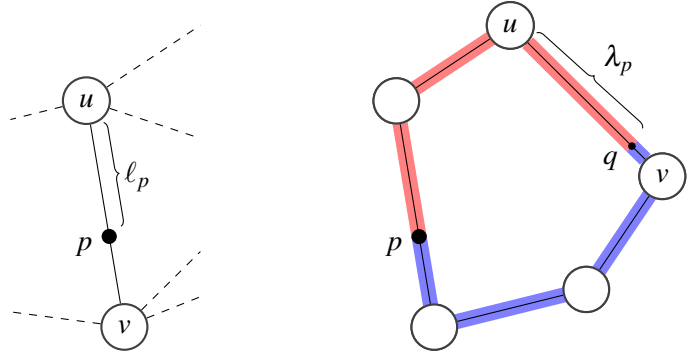

The general idea is then to reformulate the spatial measure as a function of the relative position of the concerned points, which allows integrating the measure over the edges containing these points. This generic approach could be applied to any measure, provided its reformulation is integrable. The Straightness is the ratio of the Euclidean to the graph distances between two points (cf. Definition~\ref{def:Straightness}), so in our case we need to reformulate both distances in terms of relative positions. For the graph distance, this in turn requires introducing the notions of \textit{break-even point} and \textit{break-even distance}.

\begin{definition}[Break-even point and break-even distance]
	\label{def:BreakEvenPoint}
	Consider a point $p \in P$ and an edge $(u,v) \in E$, such that $p$ is either not lying on $(u,v)$, or is one of its end-vertices $u$ or $v$. The break-even point of $(u,v)$ for $p$ is the point $q$ lying on $(u,v)$ at an Euclidean distance $\lambda_p$ from $u$, and such that:
    \begin{equation}
        d_G(p,u) + \lambda_p = d_G(p,v) + d_E(u,v) - \lambda_p.
    \end{equation}
    We call $\lambda_p$, which corresponds to the position of $q$ relatively to its edge (i.e. $\lambda_p=\ell_q$), the break-even distance. Note that it is possible for the break-even point to be $u$ or $v$ (i.e. one of the end-vertices of the edge).
\end{definition}

The right-hand graph in Figure~\ref{fig:RelativePosBreakEven} illustrates the notions of break-even point and break-even distance. By definition, the length of the shortest path between $p$ and the break-even point $q$ is the same whether we consider the paths going through $u$ (represented in red) or $v$ (in blue). 
The relative position $\ell_p$ should not be confused with the break-even distance $\lambda_p$: the former is expressed relatively to the edge containing the considered point $p$, whereas for the latter it is necessarily a different edge.

Note that many measures are based on the notion of distance, for instance the closeness centrality is the reciprocal of the average graph distance between a vertex of interest and the other vertices in the graph \parencite{Bavelas1950}. Their reformulation would therefore also require using the notions of break-even point and distances, which is why we presented them in this general section.

To conclude this part, we will now give the sketch of our method to derive the continuous average Straightness. In Section~\ref{sec:StraightnessReformulation}, we leverage the previous definitions to reformulate the Straightness in terms of relative positions of the concerned vertices. For this purpose, we first consider the Euclidean distance and the graph distance, then proceed with the Straightness between two points, resulting in Theorem \ref{lem:StraightnessRelative}, the main result of that section. If one would like to adapt our method in order to process the continuous average of some other measure, one should replace this step by the reformulation of the targeted measure. In the likely event of a distance-based measure, it is possible to take advantage of our reformulations of the Euclidean (Lemma~\ref{lem:EuclideanDistance}) and/or graph (Lemma \ref{lem:GraphDistancePoints}) distances.

In Section~\ref{sec:ContinuousAverageStraightness}, we explain how $5$ different continuous average variants can be derived based on the reformulation obtained at the previous step. We start by focusing on the average Straightness between a point and an edge (or, more precisely: all the points lying on this edge), which can be obtained by integrating the Straightness over the edge of interest (Definition~\ref{def:AvgStrPtLk}). From this result, it is straightforward to get the average Straightness between a point and the rest of the graph (i.e. all its edges), which can be considered as a vertex accessibility measure (Definition~\ref{def:AvgStrPtGr}). Then, we consider the average Straightness between two edges (or rather, between all their constituting points), which requires integrating twice: once over each one of the considered edges (Definition~\ref{def:AvgStrLkLk}). Based on this result, we can get the average Straightness between an edge and the rest of the graph (i.e. all its edges), which can be interpreted as an edge accessibility measure (Definition~\ref{def:AvgStrLkGr}). Finally, by considering all pairs of edges in the graph, we get the average Straightness between all pairs of points constituting a graph, which can be used to characterize the whole graph (Definition~\ref{def:AvgStrGr}). In this article, we focus on the Straightness, however the second step described in this Section~\ref{sec:ContinuousAverageStraightness} is quite generic and could be applied to any measure reformulated in terms of relative positions beforehand, resulting in the corresponding $5$ continuous average variants.

\section{Reformulation of the Straightness}
\label{sec:StraightnessReformulation}
In this section, we start by expressing the Euclidean and graph distances with respect to the relative positions of the considered points. We then take advantage of these results to reformulate the Straightness, also in relative terms.

We first express the Euclidean distance between two points $p_1$ and $p_2$ in terms of their relative positions $\ell_{p_1}$ and $\ell_{p_2}$.

\begin{lemma}[Euclidean distance between two points]
	\label{lem:EuclideanDistance}
    Let $p_1 \in P$ be a point lying on an edge $(u_1,v_1) \in E$ at a distance $\ell_{p_1}$ from $u_1$, and $p_2 \in P$ be a point lying on an edge $(u_2,v_2) \in E$ at a distance $\ell_{p_2}$ from $u_2$.
    
    The Euclidean distance between $p_1$ and $p_2$ can be written in terms of $\ell_{p_1}$ and $\ell_{p_2}$, as:
	\begin{align}
		\begin{split}
			d_E(p_1,p_2) 
            &= \bigg(\Big(x_{u_2} + \frac{\ell_{p_2}}{d_E(u_2,v_2)}(x_{v_2}-x_{u_2}) - x_{u_1} - \frac{\ell_{p_1}}{d_E(u_1,v_1)}(x_{v_1}-x_{u_1})\Big)^2 \\
            &\quad+ \Big(y_{u_2} + \frac{\ell_{p_2}}{d_E(u_2,v_2)}(y_{v_2}-y_{u_2}) - y_{u_1} - \frac{\ell_{p_1}}{d_E(u_1,v_1)}(y_{v_1}-y_{u_1})\Big)^2 \bigg)^{1/2}.
		\end{split}
		\label{eq:EuclideanRelative}
	\end{align}
\end{lemma}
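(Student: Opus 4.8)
The plan is to reduce the statement to the elementary coordinate formula for distance in the plane, after first rewriting the absolute position of each point as an affine function of its relative position. The crucial structural fact is the one built into the model: an edge is a \emph{straight} segment, so a point lying on it is an affine combination of its two end-vertices governed by a single scalar parameter, and that parameter is pinned down by the relative position of Definition~\ref{def:RelativePosition}.

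First I would parametrize the edge $(u_1,v_1)$. Every point of this segment has the form $u_1 + t\,(v_1-u_1)$ for some $t \in [0,1]$, and its Euclidean distance to $u_1$ is $t\,d_E(u_1,v_1)$. Since $p_1$ sits at distance $\ell_{p_1}$ from $u_1$, setting $t\,d_E(u_1,v_1)=\ell_{p_1}$ forces $t_1 = \ell_{p_1}/d_E(u_1,v_1)$, hence
\begin{align*}
x_{p_1} &= x_{u_1} + t_1 (x_{v_1}-x_{u_1}), \\
y_{p_1} &= y_{u_1} + t_1 (y_{v_1}-y_{u_1}).
\end{align*}
The same argument applied to $(u_2,v_2)$ gives the analogous expressions for $(x_{p_2},y_{p_2})$ with $t_2 = \ell_{p_2}/d_E(u_2,v_2)$.

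Then I would substitute these four affine expressions into the definition of the Euclidean distance, $d_E(p_1,p_2) = \big((x_{p_2}-x_{p_1})^2 + (y_{p_2}-y_{p_1})^2\big)^{1/2}$, and rewrite $t_1$ and $t_2$ back in terms of $\ell_{p_1}$ and $\ell_{p_2}$. This produces exactly Equation~\eqref{eq:EuclideanRelative}, since the $x$-difference and $y$-difference inside the square root match the two parenthesized expressions in the claimed formula termwise.

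I do not expect a genuine obstacle: the lemma is an immediate consequence of the straightness of edges together with the Pythagorean formula, and the bulk of the work is the bookkeeping substitution. The only point deserving a word of care is the degenerate case $d_E(u_1,v_1)=0$ or $d_E(u_2,v_2)=0$, i.e. an edge whose end-vertices coincide, for which $t_1$ or $t_2$ is ill-defined. I would simply remark that such a zero-length edge carries no interior points over which to integrate and may be excluded, so the parametrization is well-defined wherever the formula is actually used.
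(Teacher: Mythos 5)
Your proof is correct and follows essentially the same route as the paper: both express each point's coordinates as an affine combination of its edge's end-vertices with parameter $\ell_p/d_E(u,v)$ (the paper cites the \emph{section formula}, you derive it from the segment parametrization) and then substitute into the coordinate definition of $d_E(p_1,p_2)$. Your extra remark about excluding zero-length edges is a harmless refinement the paper does not bother with.
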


\begin{proof}[Proof of Lemma~\ref{lem:EuclideanDistance}]
	By definition, $d_E(p_1,p_2) = ((x_{p_2}-x_{p_1})^2 + ((y_{p_2}-y_{p_1})^2)^{1/2}$. Now, let $p \in P$ be a point lying on an edge $(u,v) \in E$, at a distance $\ell_p$ from $u$. This point $p$ divides the edge into two parts such that their lengths are in the ratio $\ell_p:(d_E(u,v)-\ell_p)$. By the \textit{section formula}, the coordinates of $p$ are therefore $x_p = x_u + \ell_{p}/d_E(u,v)(x_v-x_u)$ and $y_p = y_u + \ell_{p}/d_E(u,v)(y_v-y_u)$. Replacing for $p_1$ and $p_2$ in the original expression of $d_E(p_1,p_2)$ yields Lemma~\ref{lem:EuclideanDistance}.
\end{proof}

Expressing the graph distance in terms of relative positions is not as straightforward. Unlike the Euclidean distance, it depends on the graph structure, since it involves the notion of shortest path. It consequently requires to consider $4$ possible cases: the shortest path between $p_1$ and $p_2$ can go through $u_1$ and $u_2$, $u_1$ and $v_2$, $v_1$ and $u_2$, or $v_1$ and $v_2$. In the following, we will first consider a simpler case involving a vertex and a non-vertex point, before dealing with a pair of non-vertex points.

\begin{lemma}[Break-even distance for a vertex]
	\label{lem:BreakEvenDistNode}
    Let $r \in V$ be a vertex and $(u,v) \in E$ an edge such that $r \neq v$ and $r \neq u$. 
    
    The break-even distance of $(u,v)$ for the vertex $r$, noted $\lambda_r$ is:
	\begin{equation}
		\lambda_r = \frac{d_G(r,v) - d_G(r,u) + d_E(u,v)}{2}.
        \label{eq:BreakEvenDistNode}
	\end{equation}
\end{lemma}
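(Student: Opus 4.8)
The plan is to read the formula off directly from the defining equation of the break-even point in Definition~\ref{def:BreakEvenPoint}, which is already a single linear equation in $\lambda_r$. The geometric content is simply that the break-even point $q$ is the point on $(u,v)$ at which the two candidate shortest paths from $r$ — the one routed through $u$ and the one routed through $v$ — have equal length. Reaching $q$ through $u$ costs $d_G(r,u)+\lambda_r$ (go to $u$, then travel the sub-segment of length $\lambda_r$), whereas reaching it through $v$ costs $d_G(r,v)+\bigl(d_E(u,v)-\lambda_r\bigr)$ (go to $v$, then travel the complementary sub-segment). These are exactly the two sides of the equation in Definition~\ref{def:BreakEvenPoint}, instantiated with $p=r$.

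First I would write that equation and isolate $\lambda_r$: collecting the two $\lambda_r$ terms on the left gives $2\lambda_r = d_G(r,v)-d_G(r,u)+d_E(u,v)$, and dividing by two yields the claimed expression \eqref{eq:BreakEvenDistNode}. The algebraic manipulation is a one-line linear solve, so there is no real difficulty in the computation itself; the hypotheses $r\neq u$ and $r\neq v$ merely guarantee that $r$ is not an end-vertex of $(u,v)$, so that $d_G(r,u)$ and $d_G(r,v)$ are the ordinary vertex-to-vertex shortest-path distances entering the definition.

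The one point that genuinely deserves a check is that the value produced is admissible, namely that $q$ really lies on the segment, so that $0\le\lambda_r\le d_E(u,v)$ and the later integration over the edge is well posed. I would obtain this from the triangle inequality for the graph metric together with the observation that, since $(u,v)$ is an edge, it provides a direct path of length $d_E(u,v)$ and hence $d_G(u,v)\le d_E(u,v)$. The triangle inequality then gives $\lvert d_G(r,v)-d_G(r,u)\rvert\le d_G(u,v)\le d_E(u,v)$, which is precisely the condition forcing $\lambda_r\in[0,d_E(u,v)]$. Thus the formula not only solves the defining equation but also certifies that a break-even point exists on $(u,v)$, consistent with the remark in Definition~\ref{def:BreakEvenPoint} that $q$ may coincide with $u$ or $v$. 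In short, the main obstacle is essentially absent: the statement is an immediate consequence of rearranging the definition, and the only substantive remark is the range verification above.
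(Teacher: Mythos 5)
Your proof is correct and follows essentially the same route as the paper, whose entire argument is that Equation~\eqref{eq:BreakEvenDistNode} ``comes directly from Definition~\ref{def:BreakEvenPoint}''; you simply spell out the one-line linear solve. Your additional verification that $\lambda_r \in [0, d_E(u,v)]$ (via the triangle inequality and $d_G(u,v) \le d_E(u,v)$) is a sound and worthwhile remark the paper omits, but it does not change the nature of the argument.
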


Note that since $r$, $u$ and $v$ are all vertices, all three distances appearing in the numerator are obtained directly and do not need further processing (all the vertex-to-vertex distances are supposedly known).

\begin{proof}[Proof of Lemma~\ref{lem:BreakEvenDistNode}]
	Equation (\ref{eq:BreakEvenDistNode}) comes directly from Definition~\ref{def:BreakEvenPoint}.
\end{proof}

\begin{lemma}[Graph distance between a vertex and a point]
	\label{lem:GraphDistanceNodePoint}
	 Let $r \in V$ be a vertex, $(u,v) \in E$ an edge, $p \in P$ a point lying on $(u,v)$ at a distance $\ell_p$ from $u$, and $\lambda_r$ the break-even distance of $(u,v)$ for $r$.
	 
	 The graph distance between the vertex $r$ and the point $p$ can be written as:
	\begin{equation}
		d_G(r,p) =
		\begin{cases}
    		d_G(r,u) + \ell_p,& \text{if } \ell_p \leq \lambda_r \\
		    d_G(r,v) + d_E(u,v) - \ell_p,& \text{otherwise.}
		\end{cases}
	\end{equation}
\end{lemma}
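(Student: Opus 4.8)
The plan is to express $d_G(r,p)$ as the minimum over the two ways a path can reach the interior edge point $p$, and then to read off the case split from the sign of $\ell_p-\lambda_r$. First I would argue that, because $p$ lies on the single edge $(u,v)$, any path from $r$ to $p$ must arrive at $p$ along that edge, so its final segment is a portion of $(u,v)$ incident either to $u$ or to $v$. Formally, treating $p$ as a graph point subdivides $(u,v)$ into the two sub-segments $u$--$p$ and $p$--$v$, of lengths $\ell_p$ and $d_E(u,v)-\ell_p$ (using Definition \ref{def:RelativePosition}), without altering any distance between the original vertices, since the concatenation of the two sub-segments still realises the edge $(u,v)$. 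In this subdivided graph $p$ has exactly two neighbours, reached through $u$ and through $v$, so every $r$--$p$ path factors either as an $r$--$u$ path followed by the segment $u$--$p$, or as an $r$--$v$ path followed by the segment $v$--$p$. Minimising the total length then yields
\begin{equation*}
  d_G(r,p) = \min\big\{\, d_G(r,u) + \ell_p,\ \ d_G(r,v) + d_E(u,v) - \ell_p \,\big\}.
\end{equation*}

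It then remains to decide which term attains the minimum. The first term is the smaller one exactly when $d_G(r,u) + \ell_p \leq d_G(r,v) + d_E(u,v) - \ell_p$, i.e.\ when $2\ell_p \leq d_G(r,v) - d_G(r,u) + d_E(u,v)$, which by Lemma \ref{lem:BreakEvenDistNode} is precisely the condition $\ell_p \leq \lambda_r$. This produces the two announced branches. At the boundary $\ell_p = \lambda_r$ the two expressions coincide, consistently with the defining equality of the break-even point (Definition \ref{def:BreakEvenPoint}), so the $\leq$ in the first case may be placed on either branch. The degenerate positions $\ell_p = 0$ (where $p = u$) and $\ell_p = d_E(u,v)$ (where $p = v$) are also covered, because the triangle inequality together with $d_G(u,v)=d_E(u,v)$ gives $|d_G(r,v) - d_G(r,u)| \leq d_E(u,v)$, hence $0 \leq \lambda_r \leq d_E(u,v)$, and the formula then collapses to $d_G(r,u)$ and $d_G(r,v)$ respectively.

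The main obstacle is the rigour of the first step: justifying that a shortest path to an interior edge point can always be routed through one of the two end-vertices, and in particular that $d_G(r,u)$ and $d_G(r,v)$ appearing in the formula are the unchanged original vertex-to-vertex graph distances rather than quantities perturbed by the insertion of $p$. The key observation making this clean is that subdividing $(u,v)$ at $p$ leaves all vertex-to-vertex distances invariant and endows $p$ with degree two, so the minimisation genuinely ranges over just the $u$-side and $v$-side approaches. Once that min formula is established, the remainder is an elementary comparison driven entirely by the definition of $\lambda_r$.
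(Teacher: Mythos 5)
Your proposal is correct and takes essentially the same route as the paper's proof: both reduce $d_G(r,p)$ to a comparison of the two candidate lengths $d_G(r,u)+\ell_p$ and $d_G(r,v)+d_E(u,v)-\ell_p$, with the break-even distance $\lambda_r$ deciding which is smaller (you solve the inequality via the explicit formula of Lemma \ref{lem:BreakEvenDistNode}, while the paper uses monotonicity plus the defining equality of Definition \ref{def:BreakEvenPoint} --- the same algebra). Your one genuine addition is the explicit subdivision argument establishing the min formula, i.e.\ that a shortest $r$--$p$ path must pass through $u$ or $v$ and that the vertex-to-vertex distances $d_G(r,u)$ and $d_G(r,v)$ are unperturbed; the paper's proof assumes this decomposition implicitly, so your version is slightly more rigorous on that point.
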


As a consequence of this Lemma: if $\ell_p \leq \lambda_r$, the shortest path between $r$ and $p$ goes through $u$, whereas if $\ell_p \geq \lambda_r$, it goes through $v$, as illustrated in Figure~\ref{fig:GraphDistNdPt}.

\begin{figure}[ht]
	\centering
	\begin{tikzpicture}
		\node[nn] (v1) at (0,0)    {$r$};
		\node[nn] (v2) at (-0.5,3) {};
		\node[nn] (v3) at (1,4)    {$u$};
		\node[nn] (v4) at (3,2)    {$v$};
		\node[nn] (v5) at (2,0.5)  {};
		\node[ppp] (p1) at (1.8,3.2) {};
		\node[pp] (p2) at (1.5,3.5) {};
 		
		\draw[redlink]  (v1) edge (v2);
		\draw[redlink]  (v2) edge (v3);
		\draw[redlink]  (v3) edge (p2);
     	
		\node[nn] (v1) at (0,0)    {$r$};
		\node[nn] (v2) at (-0.5,3) {};
		\node[nn] (v3) at (1,4)    {$u$};
		\node[nn] (v4) at (3,2)    {$v$};
		\node[nn] (v5) at (2,0.5)  {};
		\node[ppp] (p1) at (1.8,3.2) {};
		\node[below right=-0.3 and 0.01cm of p1] {$q$};
		\node[pp] (p2) at (1.5,3.5) {};
		\node[below=-0.05cm of p2] {$p$};
		
		\draw (v1) edge (v2);
		\draw (v1) edge (v5);
		\draw (v2) edge (v3);
		\draw (v3) edge (v4);
		\draw (v4) edge (v5);

		\draw[decoration={brace,raise=5pt},decorate] (v3) -- node[above right=0.1cm and 0.2cm] {$\lambda_r$} (p1);
		\draw[decoration={brace,raise=5pt,mirror},decorate] (v3) -- node[below left=-0.05cm and 0.1cm] {$\ell_p$} (p2);
	\end{tikzpicture}
	\hspace{2cm}
	\begin{tikzpicture}
		\node[nn] (v1) at (0,0)    {$r$};
		\node[nn] (v2) at (-0.5,3) {};
		\node[nn] (v3) at (1,4)    {$u$};
		\node[nn] (v4) at (3,2)    {$v$};
		\node[nn] (v5) at (2,0.5)  {};
		\node[ppp] (p1) at (1.8,3.2) {};
		\node[pp] (p2) at (2.5,2.5) {};
 		
		\draw[redlink]  (v1) edge (v5);
		\draw[redlink]  (v5) edge (v4);
		\draw[redlink]  (v4) edge (p2);
     	
		\node[nn] (v1) at (0,0)    {$r$};
		\node[nn] (v2) at (-0.5,3) {};
		\node[nn] (v3) at (1,4)    {$u$};
		\node[nn] (v4) at (3,2)    {$v$};
		\node[nn] (v5) at (2,0.5)  {};
		\node[ppp] (p1) at (1.8,3.2) {};
		\node[below right=-0.3 and 0.01cm of p1] {$q$};
		\node[pp] (p2) at (2.5,2.5) {};
		\node[below left=-0.1 and -0.05cm of p2] {$p$};
		
		\draw (v1) edge (v2);
		\draw (v1) edge (v5);
		\draw (v2) edge (v3);
		\draw (v3) edge (v4);
		\draw (v4) edge (v5);

		\draw[decoration={brace,raise=5pt},decorate] (v3) -- node[above right=0.1cm and 0.2cm] {$\lambda_r$} (p1);
		\draw[decoration={brace,raise=5pt,mirror},decorate] (v3) -- node[below left=0.1cm and 0.2cm] {$\ell_p$} (p2);
	\end{tikzpicture}
	\caption{The two possible cases for the graph distance between a vertex $r$ and a non-vertex point $p$: in the left graph, the shortest path (represented in red) goes through $u$ ($\ell_p \leq \lambda_r$), whereas in the right graph, it goes through $v$ ($\ell_p > \lambda_r$). Point $q$ is the break-even point of $(u,v)$ for $r$.}
    \label{fig:GraphDistNdPt}
\end{figure}
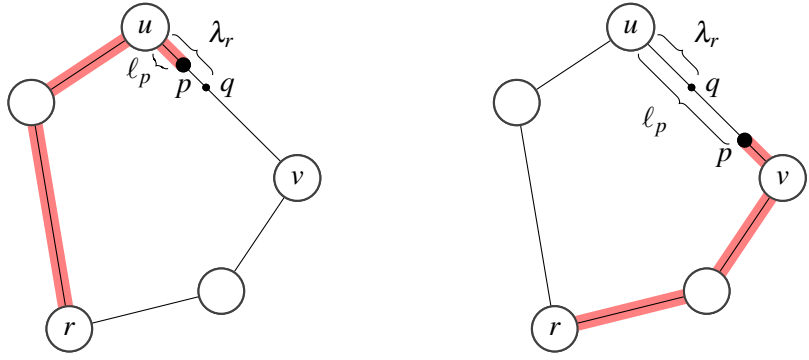

\begin{proof}[Proof of Lemma~\ref{lem:GraphDistanceNodePoint}]
	If $\ell_p \leq \lambda_r$, then 
	\begin{equation}
	    d_G(r,u) + \ell_p \leq d_G(r,u) + \lambda_r,
        \label{eq:ProofGeoDis1}
	\end{equation}
    and
	\begin{equation}
	    d_G(r,v) + d_E(u,v) - \ell_p \geq d_G(r,v) + d_E(u,v) - \lambda_r.
        \label{eq:ProofGeoDis2}
	\end{equation}
    According to Definition~\ref{def:BreakEvenPoint} (Break-even point and distance), the right-hand sides in Equations (\ref{eq:ProofGeoDis1}) and (\ref{eq:ProofGeoDis2}) are equal, so we get:
	\begin{equation}
	    d_G(r,u) + \ell_p \leq d_G(r,v) + d_E(u,v) - \ell_p.
        \label{eq:ProofGeoDis3}
	\end{equation}
    This means the shortest path between $r$ and $p$ going through $u$ is shorter than the one going through $v$. Its length should therefore be considered as the graph distance, as stated by Lemma~\ref{lem:GraphDistanceNodePoint}.
    
    If $\ell_p > \lambda_r$, we proceed symmetrically: the inequalities are reversed in both Equations (\ref{eq:ProofGeoDis1}) and (\ref{eq:ProofGeoDis2}), and therefore in Equation (\ref{eq:ProofGeoDis3}) too, meaning the shortest path goes through $v$ and not $u$. Its length should thus be considered as the graph distance, as stated by Lemma~\ref{lem:GraphDistanceNodePoint}.
\end{proof}

Using the previous results, we can now express the break-even distance for any point (including non-vertices) in terms of relative positions.

\begin{lemma}[Break-even distance for a point]
	\label{lem:BreakEvenDistancePoint} 
	Let $p_1 \in P$ be a point lying on an edge $(u_1,v_1) \in E$, at a distance $\ell_{p_1}$ from $u_1$, and $(u_2,v_2) \in E$ be a distinct edge (though they can have one common end-vertex). Let the break-even distances of $(u_1,v_1)$ for $u_2$ and $v_2$ be $\lambda_{u_2}$ and $\lambda_{v_2}$, respectively. 
	
	The break-even distance $\lambda_{p_1}$ of $(u_2,v_2)$ for $p_1$ is:
	\begin{equation}
		\lambda_{p_1} = 
		\begin{cases}
    		\displaystyle \lambda_{p_1}^{(1)} = \frac{d_G(u_1,v_2) - d_G(u_1,u_2) + d_E(u_2,v_2)}{2},& \text{if } \ell_{p_1} \leq \lambda_{u_2}, \lambda_{v_2}, \\
    		\displaystyle \lambda_{p_1}^{(2)} = \frac{d_G(v_1,v_2) - d_G(u_1,u_2) + d_E(u_1,v_1) + d_E(u_2,v_2) - 2 \ell_{p_1}}{2},& \text{if } \lambda_{v_2} < \ell_{p_1} \leq \lambda_{u_2}, \\
    		\displaystyle \lambda_{p_1}^{(3)} = \frac{d_G(u_1,v_2) - d_G(v_1,u_2) - d_E(u_1,v_1) + d_E(u_2,v_2) + 2 \ell_{p_1}}{2},& \text{if } \lambda_{u_2} < \ell_{p_1} \leq \lambda_{v_2}, \\
    		\displaystyle \lambda_{p_1}^{(4)} = \frac{d_G(v_1,v_2) - d_G(v_1,u_2) + d_E(u_2,v_2)}{2},& \text{if } \ell_{p_1} > \lambda_{u_2}, \lambda_{v_2}.
		\end{cases}
	\end{equation}
\end{lemma}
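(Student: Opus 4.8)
The plan is to reduce the statement to a single application of the general break-even formula and then to eliminate the two graph distances that appear using the vertex-to-point result of Lemma~\ref{lem:GraphDistanceNodePoint}. Concretely, applying Definition~\ref{def:BreakEvenPoint} to the point $p_1$ and the edge $(u_2,v_2)$ and solving for the break-even distance gives, exactly as in the proof of Lemma~\ref{lem:BreakEvenDistNode},
\begin{equation}
	\lambda_{p_1} = \frac{d_G(p_1,v_2) - d_G(p_1,u_2) + d_E(u_2,v_2)}{2}.
\end{equation}
The only quantities here that are not immediately available are the two graph distances $d_G(p_1,u_2)$ and $d_G(p_1,v_2)$ from the (possibly non-vertex) point $p_1$, so the whole task is to rewrite them in relative terms.

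For this I would invoke Lemma~\ref{lem:GraphDistanceNodePoint} twice, treating $p_1$ as the point lying on $(u_1,v_1)$ at distance $\ell_{p_1}$ from $u_1$, and taking the vertex $r$ to be $u_2$ and then $v_2$. The first application, governed by the break-even distance $\lambda_{u_2}$ of $(u_1,v_1)$ for $u_2$, gives $d_G(p_1,u_2)=d_G(u_1,u_2)+\ell_{p_1}$ when $\ell_{p_1}\leq\lambda_{u_2}$ and $d_G(p_1,u_2)=d_G(v_1,u_2)+d_E(u_1,v_1)-\ell_{p_1}$ otherwise; the second, symmetric application, governed by $\lambda_{v_2}$, produces the analogous two-branch expression for $d_G(p_1,v_2)$. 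Each distance is thus a piecewise-linear function of $\ell_{p_1}$ with a single breakpoint, at $\lambda_{u_2}$ and at $\lambda_{v_2}$ respectively.

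Superposing these two independent binary alternatives partitions the range of $\ell_{p_1}$ into precisely the four regimes listed in the statement, and in each I would substitute the appropriate branch for $d_G(p_1,u_2)$ and $d_G(p_1,v_2)$ into the displayed formula and collect terms. In the first and fourth regimes the two graph distances use the \emph{same} branch (both shortest paths leave $(u_1,v_1)$ through the near end-vertex $u_1$, or both through the far one $v_1$), so the $\ell_{p_1}$ and the $d_E(u_1,v_1)$ contributions cancel in the difference $d_G(p_1,v_2)-d_G(p_1,u_2)$, leaving $\lambda_{p_1}^{(1)}$ and $\lambda_{p_1}^{(4)}$. In the second and third regimes the branches differ, so these terms survive and reproduce the $-2\ell_{p_1}$ and $+2\ell_{p_1}$ terms of $\lambda_{p_1}^{(2)}$ and $\lambda_{p_1}^{(3)}$. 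This is routine algebra, so the only care needed is bookkeeping: matching each combination of the two conditions to the correct case label and orienting the signs consistently with the lexicographic convention on $(u_2,v_2)$.

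I anticipate the one genuine subtlety — the main obstacle — to be the case, explicitly allowed by the statement, where $(u_1,v_1)$ and $(u_2,v_2)$ share an end-vertex, so that $u_2$ or $v_2$ coincides with $u_1$ or $v_1$. Then Lemma~\ref{lem:BreakEvenDistNode} does not directly define $\lambda_{u_2}$ or $\lambda_{v_2}$, since it assumes the vertex is not an end-vertex of the edge, and one must instead appeal to Definition~\ref{def:BreakEvenPoint}, which does permit the reference point to be an end-vertex. A brief check confirms that the break-even distance then degenerates (the shared vertex reaches the whole edge most directly from its own side) and that the corresponding branch of Lemma~\ref{lem:GraphDistanceNodePoint} still applies, so the four-case formula remains valid. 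Verifying this boundary behaviour, together with the sanity check that $\ell_{p_1}=\lambda_{u_2}$ and $\ell_{p_1}=\lambda_{v_2}$ yield values that agree across neighbouring cases, is the part I would treat most carefully.
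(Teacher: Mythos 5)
Your proposal is correct and follows essentially the same route as the paper's proof: apply Definition~\ref{def:BreakEvenPoint} to get $\lambda_{p_1} = \bigl(d_G(p_1,v_2) - d_G(p_1,u_2) + d_E(u_2,v_2)\bigr)/2$, expand both graph distances via Lemma~\ref{lem:GraphDistanceNodePoint} (with $r=u_2$ and $r=v_2$), and substitute in each of the four regimes determined by $\lambda_{u_2}$ and $\lambda_{v_2}$. Your additional attention to the shared end-vertex case is a refinement the paper's proof silently skips over, but it does not change the argument's structure.
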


In the first case ($\ell_{p_1} \leq \lambda_{u_2}, \lambda_{v_2}$), both shortest paths between $p_1$ on one side and $u_2$ and $v_2$ on the other side, go through $u_1$. On the contrary, they both go through $v_1$ in the fourth case ($\ell_{p_1} > \lambda_{u_2}, \lambda_{v_2}$). In the second case ($\lambda_{v_2} < \ell_{p_1} \leq \lambda_{u_2}$), the shortest path goes through $u_1$ for  $u_2$, and through $v_1$ for $v_2$. For the third case ($\lambda_{u_2} < \ell_{p_1} \leq \lambda_{v_2}$), it is the opposite: it goes through $v_1$ for $u_2$ and through $u_1$ for $v_2$.

\begin{proof}[Proof of Lemma~\ref{lem:BreakEvenDistancePoint}]
	Since $u_2$ and $v_2$ are both vertices, the break-even distances $\lambda_{u_2}$ and $\lambda_{v_2}$ can be obtained using Lemma~\ref{lem:BreakEvenDistNode}. However, $p_1$ is not a vertex, which is why some additional processing is required for $\lambda_{p_1}$. 
	
	From Definition~\ref{def:BreakEvenPoint}, we get:
    \begin{equation}
    	\lambda_{p_1} = \frac{d_G(p_1,v_2) - d_G(p_1,u_2) + d_E(u_2,v_2)}{2}.
        \label{eq:Lambda2Def}
    \end{equation}
	We use Lemma~\ref{lem:GraphDistanceNodePoint} to develop $d_G(p_1,v_1)$ and $d_G(p_1,u_2)$:
    \begin{align}
		\label{eq:BEDpointProof1}
        d_G(p_1,u_2) &=
		\begin{cases}
    		d_G(u_1,u_2) + \ell_{p_1},& \text{if } \ell_{p_1} \leq \lambda_{u_2}, \\
		    d_G(v_1,u_2) + d_E(u_1,v_1) - \ell_{p_1},& \text{otherwise.}
		\end{cases}\\
		\label{eq:BEDpointProof2}
		d_G(p_1,v_2) &=
		\begin{cases}
    		d_G(u_1,v_2) + \ell_{p_1},& \text{if } \ell_{p_1} \leq \lambda_{v_2}, \\
		    d_G(v_1,v_2) + d_E(u_1,v_1) - \ell_{p_1},& \text{otherwise.}
		\end{cases}
    \end{align}
	Replacing in Equation (\ref{eq:Lambda2Def}) yields the expression given in Lemma~\ref{lem:BreakEvenDistancePoint}.
\end{proof}

We now use the previous results to identify the shortest path between two points in general, and compute their graph distance.

\begin{lemma}[Graph distance between two points]
	\label{lem:GraphDistancePoints}
	Let $p_1 \in P$ be a point lying on an edge $(u_1,v_1) \in E$, at a distance $\ell_{p_1}$ from $u_1$, and $p_2 \in P$ be a point lying on an edge $(u_2,v_2) \in E$, at a distance $\ell_{p_2}$ from $u_2$. Let the break-even distances of $(u_1,v_1)$ for $u_2$ and $v_2$ be $\lambda_{u_2}$ and $\lambda_{v_2}$, respectively. Let the break-even distance of $(u_2,v_2)$ for $p_1$ be $\lambda_{p_1}$, and $\lambda_{p_1}^{(i)}$ denote the value associated to the $i^{th}$ case of Lemma~\ref{lem:BreakEvenDistancePoint}.	
	
	The graph distance $d_G(p_1,p_2)$ between $p_1$ and $p_2$ is defined as follows:
	\begin{itemize}
		\item If $p_1$ and $p_2$ lie on the same edge, then their graph distance is simply equal to their Euclidean distance: $d_G(p_1,p_2) = d_E(p_1,p_2)$.
		\item Otherwise, if they lie on distinct edges, we must distinguish $4$ cases:
	\end{itemize}
	\begin{align}
		&d_G(p_1,p_2) = 
		\begin{cases}
	    	\ell_{p_1} + d_G(u_1,u_2) + \ell_{p_2} ,& \text{if } \ell_{p_1} \leq \lambda_{u_2}, \lambda_{v_2} \wedge \ell_{p_2} \leq \lambda_{p_1}^{(1)} \\
	        & \vee~ \lambda_{v_2} < \ell_{p_1} \leq \lambda_{u_2} \wedge \ell_{p_2} \leq \lambda_{p_1}^{(2)}, \\
	    	\ell_{p_1} + d_G(u_1,v_2) + d_E(u_2,v_2) - \ell_{p_2} ,& \text{if } \ell_{p_1} \leq \lambda_{u_2}, \lambda_{v_2} \wedge \ell_{p_2} > \lambda_{p_1}^{(1)} \\
	        & \vee~ \lambda_{u_2} < \ell_{p_1} \leq \lambda_{v_2} \wedge \ell_{p_2} > \lambda_{p_1}^{(3)}, \\
		    d_E(u_1,v_1) - \ell_{p_1} + d_G(v_1,u_2) + \ell_{p_2} ,& \text{if } \lambda_{u_2} < \ell_{p_1} \leq \lambda_{v_2} \wedge \ell_{p_2} \leq \lambda_{p_1}^{(3)} \\
	        & \vee~ \ell_{p_1} > \lambda_{u_2}, \lambda_{v_2} \wedge \ell_{p_2} \leq \lambda_{p_1}^{(4)}, \\
		    d_E(u_1,v_1) - \ell_{p_1} + d_G(v_1,v_2) + d_E(u_2,v_2) - \ell_{p_2} ,& \text{if } \lambda_{v_2} < \ell_{p_1} \leq \lambda_{u_2} \wedge \ell_{p_2} > \lambda_{p_1}^{(2)} \\
	        & \vee~ \ell_{p_1} > \lambda_{u_2}, \lambda_{v_2} \wedge \ell_{p_2} > \lambda_{p_1}^{(4)}. \\
		\end{cases}
	\end{align}
\end{lemma}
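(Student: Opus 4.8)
The plan is to reduce the computation to repeated applications of Lemma~\ref{lem:GraphDistanceNodePoint}, organized as an \emph{outer} dichotomy (through which end-vertex of $(u_2,v_2)$ the shortest path enters) refined by an \emph{inner} dichotomy (through which end-vertex of $(u_1,v_1)$ it leaves). The same-edge case is immediate: if $p_1$ and $p_2$ share an edge, the portion of that edge between them is a straight path of length $d_E(p_1,p_2)$, and since $d_E(p_1,p_2)\le d_G(p_1,p_2)$ always holds, this path is necessarily shortest, so $d_G(p_1,p_2)=d_E(p_1,p_2)$.

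For distinct edges, I first observe that any path from $p_1$ to $p_2$ must leave the interior of $(u_1,v_1)$ through $u_1$ or $v_1$ and reach the interior of $(u_2,v_2)$ through $u_2$ or $v_2$; hence $d_G(p_1,p_2)$ is the minimum over four candidate routes, and the lemma amounts to encoding, for each configuration of relative positions, which route is shortest. The key step is to run the break-even argument of Lemma~\ref{lem:GraphDistanceNodePoint} with $p_1$ itself playing the role of the reference point rather than a vertex: the proof of that lemma uses only the defining identity of the break-even point from Definition~\ref{def:BreakEvenPoint}, which is stated for an arbitrary point, so it carries over verbatim. This yields the outer split $d_G(p_1,p_2)=d_G(p_1,u_2)+\ell_{p_2}$ when $\ell_{p_2}\le\lambda_{p_1}$ (enter through $u_2$) and $d_G(p_1,p_2)=d_G(p_1,v_2)+d_E(u_2,v_2)-\ell_{p_2}$ otherwise (enter through $v_2$), where $\lambda_{p_1}$ is the break-even distance of $(u_2,v_2)$ for $p_1$ supplied by Lemma~\ref{lem:BreakEvenDistancePoint}.

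Next I expand the two remaining vertex-to-point distances $d_G(p_1,u_2)$ and $d_G(p_1,v_2)$ by applying Lemma~\ref{lem:GraphDistanceNodePoint} directly, now with the genuine vertices $u_2$ and $v_2$ as references and $p_1$ on $(u_1,v_1)$, exactly as in Equations~(\ref{eq:BEDpointProof1}) and~(\ref{eq:BEDpointProof2}); this produces the inner split governed by comparing $\ell_{p_1}$ against $\lambda_{u_2}$ and $\lambda_{v_2}$. The crucial observation is that this very comparison is also what selects the case $\lambda_{p_1}^{(i)}$ of Lemma~\ref{lem:BreakEvenDistancePoint}, so the inner split simultaneously fixes the exit vertex on $(u_1,v_1)$ and the correct value $\lambda_{p_1}^{(i)}$ to use in the outer threshold $\ell_{p_2}\lessgtr\lambda_{p_1}^{(i)}$. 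Substituting each branch and grouping the routes by their pair of exit vertices then yields the four stated expressions, each accompanied by the disjunction of $(\ell_{p_1},\ell_{p_2})$-conditions that funnel into it.

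The main obstacle is purely combinatorial bookkeeping: there are four cases of $\lambda_{p_1}$ from Lemma~\ref{lem:BreakEvenDistancePoint} together with two branches each for $d_G(p_1,u_2)$ and $d_G(p_1,v_2)$, and one must check that these are mutually consistent, so that whenever a case $\lambda_{p_1}^{(i)}$ applies, the accompanying branch of $d_G(p_1,u_2)$ or $d_G(p_1,v_2)$ is indeed the one in force. Cases $2$ and $3$ of Lemma~\ref{lem:BreakEvenDistancePoint} are the delicate ones, since there the shortest path to $u_2$ and the shortest path to $v_2$ leave $(u_1,v_1)$ through \emph{different} end-vertices; verifying that these nonetheless merge correctly into the four final formulas is the heart of the argument, while the rest reduces to routine substitution.
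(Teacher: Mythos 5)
Your proof is correct and takes essentially the same route as the paper's: both arguments rest on the break-even identity of Definition~\ref{def:BreakEvenPoint} applied with $p_1$ as the reference point on $(u_2,v_2)$ (which the paper spells out as explicit inequality chains, and you invoke as the argument of Lemma~\ref{lem:GraphDistanceNodePoint} carrying over verbatim), combined with the Lemma~\ref{lem:GraphDistanceNodePoint} expansions of $d_G(p_1,u_2)$ and $d_G(p_1,v_2)$ that simultaneously select the case $\lambda_{p_1}^{(i)}$ of Lemma~\ref{lem:BreakEvenDistancePoint}. The only difference is presentational: you run the two dichotomies in the opposite order (entry vertex on $(u_2,v_2)$ first, exit vertex on $(u_1,v_1)$ second) and at a higher level of abstraction, which makes the bookkeeping cleaner but changes nothing mathematically.
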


The particular case (both points on the same edge) occurs if $(u_1,v_1) = (u_2,v_2)$ (i.e. the edges are actually the same), or if the edges have a common end-vertex and this end-vertex is either $p_1$ or $p_2$. 

Figure~\ref{fig:GraphDistPtPt} illustrates the different possible situations described by the general case (i.e. two distinct edges), which includes $4$ possible expressions. In the first expression, the shortest path between $p_1$ and $p_2$ goes through $u_1$ and $u_2$ (graphs $a$ and $d$ in the Figure) ; in the second, it goes through $u_1$ and $v_2$ (graphs $e$ and $f$) ; in the third, through $v_1$ and $u_2$ (graphs $b$ and $c$) ; and in the fourth through $v_1$ and $v_2$ (graphs $g$ and $h$).

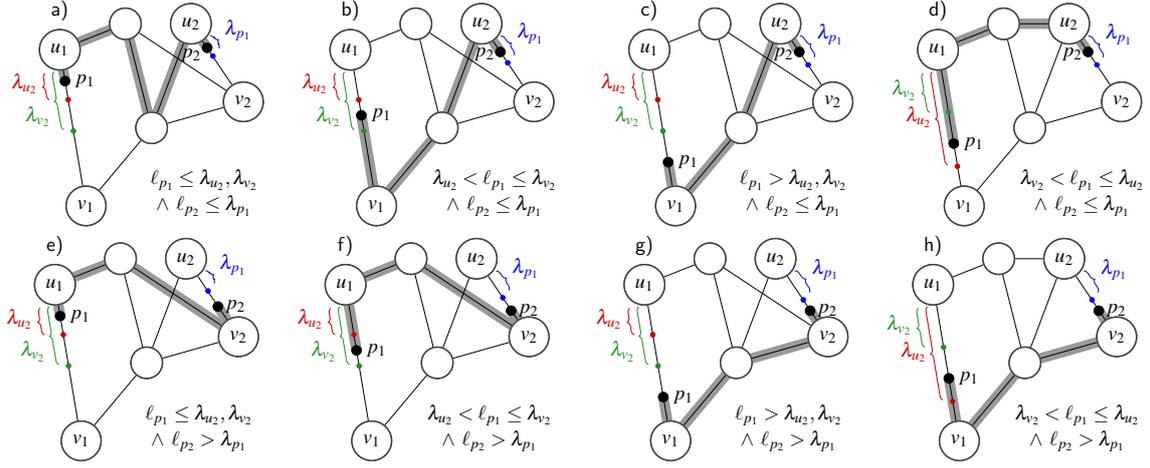
\begin{figure}[ht]
	\centering
	\scalebox{0.69}{\begin{tikzpicture}
		\node[nn] (v1) at (0,0)    {$v_1$};
		\node[nn] (v2) at (-0.5,3) {$u_1$};
		\node[nn] (v3) at (0.75,3.5)  {};
		\node[nn] (v4) at (2,3.5)    {$u_2$};
		\node[nn] (v5) at (3,2)    {$v_2$};
		\node[nn] (v6) at (1.25,1.5)  {};
 		\node[pppr] (qu) at (-0.34,2.04) {};
 		\node[pppg] (qv) at (-0.24,1.44) {};
 		\node[pppb] (q2) at (2.42,2.88)  {};
 		\node[pp] (p1) at (-0.4,2.4) {};
 		\node[pp] (p2) at (2.31,3.04) {};
 		
		\draw[greylink]  (p1) edge (v2);
		\draw[greylink]  (v2) edge (v3);
		\draw[greylink]  (v3) edge (v6);
		\draw[greylink]  (v6) edge (v4);
		\draw[greylink]  (v4) edge (p2);
		
		\draw (v1) edge (v2);
		\draw (v1) edge (v6);
		\draw (v2) edge (v3);
		\draw (v3) edge (v5);
		\draw (v3) edge (v6);
		\draw (v4) edge (v5);
		\draw (v4) edge (v6);
		\draw (v5) edge (v6);

		\node[nn] (v1) at (0,0)    {$v_1$};
		\node[nn] (v2) at (-0.5,3) {$u_1$};
		\node[nn] (v3) at (0.75,3.5)  {};
		\node[nn] (v4) at (2,3.5)    {$u_2$};
		\node[nn] (v5) at (3,2)    {$v_2$};
		\node[nn] (v6) at (1.25,1.5)  {};
 		\node[pppr] (qu) at (-0.34,2.04) {};
 		\node[pppg] (qv) at (-0.24,1.44) {};
 		\node[pppb] (q2) at (2.42,2.88)  {};
 		\node[pp] (p1) at (-0.4,2.4) {};
		\node[right=-0.05cm of p1] {$p_1$};
 		\node[pp] (p2) at (2.31,3.04) {};
		\node[below left=-0.2cm and -0.1cm of p2] {$p_2$};
		
		\draw[decoration={brace,raise=10pt,mirror},decorate,draw=myred] (v2) -- node[left=0.4cm] {\textcolor{myred}{$\lambda_{u_2}$}} (qu);
		\draw[decoration={brace,raise=5pt,mirror},decorate,draw=mygreen] (v2) -- node[below left=0.0cm and 0.2cm] {\textcolor{mygreen}{$\lambda_{v_2}$}} (qv);
		\draw[decoration={brace,raise=5pt},decorate,draw=myblue] (v4) -- node[above right=0.0cm and 0.2cm] {\textcolor{myblue}{$\lambda_{p_1}$}} (q2);
		
		\node (txt1) at (2.25,0.5) {$\ell_{p_1} \leq \lambda_{u_2},\lambda_{v_2}$};
		\node (txt2) at (2.25,0) {$\wedge~ \ell_{p_2} \leq \lambda_{p_1}$};
		\node (txt3) at (-0.5,3.75) {a)};
	\end{tikzpicture}}
	\hfill
	\scalebox{0.69}{\begin{tikzpicture}
		\node[nn] (v1) at (0,0)    {$v_1$};
		\node[nn] (v2) at (-0.5,3) {$u_1$};
		\node[nn] (v3) at (0.75,3.5)  {};
		\node[nn] (v4) at (2,3.5)    {$u_2$};
		\node[nn] (v5) at (3,2)    {$v_2$};
		\node[nn] (v6) at (1.25,1.5)  {};
 		\node[pppr] (qu) at (-0.34,2.04) {};
 		\node[pppg] (qv) at (-0.24,1.44) {};
 		\node[pppb] (q2) at (2.50,2.75)  {};
 		\node[pp] (p1) at (-0.29,1.74) {};
 		\node[pp] (p2) at (2.36,2.96) {};
		
		\draw[greylink]  (p1) edge (v1);
		\draw[greylink]  (v1) edge (v6);
		\draw[greylink]  (v6) edge (v4);
		\draw[greylink]  (v4) edge (p2);
		
		\draw (v1) edge (v2);
		\draw (v1) edge (v6);
		\draw (v2) edge (v3);
		\draw (v3) edge (v5);
		\draw (v3) edge (v6);
		\draw (v4) edge (v5);
		\draw (v4) edge (v6);
		\draw (v5) edge (v6);

		\node[nn] (v1) at (0,0)    {$v_1$};
		\node[nn] (v2) at (-0.5,3) {$u_1$};
		\node[nn] (v3) at (0.75,3.5)  {};
		\node[nn] (v4) at (2,3.5)    {$u_2$};
		\node[nn] (v5) at (3,2)    {$v_2$};
		\node[nn] (v6) at (1.25,1.5)  {};
 		\node[pppr] (qu) at (-0.34,2.04) {};
 		\node[pppg] (qv) at (-0.24,1.44) {};
 		\node[pppb] (q2) at (2.50,2.75)  {};
 		\node[pp] (p1) at (-0.29,1.74) {};
		\node[right=-0.05cm of p1] {$p_1$};
 		\node[pp] (p2) at (2.36,2.96) {};
		\node[below left=-0.3cm and -0.1cm of p2] {$p_2$};
		
		\draw[decoration={brace,raise=10pt,mirror},decorate,draw=myred] (v2) -- node[left=0.4cm] {\textcolor{myred}{$\lambda_{u_2}$}} (qu);
		\draw[decoration={brace,raise=5pt,mirror},decorate,draw=mygreen] (v2) -- node[below left=0.0cm and 0.2cm] {\textcolor{mygreen}{$\lambda_{v_2}$}} (qv);
		\draw[decoration={brace,raise=5pt},decorate,draw=myblue] (v4) -- node[above right=0.0cm and 0.2cm] {\textcolor{myblue}{$\lambda_{p_1}$}} (q2);
		
		\node (txt1) at (2.25,0.5) {$\lambda_{u_2} < \ell_{p_1} \leq \lambda_{v_2}$};
		\node (txt2) at (2.25,0) {$\wedge~ \ell_{p_2} \leq \lambda_{p_1}$};
		\node (txt3) at (-0.5,3.75) {b)};
	\end{tikzpicture}}
	\hfill
	\scalebox{0.69}{\begin{tikzpicture}
		\node[nn] (v1) at (0,0)    {$v_1$};
		\node[nn] (v2) at (-0.5,3) {$u_1$};
		\node[nn] (v3) at (0.75,3.5)  {};
		\node[nn] (v4) at (2,3.5)    {$u_2$};
		\node[nn] (v5) at (3,2)    {$v_2$};
		\node[nn] (v6) at (1.25,1.5)  {};
 		\node[pppr] (qu) at (-0.34,2.04) {};
 		\node[pppg] (qv) at (-0.24,1.44) {};
 		\node[pppb] (q2) at (2.53,2.71)  {};
 		\node[pp] (p1) at (-0.14,0.84) {};
 		\node[pp] (p2) at (2.36,2.96) {};
		
		\draw[greylink]  (p1) edge (v1);
		\draw[greylink]  (v1) edge (v6);
		\draw[greylink]  (v6) edge (v4);
		\draw[greylink]  (v4) edge (p2);
		
		\draw (v1) edge (v2);
		\draw (v1) edge (v6);
		\draw (v2) edge (v3);
		\draw (v3) edge (v5);
		\draw (v3) edge (v6);
		\draw (v4) edge (v5);
		\draw (v4) edge (v6);
		\draw (v5) edge (v6);
		
		\node[nn] (v1) at (0,0)    {$v_1$};
		\node[nn] (v2) at (-0.5,3) {$u_1$};
		\node[nn] (v3) at (0.75,3.5)  {};
		\node[nn] (v4) at (2,3.5)    {$u_2$};
		\node[nn] (v5) at (3,2)    {$v_2$};
		\node[nn] (v6) at (1.25,1.5)  {};
 		\node[pppr] (qu) at (-0.34,2.04) {};
 		\node[pppg] (qv) at (-0.24,1.44) {};
 		\node[pppb] (q2) at (2.53,2.71)  {};
 		\node[pp] (p1) at (-0.14,0.84) {};
		\node[right=-0.05cm of p1] {$p_1$};
 		\node[pp] (p2) at (2.36,2.96) {};
		\node[below left=-0.3cm and -0.1cm of p2] {$p_2$};
		
		\draw[decoration={brace,raise=10pt,mirror},decorate,draw=myred] (v2) -- node[left=0.4cm] {\textcolor{myred}{$\lambda_{u_2}$}} (qu);
		\draw[decoration={brace,raise=5pt,mirror},decorate,draw=mygreen] (v2) -- node[below left=0.0cm and 0.2cm] {\textcolor{mygreen}{$\lambda_{v_2}$}} (qv);
		\draw[decoration={brace,raise=5pt},decorate,draw=myblue] (v4) -- node[above right=0.0cm and 0.2cm] {\textcolor{myblue}{$\lambda_{p_1}$}} (q2);
		
		\node (txt1) at (2.25,0.5) {$\ell_{p_1} > \lambda_{u_2},\lambda_{v_2}$};
		\node (txt2) at (2.25,0) {$\wedge~ \ell_{p_2} \leq \lambda_{p_1}$};
		\node (txt3) at (-0.5,3.75) {c)};
	\end{tikzpicture}}
	\hfill
	\scalebox{0.69}{\begin{tikzpicture}
		\node[nn] (v1) at (0,0)    {$v_1$};
		\node[nn] (v2) at (-0.5,3) {$u_1$};
		\node[nn] (v3) at (0.75,3.5)  {};
		\node[nn] (v4) at (2,3.5)    {$u_2$};
		\node[nn] (v5) at (3,2)    {$v_2$};
		\node[nn] (v6) at (1.25,1.5)  {};
 		\node[pppr] (qu) at (-0.13,0.76) {};
 		\node[pppg] (qv) at (-0.30,1.80) {};
 		\node[pppb] (q2) at (2.53,2.71)  {};
 		\node[pp] (p1) at (-0.20,1.20) {};
 		\node[pp] (p2) at (2.36,2.96) {};
		
		\draw[greylink]  (p1) edge (v2);
		\draw[greylink]  (v2) edge (v3);
		\draw[greylink]  (v3) edge (v4);
		\draw[greylink]  (v4) edge (p2);
		
		\draw (v1) edge (v2);
		\draw (v1) edge (v6);
		\draw (v2) edge (v3);
		\draw (v3) edge (v4);
		\draw (v3) edge (v6);
		\draw (v4) edge (v5);
		\draw (v4) edge (v6);
		\draw (v5) edge (v6);
		
		\node[nn] (v1) at (0,0)    {$v_1$};
		\node[nn] (v2) at (-0.5,3) {$u_1$};
		\node[nn] (v3) at (0.75,3.5)  {};
		\node[nn] (v4) at (2,3.5)    {$u_2$};
		\node[nn] (v5) at (3,2)    {$v_2$};
		\node[nn] (v6) at (1.25,1.5)  {};
 		\node[pppr] (qu) at (-0.13,0.76) {};
 		\node[pppg] (qv) at (-0.30,1.80) {};
 		\node[pppb] (q2) at (2.53,2.71)  {};
 		\node[pp] (p1) at (-0.20,1.20) {};
		\node[right=-0.05cm of p1] {$p_1$};
 		\node[pp] (p2) at (2.36,2.96) {};
		\node[below left=-0.3cm and -0.1cm of p2] {$p_2$};
		
		\draw[decoration={brace,raise=5pt,mirror},decorate,draw=myred] (v2) -- node[left=0.2cm] {\textcolor{myred}{$\lambda_{u_2}$}} (qu);
		\draw[decoration={brace,raise=10pt,mirror},decorate,draw=mygreen] (v2) -- node[left=0.4cm] {\textcolor{mygreen}{$\lambda_{v_2}$}} (qv);
		\draw[decoration={brace,raise=5pt},decorate,draw=myblue] (v4) -- node[above right=0.0cm and 0.2cm] {\textcolor{myblue}{$\lambda_{p_1}$}} (q2);
		
		\node (txt1) at (2.25,0.5) {$\lambda_{v_2} < \ell_{p_1} \leq \lambda_{u_2}$};
		\node (txt2) at (2.25,0) {$\wedge~ \ell_{p_2} \leq \lambda_{p_1}$};
		\node (txt3) at (-0.5,3.75) {d)};
	\end{tikzpicture}}\\

	\scalebox{0.69}{\begin{tikzpicture}
		\node[nn] (v1) at (0,0)    {$v_1$};
		\node[nn] (v2) at (-0.5,3) {$u_1$};
		\node[nn] (v3) at (0.75,3.5)  {};
		\node[nn] (v4) at (2,3.5)    {$u_2$};
		\node[nn] (v5) at (3,2)    {$v_2$};
		\node[nn] (v6) at (1.25,1.5)  {};
 		\node[pppr] (qu) at (-0.34,2.04) {};
 		\node[pppg] (qv) at (-0.24,1.44) {};
 		\node[pppb] (q2) at (2.42,2.88)  {};
 		\node[pp] (p1) at (-0.4,2.4) {};
 		\node[pp] (p2) at (2.61,2.58) {};
		
		\draw[greylink]  (p1) edge (v2);
		\draw[greylink]  (v2) edge (v3);
		\draw[greylink]  (v3) edge (v5);
		\draw[greylink]  (v5) edge (p2);
		
		\draw (v1) edge (v2);
		\draw (v1) edge (v6);
		\draw (v2) edge (v3);
		\draw (v3) edge (v5);
		\draw (v3) edge (v6);
		\draw (v4) edge (v5);
		\draw (v4) edge (v6);
		\draw (v5) edge (v6);
		
		\node[nn] (v1) at (0,0)    {$v_1$};
		\node[nn] (v2) at (-0.5,3) {$u_1$};
		\node[nn] (v3) at (0.75,3.5)  {};
		\node[nn] (v4) at (2,3.5)    {$u_2$};
		\node[nn] (v5) at (3,2)    {$v_2$};
		\node[nn] (v6) at (1.25,1.5)  {};
 		\node[pppr] (qu) at (-0.34,2.04) {};
 		\node[pppg] (qv) at (-0.24,1.44) {};
 		\node[pppb] (q2) at (2.42,2.88)  {};
 		\node[pp] (p1) at (-0.4,2.4) {};
		\node[right=-0.05cm of p1] {$p_1$};
 		\node[pp] (p2) at (2.61,2.58) {};
		\node[right=-0.1cm of p2] {$p_2$};
		
		\draw[decoration={brace,raise=10pt,mirror},decorate,draw=myred] (v2) -- node[left=0.4cm] {\textcolor{myred}{$\lambda_{u_2}$}} (qu);
		\draw[decoration={brace,raise=5pt,mirror},decorate,draw=mygreen] (v2) -- node[below left=0.0cm and 0.2cm] {\textcolor{mygreen}{$\lambda_{v_2}$}} (qv);
		\draw[decoration={brace,raise=5pt},decorate,draw=myblue] (v4) -- node[above right=0.0cm and 0.2cm] {\textcolor{myblue}{$\lambda_{p_1}$}} (q2);
		
		\node (txt1) at (2.25,0.5) {$\ell_{p_1} \leq \lambda_{u_2},\lambda_{v_2}$};
		\node (txt2) at (2.25,0) {$\wedge~ \ell_{p_2} > \lambda_{p_1}$};
		\node (txt3) at (-0.5,3.75) {e)};
	\end{tikzpicture}}
	\hfill
	\scalebox{0.69}{\begin{tikzpicture}
		\node[nn] (v1) at (0,0)    {$v_1$};
		\node[nn] (v2) at (-0.5,3) {$u_1$};
		\node[nn] (v3) at (0.75,3.5)  {};
		\node[nn] (v4) at (2,3.5)    {$u_2$};
		\node[nn] (v5) at (3,2)    {$v_2$};
		\node[nn] (v6) at (1.25,1.5)  {};
 		\node[pppr] (qu) at (-0.34,2.04) {};
 		\node[pppg] (qv) at (-0.24,1.44) {};
 		\node[pppb] (q2) at (2.50,2.75)  {};
 		\node[pp] (p1) at (-0.29,1.74) {};
 		\node[pp] (p2) at (2.66,2.50) {};
		
		\draw[greylink]  (p1) edge (v2);
		\draw[greylink]  (v2) edge (v3);
		\draw[greylink]  (v3) edge (v5);
		\draw[greylink]  (v5) edge (p2);
		
		\draw (v1) edge (v2);
		\draw (v1) edge (v6);
		\draw (v2) edge (v3);
		\draw (v3) edge (v5);
		\draw (v3) edge (v6);
		\draw (v4) edge (v5);
		\draw (v4) edge (v6);
		\draw (v5) edge (v6);

		\node[nn] (v1) at (0,0)    {$v_1$};
		\node[nn] (v2) at (-0.5,3) {$u_1$};
		\node[nn] (v3) at (0.75,3.5)  {};
		\node[nn] (v4) at (2,3.5)    {$u_2$};
		\node[nn] (v5) at (3,2)    {$v_2$};
		\node[nn] (v6) at (1.25,1.5)  {};
 		\node[pppr] (qu) at (-0.34,2.04) {};
 		\node[pppg] (qv) at (-0.24,1.44) {};
 		\node[pppb] (q2) at (2.50,2.75)  {};
 		\node[pp] (p1) at (-0.29,1.74) {};
		\node[right=-0.05cm of p1] {$p_1$};
 		\node[pp] (p2) at (2.66,2.50) {};
		\node[right=-0.1cm of p2] {$p_2$};
		
		\draw[decoration={brace,raise=10pt,mirror},decorate,draw=myred] (v2) -- node[left=0.4cm] {\textcolor{myred}{$\lambda_{u_2}$}} (qu);
		\draw[decoration={brace,raise=5pt,mirror},decorate,draw=mygreen] (v2) -- node[below left=0.0cm and 0.2cm] {\textcolor{mygreen}{$\lambda_{v_2}$}} (qv);
		\draw[decoration={brace,raise=5pt},decorate,draw=myblue] (v4) -- node[above right=0.0cm and 0.2cm] {\textcolor{myblue}{$\lambda_{p_1}$}} (q2);
		
		\node (txt1) at (2.25,0.5) {$\lambda_{u_2} < \ell_{p_1} \leq \lambda_{v_2}$};
		\node (txt2) at (2.25,0) {$\wedge~ \ell_{p_2} > \lambda_{p_1}$};
		\node (txt3) at (-0.5,3.75) {f)};
	\end{tikzpicture}}
	\hfill
	\scalebox{0.69}{\begin{tikzpicture}
		\node[nn] (v1) at (0,0)    {$v_1$};
		\node[nn] (v2) at (-0.5,3) {$u_1$};
		\node[nn] (v3) at (0.75,3.5)  {};
		\node[nn] (v4) at (2,3.5)    {$u_2$};
		\node[nn] (v5) at (3,2)    {$v_2$};
		\node[nn] (v6) at (1.25,1.5)  {};
 		\node[pppr] (qu) at (-0.34,2.04) {};
 		\node[pppg] (qv) at (-0.24,1.44) {};
 		\node[pppb] (q2) at (2.53,2.71)  {};
 		\node[pp] (p1) at (-0.14,0.84) {};
 		\node[pp] (p2) at (2.66,2.50) {};
		
		\draw[greylink]  (p1) edge (v1);
		\draw[greylink]  (v1) edge (v6);
		\draw[greylink]  (v6) edge (v5);
		\draw[greylink]  (v5) edge (p2);
		
		\draw (v1) edge (v2);
		\draw (v1) edge (v6);
		\draw (v2) edge (v3);
		\draw (v3) edge (v5);
		\draw (v3) edge (v6);
		\draw (v4) edge (v5);
		\draw (v4) edge (v6);
		\draw (v5) edge (v6);
		
		\node[nn] (v1) at (0,0)    {$v_1$};
		\node[nn] (v2) at (-0.5,3) {$u_1$};
		\node[nn] (v3) at (0.75,3.5)  {};
		\node[nn] (v4) at (2,3.5)    {$u_2$};
		\node[nn] (v5) at (3,2)    {$v_2$};
		\node[nn] (v6) at (1.25,1.5)  {};
 		\node[pppr] (qu) at (-0.34,2.04) {};
 		\node[pppg] (qv) at (-0.24,1.44) {};
 		\node[pppb] (q2) at (2.53,2.71)  {};
 		\node[pp] (p1) at (-0.14,0.84) {};
		\node[right=-0.05cm of p1] {$p_1$};
 		\node[pp] (p2) at (2.66,2.50) {};
		\node[right=-0.1cm of p2] {$p_2$};
		
		\draw[decoration={brace,raise=10pt,mirror},decorate,draw=myred] (v2) -- node[left=0.4cm] {\textcolor{myred}{$\lambda_{u_2}$}} (qu);
		\draw[decoration={brace,raise=5pt,mirror},decorate,draw=mygreen] (v2) -- node[below left=0.0cm and 0.2cm] {\textcolor{mygreen}{$\lambda_{v_2}$}} (qv);
		\draw[decoration={brace,raise=5pt},decorate,draw=myblue] (v4) -- node[above right=0.0cm and 0.2cm] {\textcolor{myblue}{$\lambda_{p_1}$}} (q2);
		
		\node (txt1) at (2.25,0.5) {$\ell_{p_1} > \lambda_{u_2},\lambda_{v_2}$};
		\node (txt2) at (2.25,0) {$\wedge~ \ell_{p_2} > \lambda_{p_1}$};
		\node (txt3) at (-0.5,3.75) {g)};
	\end{tikzpicture}}
	\hfill
	\scalebox{0.69}{\begin{tikzpicture}
		\node[nn] (v1) at (0,0)    {$v_1$};
		\node[nn] (v2) at (-0.5,3) {$u_1$};
		\node[nn] (v3) at (0.75,3.5)  {};
		\node[nn] (v4) at (2,3.5)    {$u_2$};
		\node[nn] (v5) at (3,2)    {$v_2$};
		\node[nn] (v6) at (1.25,1.5)  {};
 		\node[pppr] (qu) at (-0.13,0.76) {};
 		\node[pppg] (qv) at (-0.30,1.80) {};
 		\node[pppb] (q2) at (2.53,2.71)  {};
 		\node[pp] (p1) at (-0.20,1.20) {};
 		\node[pp] (p2) at (2.66,2.50) {};
		
		\draw[greylink]  (p1) edge (v1);
		\draw[greylink]  (v1) edge (v6);
		\draw[greylink]  (v6) edge (v5);
		\draw[greylink]  (v5) edge (p2);
		
		\draw (v1) edge (v2);
		\draw (v1) edge (v6);
		\draw (v2) edge (v3);
		\draw (v3) edge (v4);
		\draw (v3) edge (v6);
		\draw (v4) edge (v5);
		\draw (v4) edge (v6);
		\draw (v5) edge (v6);
		
		\node[nn] (v1) at (0,0)    {$v_1$};
		\node[nn] (v2) at (-0.5,3) {$u_1$};
		\node[nn] (v3) at (0.75,3.5)  {};
		\node[nn] (v4) at (2,3.5)    {$u_2$};
		\node[nn] (v5) at (3,2)    {$v_2$};
		\node[nn] (v6) at (1.25,1.5)  {};
 		\node[pppr] (qu) at (-0.13,0.76) {};
 		\node[pppg] (qv) at (-0.30,1.80) {};
 		\node[pppb] (q2) at (2.53,2.71)  {};
 		\node[pp] (p1) at (-0.20,1.20) {};
		\node[right=-0.05cm of p1] {$p_1$};
 		\node[pp] (p2) at (2.66,2.50) {};
		\node[right=-0.1cm of p2] {$p_2$};
		
		\draw[decoration={brace,raise=5pt,mirror},decorate,draw=myred] (v2) -- node[left=0.2cm] {\textcolor{myred}{$\lambda_{u_2}$}} (qu);
		\draw[decoration={brace,raise=10pt,mirror},decorate,draw=mygreen] (v2) -- node[left=0.4cm] {\textcolor{mygreen}{$\lambda_{v_2}$}} (qv);
		\draw[decoration={brace,raise=5pt},decorate,draw=myblue] (v4) -- node[above right=0.0cm and 0.2cm] {\textcolor{myblue}{$\lambda_{p_1}$}} (q2);
		
		\node (txt1) at (2.25,0.5) {$\lambda_{v_2} < \ell_{p_1} \leq \lambda_{u_2}$};
		\node (txt2) at (2.25,0) {$\wedge~ \ell_{p_2} > \lambda_{p_1}$};
		\node (txt3) at (-0.5,3.75) {h)};
	\end{tikzpicture}}
	\caption{The different possible situations for the general case of Lemma~\ref{lem:GraphDistancePoints}, describing the graph distance between two non-vertex points $p_1$ and $p_2$. The shortest path is represented in gray.}
    \label{fig:GraphDistPtPt}
\end{figure}

In the general case, and for a fixed $p_1$, $d_G(p_1,p_2)$ is a piecewise linear function of $\ell_{p_2}$. Note that fixing $p_1$ amounts to fixing $\lambda_{p_1}$, which makes the function $2$-pieced. Moreover, by definition of the break-even point (cf. Definition~\ref{def:BreakEvenPoint}), this function is continuous since both pieces are equal when $\ell_{p_2} = \lambda_{p_1}$.

\begin{proof}[Proof of Lemma~\ref{lem:GraphDistancePoints}]
	We focus on the general situation, in which the edges are distinct, since the other one is trivial. This proof is two-fold: we must first identify which vertex of $(u_1,v_1)$ is on the shortest path between $p_1$ and $p_2$, and then which vertex of $(u_2,v_2)$ is. 

	We start with the first part, i.e. edge $(u_1,v_1)$. From Lemma~\ref{lem:BreakEvenDistancePoint}, we know that if $\ell_{p_1} \leq \lambda_{u_2}$ (resp. $\lambda_{v_2}$), the shortest path between $p_1$ and $u_2$ (resp. $v_2$) passes through $u_1$, otherwise it passes through $v_1$.
	
	Let us now switch to the second part, i.e. edge $(u_2,v_2)$. The principle of this proof is similar to that of Lemma~\ref{lem:GraphDistanceNodePoint}, except we must additionally take into account the value of $\ell_{p_1}$ relatively to $\lambda_{u_2}$ and $\lambda_{v_2}$.
    
    Let us first suppose that $\ell_{p_2} \leq \lambda_{p_1}$. We can state:
	\begin{empheq}[left=\empheqlbrace]{align}
   		\label{eq:ProofLemmaGraphDistPoints1}
    	&\quad \ell_{p_1} + d_G(u_1,u_2) + \ell_{p_2} \leq \ell_{p_1} + d_G(u_1,u_2) + \lambda_{p_1}, \\
    	\label{eq:ProofLemmaGraphDistPoints2}
	    &\quad d_E(u_1,v_1) - \ell_{p_1} + d_G(v_1,u_2) + \ell_{p_2} \leq d_E(u_1,v_1) - \ell_{p_1} + d_G(v_1,u_2) + \lambda_{p_1}, \\
    	\label{eq:ProofLemmaGraphDistPoints3}
    	&\quad \ell_{p_1} + d_G(u_1,v_2) + d_E(u_2,v_2) - \ell_{p_2} \geq \ell_{p_1} + d_G(u_1,v_2) + d_E(u_2,v_2) - \lambda_{p_1}, \\
    	\label{eq:ProofLemmaGraphDistPoints4}
	    &\quad d_E(u_1,v_1) - \ell_{p_1} + d_G(v_1,v_2) + d_E(u_2,v_2) - \ell_{p_2} \geq d_E(u_1,v_1) - \ell_{p_1} + d_G(v_1,v_2) + d_E(u_2,v_2) - \lambda_{p_1}.
	\end{empheq}    
According to Definition~\ref{def:BreakEvenPoint}, we have:
	\begin{equation}
    	\label{eq:ProofLemmaGraphDistPoints5}
		d_G(p_1,u_2) + \lambda_{p_1} = d_G(p_1,v_2) + d_E(u_2,v_2) - \lambda_{p_1}
	\end{equation}
By replacing Equations (\ref{eq:BEDpointProof1}) and (\ref{eq:BEDpointProof2}) in Equations (\ref{eq:ProofLemmaGraphDistPoints5}), we get:
	{\small\begin{equation}
    	\label{eq:ProofLemmaGraphDistPoints6}
		\begin{cases}
    		\ell_{p_1} + d_G(u_1,u_2) + \lambda_{p_1} = \ell_{p_1} + d_G(u_1,v_2) + d_E(u_2,v_2) - \lambda_{p_1} ,& \text{if } \ell_{p_1} \leq \lambda_{u_2}, \lambda_{v_2} \\
    		\ell_{p_1} + d_G(u_1,u_2) + \lambda_{p_1} = d_E(u_1,v_1) - \ell_{p_1} + d_G(v_1,v_2) + d_E(u_2,v_2) - \lambda_{p_1} ,& \text{if } \lambda_{v_2} < \ell_{p_1} \leq \lambda_{u_2} \\
    		d_E(u_1,v_1) - \ell_{p_1} + d_G(v_1,u_2) + \lambda_{p_1} = \ell_{p_1} + d_G(u_1,v_2) + d_E(u_2,v_2) - \lambda_{p_1} ,& \text{if } \lambda_{u_2} < \ell_{p_1} \leq \lambda_{v_2} \\
    		d_E(u_1,v_1) - \ell_{p_1} + d_G(v_1,u_2) + \lambda_{p_1} = d_E(u_1,v_1) - \ell_{p_1} + d_G(v_1,v_2) + d_E(u_2,v_2) - \lambda_{p_1} ,& \text{if } \ell_{p_1} > \lambda_{u_2}, \lambda_{v_2}. \\
		\end{cases}
	\end{equation}}
	Notice that when $\ell_{p_1} \leq \lambda_{u_2}, \lambda_{v_2}$, the right-hand sides of Equations (\ref{eq:ProofLemmaGraphDistPoints1}) and (\ref{eq:ProofLemmaGraphDistPoints3}) are equal. When $\lambda_{v_2} < \ell_{p_1} \leq \lambda_{u_2}$, the same remark can be made for Equations (\ref{eq:ProofLemmaGraphDistPoints1}) and (\ref{eq:ProofLemmaGraphDistPoints4}) ; when $\lambda_{u_2} < \ell_{p_1} \leq \lambda_{v_2}$ for Equations (\ref{eq:ProofLemmaGraphDistPoints2}) and (\ref{eq:ProofLemmaGraphDistPoints3}), and when $\ell_{p_1} > \lambda_{u_2}, \lambda_{v_2}$ for Equations (\ref{eq:ProofLemmaGraphDistPoints2}) and (\ref{eq:ProofLemmaGraphDistPoints4}). Consequently, we get the following system:
	{\small\begin{equation}
    	\label{eq:ProofLemmaGraphDistPoints7}
		\begin{cases}
			\ell_{p_1} + d_G(u_1,u_2) + \ell_{p_2} \leq \ell_{p_1} + d_G(u_1,v_2) + d_E(u_2,v_2) - \ell_{p_2} ,& \text{if } \ell_{p_1} \leq \lambda_{u_2}, \lambda_{v_2} \\
    		\ell_{p_1} + d_G(u_1,u_2) + \ell_{p_2} \leq d_E(u_1,v_1) - \ell_{p_1} + d_G(v_1,v_2) + d_E(u_2,v_2) - \ell_{p_2} ,& \text{if } \lambda_{v_2} < \ell_{p_1} \leq \lambda_{u_2} \\
    		d_E(u_1,v_1) - \ell_{p_1} + d_G(v_1,u_2) + \ell_{p_2} \leq \ell_{p_1} + d_G(u_1,v_2) + d_E(u_2,v_2) - \ell_{p_2} ,& \text{if } \lambda_{u_2} < \ell_{p_1} \leq \lambda_{v_2} \\
    		d_E(u_1,v_1) - \ell_{p_1} + d_G(v_1,u_2) + \ell_{p_2} \leq d_E(u_1,v_1) - \ell_{p_1} + d_G(v_1,v_2) + d_E(u_2,v_2) - \ell_{p_2} ,& \text{if } \ell_{p_1} > \lambda_{u_2}, \lambda_{v_2}. \\
		\end{cases}
	\end{equation}}
	This means the shortest path between $p_1$ and $p_2$ always go through $u_2$ when $\ell_{p_2} \leq \lambda_{p_1}$. So, from the first part of the proof, we can deduce only $\lambda_{u_2}$ is relevant regarding the shortest path: if $\ell_{p_1} \leq \lambda_{u_2}$ it goes through $u_1$, otherwise it goes through $v_1$. In these cases, the lengths of these paths should thus be considered as the graph distances, as stated in Lemma~\ref{lem:GraphDistancePoints} (first and third cases, respectively).

    If $\ell_{p_2} < \lambda_{p_1}$, we can proceed symmetrically, like we did for Lemma~\ref{lem:GraphDistanceNodePoint}, by reversing the inequalities in Equations (\ref{eq:ProofLemmaGraphDistPoints1})--(\ref{eq:ProofLemmaGraphDistPoints4}), and consequently in Equation~\ref{eq:ProofLemmaGraphDistPoints7}. This means that in this case, the shortest path always go through $v_2$. Therefore, only $\lambda_{v_2}$ is relevant regarding the shortest path: if $\ell_{p_1} \leq \lambda_{v_2}$ it goes through $u_1$, otherwise it goes through $v_1$. In these cases, the lengths of these paths should thus be considered as the graph distances, as stated in Lemma~\ref{lem:GraphDistancePoints} (second and fourth cases).
\end{proof}

We now have the necessary tools to express the Straightness in terms of relative positions of $p_1$ and $p_2$. However, before doing so, we define a group of auxiliary functions aiming at making later equations more compact and easier to read.

\begin{definition}[Auxiliary functions]
	\label{def:AuxiliaryFunction}
	Let us define the generic auxiliary function $f(\ell_{p_1},\ell_{p_2};\alpha,\beta,\gamma)$ as:
	\begin{equation}
		f(\ell_{p_1},\ell_{p_2};\alpha,\beta,\gamma) = \frac{d_E(p_1,p_2)}{\alpha + \beta \ell_{p_1} + \gamma \ell_{p_2}}
	\end{equation}
	where $\alpha$, $\beta$ and $\gamma$ are parameters of the function.
	
	We additionally define the following $4$ auxiliary functions, which are all parameterized instances of $f$:
	\begin{align}
	    f_{u_1u_2}(\ell_{p_1},\ell_{p_2}) &= f(\ell_{p_1},\ell_{p_2};d_G(u_1,u_2),1,1), \\
	    f_{u_1v_2}(\ell_{p_1},\ell_{p_2}) &= f(\ell_{p_1},\ell_{p_2};d_G(u_1,v_2) + d_E(u_2,v_2),1,-1), \\
		f_{v_1u_2}(\ell_{p_1},\ell_{p_2}) &= f(\ell_{p_1},\ell_{p_2};d_E(u_1,v_1) + d_G(v_1,u_2),-1,1), \\
		f_{v_1v_2}(\ell_{p_1},\ell_{p_2}) &= f(\ell_{p_1},\ell_{p_2};d_E(u_1,v_1) + d_G(v_1,v_2) + d_E(u_2,v_2),-1,-1).
	\end{align}
\end{definition}

Note that $d_E(p_1,p_2)$, the numerator of $f$, contains only absolute coordinates of vertices and relative positions of non-vertex points, i.e. no absolute coordinates of non-vertex points (cf. Lemma~\ref{lem:EuclideanDistance}). 

\begin{theorem}[Straightness between two points]
	\label{lem:StraightnessRelative}
	Let $p_1 \in P$ be a point lying on an edge $(u_1,v_1) \in E$, at a distance $\ell_{p_1}$ from $u_1$, and $p_2 \in P$ be a point lying on an edge $(u_2,v_2) \in E$, at a distance $\ell_{p_2}$ from $u_2$.
	
	The Straightness $S(p_1,p_2)$ between $p_1$ and $p_2$ is defined as follows:
	\begin{itemize}
		\item If $d_G(p_1,p_2) = +\infty$, then $S(p_1,p_2) = 0$.
		\item If $p_1$ and $p_2$ lie on the same edge, then $S(p_1,p_2) = 1$.
		\item Otherwise, in the general case:
	\end{itemize}
	\begin{equation}
		S(p_1,p_2) = 
		\begin{cases}
	    	f_{u_1u_2}(\ell_{p_1},\ell_{p_2}) ,& \text{if } (\ell_{p_1} \leq \lambda_{u_2}, \lambda_{v_2} \wedge \ell_{p_2} \leq \lambda_{p_1}^{(1)}) \vee (\lambda_{v_2} < \ell_{p_1} \leq \lambda_{u_2} \wedge \ell_{p_2} \leq \lambda_{p_1}^{(2)}), \\
	    	f_{u_1v_2}(\ell_{p_1},\ell_{p_2}) ,& \text{if } (\ell_{p_1} \leq \lambda_{u_2}, \lambda_{v_2} \wedge \ell_{p_2} > \lambda_{p_1}^{(1)}) \vee( \lambda_{u_2} < \ell_{p_1} \leq \lambda_{v_2} \wedge \ell_{p_2} > \lambda_{p_1}^{(3)}), \\
		    f_{v_1u_2}(\ell_{p_1},\ell_{p_2}) ,& \text{if } (\lambda_{u_2} < \ell_{p_1} \leq \lambda_{v_2} \wedge \ell_{p_2} \leq \lambda_{p_1}^{(3)}) \vee (\ell_{p_1} > \lambda_{u_2}, \lambda_{v_2} \wedge \ell_{p_2} \leq \lambda_{p_1}^{(4)}), \\
		    f_{v_1v_2}(\ell_{p_1},\ell_{p_2}) ,& \text{if } (\lambda_{v_2} < \ell_{p_1} \leq \lambda_{u_2} \wedge \ell_{p_2} > \lambda_{p_1}^{(2)}) \vee (\ell_{p_1} > \lambda_{u_2}, \lambda_{v_2} \wedge \ell_{p_2} > \lambda_{p_1}^{(4)}). \\
		\end{cases}
		\label{eq:StraightnessRelative}
	\end{equation}
\end{theorem}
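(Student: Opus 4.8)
The plan is to assemble the statement directly from three ingredients already established: the definition of the Straightness (Definition \ref{def:Straightness}), the reformulated Euclidean distance (Lemma \ref{lem:EuclideanDistance}), and the reformulated graph distance (Lemma \ref{lem:GraphDistancePoints}). Since the Straightness is by definition the ratio $d_E(p_1,p_2)/d_G(p_1,p_2)$, and both numerator and denominator have already been expressed in terms of the relative positions $\ell_{p_1}$ and $\ell_{p_2}$, the remaining work is essentially bookkeeping: substituting the right expressions and recognizing the resulting ratios as the auxiliary functions of Definition \ref{def:AuxiliaryFunction}.

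First I would dispose of the two degenerate cases, which follow immediately from Definition \ref{def:Straightness}. If $d_G(p_1,p_2) = +\infty$, the points are disconnected and the convention sets $S(p_1,p_2)=0$. If $p_1$ and $p_2$ lie on the same edge, the shortest path between them is the straight segment joining them, so $d_G(p_1,p_2)=d_E(p_1,p_2)$ (the first bullet of Lemma \ref{lem:GraphDistancePoints}) and hence $S(p_1,p_2)=1$.

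For the general case (distinct edges, finite distance), I would start from $S(p_1,p_2)=d_E(p_1,p_2)/d_G(p_1,p_2)$ and substitute the four expressions for $d_G(p_1,p_2)$ given by the general case of Lemma \ref{lem:GraphDistancePoints}. The numerator $d_E(p_1,p_2)$ is common to all four branches and is already written in relative terms by Lemma \ref{lem:EuclideanDistance}. It then suffices to check that each of the four denominators coincides with one of the affine expressions $\alpha + \beta\ell_{p_1} + \gamma\ell_{p_2}$ parameterizing $f_{u_1u_2}$, $f_{u_1v_2}$, $f_{v_1u_2}$, $f_{v_1v_2}$. This is a direct match: the first case of Lemma \ref{lem:GraphDistancePoints} gives $d_G(u_1,u_2)+\ell_{p_1}+\ell_{p_2}$, i.e. $\alpha=d_G(u_1,u_2)$, $\beta=\gamma=1$, which is exactly $f_{u_1u_2}$; the remaining three cases match $f_{u_1v_2}$, $f_{v_1u_2}$ and $f_{v_1v_2}$ in the same way, the signs $\beta,\gamma\in\{-1,+1\}$ tracking which end-vertex the shortest path traverses on each edge. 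Crucially, the conditions on $\ell_{p_1}$, $\ell_{p_2}$, $\lambda_{u_2}$, $\lambda_{v_2}$ and $\lambda_{p_1}^{(i)}$ carry over verbatim from Lemma \ref{lem:GraphDistancePoints}, since dividing by the common positive numerator does not alter which branch is selected.

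I do not expect a genuine obstacle here: all the difficulty has already been absorbed into Lemma \ref{lem:GraphDistancePoints} (the case analysis over which vertices the shortest path traverses) and into the compact notation of Definition \ref{def:AuxiliaryFunction}. The only point requiring a little care is the consistency check that the conditions labelling the four branches of the Straightness exactly reproduce those of the graph distance, with no overlap or gap introduced when forming the ratio. Once this correspondence is confirmed, Equation (\ref{eq:StraightnessRelative}) follows by inspection.
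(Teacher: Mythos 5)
Your proposal is correct and follows essentially the same route as the paper's own proof: dispose of the two degenerate cases via Definition \ref{def:Straightness} and the same-edge bullet of Lemma \ref{lem:GraphDistancePoints}, then substitute Lemma \ref{lem:EuclideanDistance} (numerator) and the four-branch general case of Lemma \ref{lem:GraphDistancePoints} (denominator) into the defining ratio, matching each branch to the corresponding auxiliary function of Definition \ref{def:AuxiliaryFunction}. The only difference is that you spell out the parameter matching $(\alpha,\beta,\gamma)$ and the branch-condition consistency explicitly, which the paper leaves implicit.
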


\begin{proof}[Proof of Theorem~\ref{lem:StraightnessRelative}]
	Let us first consider the particular cases. If $p_1$ and $p_2$ are disconnected, the Straightness is $0$ by definition (see Definition~\ref{def:Straightness}). If $p_1$ and $p_2$ lie on the same edge, we get $d_G(p_1,p_2) = d_E(p_1,p_2)$ by Lemma~\ref{lem:GraphDistancePoints}, so $S(p_1,p_2) = 1$. Note that this case subsumes $p_1 = p_2$ from Definition~\ref{def:Straightness}.
	
	We finally focus on the general case. Using Lemmas~\ref{lem:EuclideanDistance} and~\ref{lem:GraphDistancePoints} to replace $d_E$ and $d_G$ in Definition~\ref{def:Straightness} yields Theorem~\ref{lem:StraightnessRelative}.
\end{proof}

\section{Derivation of the Continuous Average Straightness}
\label{sec:ContinuousAverageStraightness}
In this section, we derive several continuous versions of the average Straightness. We take advantage of the reformulation of the Straightness we proposed in the previous section to average it through integration instead of summation. We present $5$ variants of the continuous average Straightness: the first two ones allow to characterize a point of interest (Section~\ref{sec:StraightnessPointLink}), the next two ones focus on edges instead and the last one on the whole graph (Section~\ref{sec:StraightnessLinkLink}).

\subsection{Average Straightness Relatively to a Point}
\label{sec:StraightnessPointLink}
In this section, we focus on measures expressed relatively to a point of interest. We first process the vertex-to-edge total Straightness before deriving two different average values.

\begin{lemma}[Total Straightness between a point and an edge]
	\label{lem:TotalStrPtLk}
	Let $p_1 \in P$ be a point lying on an edge $(u_1,v_1) \in E$ at a distance $\ell_{p_1}$ of $u_1$, and $(u_2,v_2) \in E$ be an edge. 
	
	The total Straightness $\hat{S}_{u_2v_2}(p_1)$ between $p_1$ and $(u_2,v_2)$ is:
	\begin{itemize}
		\item If $p_1$ lies on $(u_2,v_2)$, then $\hat{S}_{u_2v_2}(p_1) = d_E(u_2,v_2)$.
		\item If there is no path between $p_1$ and $(u_2,v_2)$, then $\hat{S}_{u_2v_2}(p_1) = 0$.
		\item Otherwise, in the general case:
	\end{itemize}	 
	\begin{align}
		\hat{S}_{u_2v_2}(p_1) = 
		\begin{cases}
			\displaystyle \int_0^{\lambda_{p_1}^{(1)}} f_{u_1u_2}(\ell_{p_1},\ell_{p_2}) \mathrm{d}\ell_{p_2} + \int_{\lambda_{p_1}^{(1)}}^{d_E(u_2,v_2)} f_{u_1v_2}(\ell_{p_1},\ell_{p_2}) \mathrm{d}\ell_{p_2},& \text{if } \ell_{p_1} \leq \lambda_{u_2}, \lambda_{v_2}, \\
			\displaystyle \int_0^{\lambda_{p_1}^{(2)}} f_{u_1u_2}(\ell_{p_1},\ell_{p_2}) \mathrm{d}\ell_{p_2} + \int_{\lambda_{p_1}^{(2)}}^{d_E(u_2,v_2)} f_{v_1v_2}(\ell_{p_1},\ell_{p_2}) \mathrm{d}\ell_{p_2},& \text{if } \lambda_{v_2} < \ell_{p_1} \leq \lambda_{u_2}, \\
			\displaystyle \int_0^{\lambda_{p_1}^{(3)}} f_{v_1u_2}(\ell_{p_1},\ell_{p_2}) \mathrm{d}\ell_{p_2} + \int_{\lambda_{p_1}^{(3)}}^{d_E(u_2,v_2)} f_{u_1v_2}(\ell_{p_1},\ell_{p_2}) \mathrm{d}\ell_{p_2},& \text{if } \lambda_{u_2} < \ell_{p_1} \leq \lambda_{v_2}, \\
			\displaystyle \int_0^{\lambda_{p_1}^{(4)}} f_{v_1u_2}(\ell_{p_1},\ell_{p_2}) \mathrm{d}\ell_{p_2} + \int_{\lambda_{p_1}^{(4)}}^{d_E(u_2,v_2)} f_{v_1v_2}(\ell_{p_1},\ell_{p_2}) \mathrm{d}\ell_{p_2},& \text{if } \ell_{p_1} > \lambda_{u_2}, \lambda_{v_2}.
		\end{cases}
		\label{eq:TotalStrPtLk}
	\end{align}
\end{lemma}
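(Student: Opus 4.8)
The plan is to start from the definition of the total Straightness as the integral of the pointwise Straightness over the second edge, $\hat{S}_{u_2v_2}(p_1) = \int_0^{d_E(u_2,v_2)} S(p_1,p_2)\,\mathrm{d}\ell_{p_2}$, where $p_2$ denotes the point of $(u_2,v_2)$ at relative position $\ell_{p_2}$, and then to substitute the closed form of $S(p_1,p_2)$ supplied by Theorem \ref{lem:StraightnessRelative}. The two degenerate cases are immediate: if $p_1$ lies on $(u_2,v_2)$, then every $p_2$ shares an edge with $p_1$, so $S(p_1,p_2)=1$ throughout and the integral reduces to $\int_0^{d_E(u_2,v_2)} \mathrm{d}\ell_{p_2}=d_E(u_2,v_2)$; if $p_1$ and $(u_2,v_2)$ are disconnected, then $S(p_1,p_2)=0$ for all $p_2$ and the integral vanishes.

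For the general case, the key observation is that fixing $p_1$ fixes $\ell_{p_1}$, which in turn selects exactly one of the four regimes of Lemma \ref{lem:BreakEvenDistancePoint} and thereby pins the break-even distance $\lambda_{p_1}$ to a single value $\lambda_{p_1}^{(i)}$. Consequently the intertwined disjunctions in Theorem \ref{lem:StraightnessRelative} collapse: within a fixed $\ell_{p_1}$-regime only two of the four auxiliary functions remain active, and the switch between them occurs at the single threshold $\ell_{p_2}=\lambda_{p_1}^{(i)}$. I would therefore treat each regime in turn. For instance, when $\ell_{p_1}\leq\lambda_{u_2},\lambda_{v_2}$, the first two lines of Theorem \ref{lem:StraightnessRelative} give $S(p_1,p_2)=f_{u_1u_2}$ for $\ell_{p_2}\leq\lambda_{p_1}^{(1)}$ and $S(p_1,p_2)=f_{u_1v_2}$ for $\ell_{p_2}>\lambda_{p_1}^{(1)}$; splitting the integral at $\lambda_{p_1}^{(1)}$ yields the first line of the claimed formula. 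The remaining three regimes ($\lambda_{v_2}<\ell_{p_1}\leq\lambda_{u_2}$, $\lambda_{u_2}<\ell_{p_1}\leq\lambda_{v_2}$, and $\ell_{p_1}>\lambda_{u_2},\lambda_{v_2}$) are handled identically by reading off the corresponding pairs of lines of Theorem \ref{lem:StraightnessRelative}, splitting at $\lambda_{p_1}^{(2)}$, $\lambda_{p_1}^{(3)}$ and $\lambda_{p_1}^{(4)}$ respectively.

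Two points deserve a brief verification before the splitting is legitimate. First, the split point $\lambda_{p_1}^{(i)}$ must lie in $[0,d_E(u_2,v_2)]$ so that both integrals have admissible bounds; this follows from Definition \ref{def:BreakEvenPoint} together with the fact that the edge $(u_2,v_2)$ is itself the shortest path between its endpoints (hence $d_G(u_2,v_2)=d_E(u_2,v_2)$, and the triangle inequality forces $|d_G(p_1,v_2)-d_G(p_1,u_2)|\leq d_E(u_2,v_2)$, so that a degenerate split simply makes one of the two integrals empty). Second, the integrand is well defined and integrable on each piece: in the general case the denominator of each auxiliary function is a strictly positive graph distance and the numerator $d_E(p_1,p_2)$ is continuous in $\ell_{p_2}$, so each $f$ is bounded and continuous. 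The only real labour is the bookkeeping of which pair of auxiliary functions is active in each regime, and this is a direct transcription of the case table of Theorem \ref{lem:StraightnessRelative}; I expect this matching, rather than any analytic difficulty, to be the main obstacle.
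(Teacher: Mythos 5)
Your proposal is correct and follows essentially the same route as the paper's proof: start from the integral definition $\hat{S}_{u_2v_2}(p_1) = \int_0^{d_E(u_2,v_2)} S(p_1,p_2)\,\mathrm{d}\ell_{p_2}$, dispatch the two degenerate cases via Theorem \ref{lem:StraightnessRelative}, then observe that fixing $\ell_{p_1}$ selects one of the four regimes of Lemma \ref{lem:BreakEvenDistancePoint}, making $S(p_1,p_2)$ a continuous two-piece function of $\ell_{p_2}$ whose integral splits at $\lambda_{p_1}^{(i)}$ into the two stated terms. Your explicit verification that $\lambda_{p_1}^{(i)} \in [0, d_E(u_2,v_2)]$ (via the triangle inequality) is a small addition that the paper leaves implicit, but it does not alter the structure of the argument.
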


\begin{proof}[Proof of Lemma~\ref{lem:TotalStrPtLk}]
	The definition of the total Straightness $\hat{S}_{u_2v_2}(p_1)$ between a point $p_1 \in P$ and an edge $(u_2,v_2) \in E$ is:
	\begin{equation}
		\hat{S}_{u_2v_2}(p_1) = \int_0^{d_E(u_2,v_2)} f(\ell_{p_1},\ell_{p_2}) \mathrm{d}\ell_{p_2},
		\label{eq:AvgStrInt}
	\end{equation}
	where $p_2$ is the point used to integrate over $(u_2,v_2)$ and $\ell_{p_2}$ is its distance to $u_2$.
	
	We first consider the two particular cases. According to Theorem~\ref{lem:StraightnessRelative}, if $p_1$ lies on $(u_2,v_2)$, then $S(p_1,p_2)=1$ for any $p_2$ on $(u_2,p_2)$. Consequently, $\hat{S}_{u_2v_2}(p_1) = d_E(u_2,v_2)$. By the same Lemma, if $p_1$ and $(u_2,v_2)$ are not connected by any path, then $S(p_1,p_2)=0$ for any $p_2$ on $(u_2,p_2)$. Thus, $\hat{S}_{u_2v_2}(p_1) = 0$.
	
	We now focus on the general case. Let the break-even distance of $(u_2,v_2)$ for $p_1$ be $\lambda_{p_1}$. We first consider the integrability of $S(p_1,p_2)$ in this case. Suppose that $p_1$ is fixed whereas $p_2$ lies somewhere on the edge $(u_2,v_2)$. Then, $S(p_1,p_2)$ is piecewise (with two pieces) for $\ell_{p_2} \in [0, d_E(u_2,v_2)]$. This is due to it being the ratio of the Euclidean distance to the graph distance, which as mentioned before, is itself two-pieced. Moreover, $S(p_1,p_2)$ is continuous on the same interval, since it is the ratio of two continuous functions, whose denominator cannot be zero ($d_G(p_1,p_2)=0$ is handled by the second specific case in Theorem~\ref{lem:StraightnessRelative}, i.e. $p_1$ and $p_2$ lying on the same edge). 
The Straightness is therefore a continuous piecewise function on $[0, d_E(u_2,v_2)]$, so it can be integrated, as the sum of the integrals of its pieces:

	\begin{equation}
		\hat{S}_{u_2v_2}(p_1) = \int_0^{\lambda_{p_1}} S(p_1,p_2) \mathrm{d}\ell_{p_2} + \int_{\lambda_{p_1}}^{d_E(u_2,v_2)} S(p_1,p_2) \mathrm{d}\ell_{p_2}.
	\end{equation}
	
	Let the break-even distances of $(u_1,v_1)$ for $u_2$ and $v_2$ be $\lambda_{u_2}$ and $\lambda_{v_2}$, respectively, and $\lambda_{p_1}^{(i)}$ denote the value associated to the $i^{th}$ case of Lemma~\ref{lem:BreakEvenDistancePoint}. Theorem~\ref{lem:StraightnessRelative} tells us that the exact expression of $\lambda_{p_1}$ depends on the relative values of $\lambda_{u_2}$, $\lambda_{v_2}$ and $\ell_{p_1}$. This leads to the following $4$ cases:
	\begin{equation}
		\hat{S}_{u_2v_2}(p_1) = 
		\begin{cases}
			\displaystyle \int_0^{\lambda_{p_1}^{(1)}} S(p_1,p_2) \mathrm{d}\ell_{p_2} + \int_{\lambda_{p_1}^{(1)}}^{d_E(u_2,v_2)} S(p_1,p_2) \mathrm{d}\ell_{p_2},& \text{if } \ell_{p_1} \leq \lambda_{u_2}, \lambda_{v_2}, \\
			\displaystyle \int_0^{\lambda_{p_1}^{(2)}} S(p_1,p_2) \mathrm{d}\ell_{p_2} + \int_{\lambda_{p_1}^{(2)}}^{d_E(u_2,v_2)} S(p_1,p_2) \mathrm{d}\ell_{p_2},& \text{if } \lambda_{v_2} < \ell_{p_1} \leq \lambda_{u_2}, \\
			\displaystyle \int_0^{\lambda_{p_1}^{(3)}} S(p_1,p_2) \mathrm{d}\ell_{p_2} + \int_{\lambda_{p_1}^{(3)}}^{d_E(u_2,v_2)} S(p_1,p_2) \mathrm{d}\ell_{p_2},& \text{if } \lambda_{u_2} < \ell_{p_1} \leq \lambda_{v_2}, \\
			\displaystyle \int_0^{\lambda_{p_1}^{(4)}} S(p_1,p_2) \mathrm{d}\ell_{p_2} + \int_{\lambda_{p_1}^{(4)}}^{d_E(u_2,v_2)} S(p_1,p_2) \mathrm{d}\ell_{p_2},& \text{if } \ell_{p_1} > \lambda_{u_2}, \lambda_{v_2}.
		\end{cases}
		\label{eq:AvgStrProof2}
	\end{equation}
	 
	 Finally, $S(p_1,p_2)$ also depends on $\lambda_{u_2}$, $\lambda_{v_2}$ and $\ell_{p_1}$. Replacing the $S(p_1,p_2)$ terms in Equation (\ref{eq:AvgStrProof2}) by the appropriate instances of the auxiliary function $f$ from Theorem~\ref{lem:StraightnessRelative} yields the expression given in Lemma~\ref{lem:TotalStrPtLk}.
\end{proof}

Based on the total values from Lemma~\ref{lem:TotalStrPtLk}, we can define the average Straightness between a point of interest and all the points constituting some edge, as well as all the points constituting the graph.

\begin{definition}[Average Straightness between a point and an edge]
	\label{def:AvgStrPtLk}
	The continuous average Straightness between a vertex $p \in P$ and all the points constituting an edge $(u,v) \in E$ is:
	\begin{equation}
		S_{uv}(p) = \frac{\hat{S}_{uv}(p)}{d_E(u,v)}.
	\end{equation}
\end{definition}

As illustrated later in the experimental evaluation (Section~\ref{sec:ComparisonNode}), this particular average value is convenient to characterize the accessibility of an edge from one point of the graph.

\begin{definition}[Average Straightness between a point and the graph]
	\label{def:AvgStrPtGr}
	The continuous average Straightness between a vertex $p \in P$ and all the points constituting the graph $G$ is:
	\begin{equation}
		S_G(p) = \frac{\sum_{(u,v) \in E} \hat{S}_{uv}(p)}{\sum_{(u,v) \in E} d_E(u,v)}.
	\end{equation}
\end{definition}

This average value is particularly interesting, because it can be used as some kind of vertex centrality measure, representing how accessible the considered vertex is from the rest of a the graph.

\subsection{Average Straightness Relatively to an Edge}
\label{sec:StraightnessLinkLink}
In this section, we focus on average measures expressed relatively to an edge of interest. Like before, we first process the total Straightness, this time between two edges, before deriving several distinct average values.

\begin{lemma}[Total Straightness between two edges]
	\label{lem:TotalStrLkLk}
	Let $(u_1,v_1) \in E$ and $(u_2,v_2) \in E$ be two edges.
	
	The total Straightness $\hat{S}_{u_2v_2}(u_1,v_1)$ between $(u_1,v_1)$ and $(u_2,v_2)$ is:
	\begin{itemize}
		\item If $(u_1,v_1) = (u_2,v_2)$, then $\hat{S}_{u_2v_2}(u_1,v_1) = d_E(u_1,v_1)^2/2$.
		\item If there is no path between $(u_1,v_1)$ and $(u_2,v_2)$, then $\hat{S}_{u_2v_2}(u_1,v_1) = 0$.
		\item Otherwise, in the general case:
	\end{itemize}
	\begin{align}
		\begin{split}
		&\hat{S}_{u_2v_2}(u_1,v_1) = \\
		&\quad\begin{cases}
			\displaystyle \int_0^{\lambda_{u_2}} \hat{S}_{u_2v_2}(p_1) \mathrm{d}\ell_{p_1} + \int_{\lambda_{u_2}}^{\lambda_{v_2}} \hat{S}_{u_2v_2}(p_1) \mathrm{d}\ell_{p_1} + \int_{\lambda_{v_2}}^{d_E(p_1)} \hat{S}_{u_2v_2}(p_1) \mathrm{d}\ell_{p_1},& \text{if } \lambda_{u_2} \leq \lambda_{v_2}, \\
			\displaystyle \int_0^{\lambda_{v_2}} \hat{S}_{u_2v_2}(p_1) \mathrm{d}\ell_{p_1} + \int_{\lambda_{v_2}}^{\lambda_{u_2}} \hat{S}_{u_2v_2}(p_1) \mathrm{d}\ell_{p_1} + \int_{\lambda_{u_2}}^{d_E(p_1)} \hat{S}_{u_2v_2}(p_1) \mathrm{d}\ell_{p_1},& \text{if } \lambda_{u_2} > \lambda_{v_2}. \\
		\end{cases}
		\end{split}
		\label{eq:TotalStrLkLk}
	\end{align}
\end{lemma}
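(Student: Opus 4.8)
The plan is to obtain the edge-to-edge total Straightness by integrating the point-to-edge total Straightness of Lemma~\ref{lem:TotalStrPtLk} a second time, now letting the point $p_1$ sweep the edge $(u_1,v_1)$. Concretely, I would start from the defining integral
\begin{equation}
	\hat{S}_{u_2v_2}(u_1,v_1) = \int_0^{d_E(u_1,v_1)} \hat{S}_{u_2v_2}(p_1) \, \mathrm{d}\ell_{p_1},
\end{equation}
where $\ell_{p_1}$ is the relative position of $p_1$ on $(u_1,v_1)$ and $\hat{S}_{u_2v_2}(p_1)$ is supplied by the previous lemma. Note that the statement keeps $\hat{S}_{u_2v_2}(p_1)$ symbolic inside the integrals, so the task is really to re-express this single integral as a sum of three, rather than to evaluate the inner integrals.

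First I would dispatch the two special cases. If no path connects the two edges, then $\hat{S}_{u_2v_2}(p_1)=0$ for every $p_1$ (second bullet of Lemma~\ref{lem:TotalStrPtLk}), so the integral vanishes. If $(u_1,v_1)=(u_2,v_2)$, then $p_1$ and $p_2$ always lie on the same edge, hence $S(p_1,p_2)=1$ throughout by Theorem~\ref{lem:StraightnessRelative}; to count each unordered pair of points only once, I would integrate over the triangle $\{0 \leq \ell_{p_2} \leq \ell_{p_1} \leq d_E(u_1,v_1)\}$ rather than the full square, which gives $\int_0^{d_E(u_1,v_1)} \ell_{p_1} \, \mathrm{d}\ell_{p_1} = d_E(u_1,v_1)^2/2$, matching the claim.

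For the general case, the key observation is that $\hat{S}_{u_2v_2}(p_1)$, viewed as a function of $\ell_{p_1}$, is itself piecewise, its four branches being exactly the four cases of Lemma~\ref{lem:TotalStrPtLk}, selected according to the position of $\ell_{p_1}$ relative to the two break-even distances $\lambda_{u_2}$ and $\lambda_{v_2}$ of $(u_1,v_1)$. Both of these lie in $[0,d_E(u_1,v_1)]$ (a consequence of the triangle inequality on the graph, since the edge $(u_1,v_1)$ is itself present), so they are genuine breakpoints in the integration domain. Their relative order decides which three of the four branches are actually traversed as $\ell_{p_1}$ grows from $0$ to $d_E(u_1,v_1)$: when $\lambda_{u_2} \leq \lambda_{v_2}$ the branches are cases $1$, $3$, $4$ in that order, and when $\lambda_{u_2} > \lambda_{v_2}$ they are cases $1$, $2$, $4$. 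Splitting the integral at $\lambda_{u_2}$ and $\lambda_{v_2}$ accordingly produces the two three-term expressions stated in the lemma.

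The routine part is verifying integrability: $\hat{S}_{u_2v_2}(p_1)$ is bounded and piecewise continuous in $\ell_{p_1}$, and is in fact continuous at $\lambda_{u_2}$ and $\lambda_{v_2}$ because, by Definition~\ref{def:BreakEvenPoint}, the path lengths through $u_1$ and through $v_1$ coincide there. The main obstacle is purely bookkeeping: correctly matching each subinterval of $\ell_{p_1}$ to the appropriate branch of Lemma~\ref{lem:TotalStrPtLk}, handling the degenerate configurations where a breakpoint coincides with an endpoint or where $\lambda_{u_2}=\lambda_{v_2}$ (so a middle piece is empty), and confirming that no further breakpoint is ever required, so that exactly the two orderings of $\lambda_{u_2}$ and $\lambda_{v_2}$ exhaust the general case.
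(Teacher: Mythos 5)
Your proposal is correct and follows essentially the same route as the paper: it starts from the same defining integral, dispatches the two special cases identically (your triangle-domain integration for the same-edge case is just a more explicit rendering of the paper's divide-by-two to avoid double-counting pairs), and splits the outer integral at the break-even distances $\lambda_{u_2}$ and $\lambda_{v_2}$, whose ordering selects which three of the four branches of Lemma~\ref{lem:TotalStrPtLk} are traversed. Your added checks (that both break-even distances lie in $[0,d_E(u_1,v_1)]$ and that cases $1,3,4$ versus $1,2,4$ are the ones visited) are correct refinements of the paper's terser argument, not a different method.
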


Being a double integration of the point-to-point Straightness over both considered edges, $\hat{S}_{u_2v_2}(u_1,v_1)$ is symmetric with respect to these edges, i.e. $\hat{S}_{u_2v_2}(u_1,v_1)=\hat{S}_{u_1v_1}(u_2,v_2)$. 

\begin{proof}[Proof of Lemma~\ref{lem:TotalStrLkLk}]
	The definition of the total Straightness $\hat{S}_{u_2v_2}(u_1,v_1)$ between two edges $(u_1,v_1)$ and $(u_2,v_2)$ is:
	\begin{equation}
		\hat{S}_{u_2v_2}(u_1,v_1) = \int_0^{d_E(u_1,v_1)} \hat{S}_{u_2v_2}(p_1) \mathrm{d}\ell_{p_1},
		\label{eq:AvgStrInt2}
	\end{equation}
	where $p_1$ is the point used to integrate over $(u_1,v_1)$ and $\ell_{p_1}$ is its distance to $u_1$.
	
	We first consider the two particular cases. According to Lemma~\ref{lem:TotalStrPtLk}, if $(u_1,v_1) = (u_2,v_2)$, then $\hat{S}_{u_2v_2}(p_1)=d_E(u_2,v_2)$ for any $p_1$ on $(u_1,p_1)$. Consequently, the integral amounts to $d_E(u_1,v_1) \cdot d_E(u_2,v_2)$. However, each pair of vertices is accounted for twice, since $(u_1,v_1)$ and $(u_2,v_2)$ are the same edge. For this reason, we take only half this product and get $\hat{S}_{u_2v_2}(u_1,v_1) = d_E(u_1,v_1) \cdot d_E(u_2,v_2)/2 = d_E(u_1,v_1)^2/2$. By the same Lemma, if $p_1$ and $(u_2,v_2)$ are not connected by any path, then $S_{u_2,v_2}(p_1)=0$ for any $p_1$ on $(u_1,p_1)$. Thus, $\hat{S}_{u_2v_2}(u_1,v_1) = 0$, too.
	
	We now focus on the general case. Let $\lambda_{u_2}$ and $\lambda_{v_2}$ be the break-even distances of $(u_1,v_1)$ for $u_2$ and $v_2$, respectively. From Lemma~\ref{lem:TotalStrPtLk}, we know that $S_{u_2v_2}(p_1)$ is a three-part piecewise function. It can therefore be integrated as the sum of the integrals of its three pieces. Moreover, Lemma~\ref{lem:TotalStrPtLk} also tells us it can take two different forms, depending on how $\lambda_{u_2}$ compares to $\lambda_{v_2}$ (i.e. either $\lambda_{u_2} \leq \lambda_{v_2}$ or $\lambda_{u_2} > \lambda_{v_2}$). This leads to the expression given in Lemma~\ref{lem:TotalStrLkLk}.
\end{proof}

\begin{definition}[Average Straightness between two edges]
	\label{def:AvgStrLkLk}
	The continuous average Straightness between all the points constituting an edge $(u_1,v_1) \in E$ on one side, and all the points constituting an edge $(u_2,v_2) \in E$ on the other side, is:
	\begin{equation}
		S_{u_2v_2}(u_1,v_1) =
		\begin{cases}
			\displaystyle \frac{\hat{S}_{u_2v_2}(u_1,v_1)}{d_E(u_1,v_1)^2/2} = 1,& \text{if } (u_1,v_1) = (u_2,v_2),\\
			\displaystyle \frac{\hat{S}_{u_2v_2}(u_1,v_1)}{d_E(u_1,v_1)d_E(u_2,v_2)},& \text{otherwise.}
		\end{cases}
	\end{equation}
\end{definition}

When we consider the straightness between an edge and itself, the division by 2 prevents from counting each pair of points twice. Note that like $\hat{S}_{u_2v_2}(u_1,v_1)$, the measure $S_{u_2v_2}(u_1,v_1)$ is symmetric relatively to the concerned edges, i.e. $S_{u_2v_2}(u_1,v_1)=S_{u_1v_1}(u_2,v_2)$. As illustrated later in Section~\ref{sec:ComparisonNode}, $S_{u_2v_2}(u_1,v_1)$ can be used to characterize the mutual accessibility of the considered edges.

\begin{definition}[Average Straightness between an edge and the graph]
	\label{def:AvgStrLkGr}
	The continuous average Straightness between an edge $(u_1,v_1) \in E$ and all the points constituting the graph $G$ is:
	\begin{equation}
		S_G(u_1,v_1) = 
			\frac{\displaystyle \sum_{(u_2,v_2) \in E} \hat{S}_{u_2v_2}(u_1,v_1)}
			{\displaystyle  \sum_{(u_2,v_2) \in E} d_E(u_2,v_2)d_E(u_1,v_1) - d_E(u_1,v_1)^2/2}.
	\label{eq:AvgStrLkGr}
	\end{equation}
\end{definition}

Like with Definition~\ref{def:AvgStrPtGr}, this average Straightness can be used as a centrality measure, but this time to describe edges in place of vertices. The value $S_{G}(u_1,v_1)$ represents the accessibility of the edge from the points constituting the whole graph. 

Note that Equation (\ref{eq:AvgStrLkGr}) includes the points constituting the edge of interest itself (i.e. the Straightness between the edge and itself). If one wants to ignore them, one should use the following expression instead:
\begin{equation}
	S_G(u_1,v_1) = 
		\frac{\displaystyle \sum_{(u_2,v_2) \neq (u_1,v_1)} \hat{S}_{u_2v_2}(u_1,v_1)}
		{\displaystyle  \sum_{(u_2,v_2) \neq (u_1,v_1)} d_E(u_2,v_2)d_E(u_1,v_1)}.
\end{equation}

\begin{definition}[Average Straightness for the whole graph]
	\label{def:AvgStrGr}
	The continuous average Straightness between all pairs of points constituting the graph $G$ is:
	\begin{equation}
		S_G(G) = 
			\frac{\displaystyle \sum_{(u_1,v_1) \in E} \sum_{(u_2,v_2) \geq (u_1,v_1)} \hat{S}_{u_2v_2}(u_1,v_1)}
			{\displaystyle  \sum_{(u_1,v_1) \in E} \sum_{(u_2,v_2) \geq (u_1,v_1)} d_E(u_2,v_2)d_E(u_1,v_1) - \sum_{(u_1,v_1) \in E} d_E(u_1,v_1)^2/2},
	\label{eq:AvgStrGr}
	\end{equation}
	where $\geq$ corresponds to the lexicographical order between edges.
\end{definition}

Again, we include in Equation (\ref{eq:AvgStrGr}) the paths between points located on the same edge. If one wants to ignore them, one should use instead:
\begin{equation}
	S_G(G) = 
		\frac{\displaystyle \sum_{(u_1,v_1) \in E} \sum_{(u_2,v_2) > (u_1,v_1)} \hat{S}_{u_2v_2}(u_1,v_1)}
		{\displaystyle  \sum_{(u_1,v_1) \in E} \sum_{(u_2,v_2) > (u_1,v_1)} d_E(u_2,v_2)d_E(u_1,v_1)},
\end{equation}
where $>$ corresponds to the lexicographical order between edges.

\section{Algorithmic Complexity and Discrete Approximations}
\label{sec:ComplexityApproximation}
To summarize the previous section: we introduced $5$ distinct continuous average Straightness measures, allowing to characterize various elements of a spatial graph (vertices, edges and the whole graph). 
As a reminder, they are described in Table \ref{tab:Complexities} (top $5$ rows).

In this section, we first derive the algorithmic complexity of these $5$ variants. Then, we describe how the two traditional approaches for averaging the Straightness can be considered as discrete approximations of $S_G(u)$ (average Straightness between a vertex $u$ and the graph $G$) and $S_G(G)$ (average Straightness over the whole graph), and discuss their complexity.

We implemented all our variants of the continuous average Straightness, as well as both discrete approximations, using the R language, and based on the R version of the igraph library \parencite{Csardi2006}. The interest of this library is that it offers an easy access to graph-related operations through its R interface, while providing fast computations thanks to its underlying C implementation. Our source code is publicly available online\footnote{\url{https://github.com/CompNet/SpatialMeasures}}.

\subsection{Continuous Average Straightness}
\label{sec:AlgorithmicComplexity}
\paragraph{Straightness of a Point} 
Let us first consider the simplest of the $5$ variants: $S_{uv}(p)$, i.e. the average Straightness between a point $p$ and an edge $(u,v)$. Processing $S_{uv}(p)$ requires both the Euclidean and graph distances between the point $p$ and $u$ and $v$, the two end-vertices of the edge, as well as the Euclidean distance between $u$ and $v$ (i.e. the edge length). The Euclidean distances can be obtained in constant time. However, processing the graph distances is computationally more demanding. The igraph library uses Dijkstra's algorithm for this purpose, and the implementation it proposes runs in $O(m \log m + n)$ according to its documentation, 
where $n$ and $m$ are the numbers of vertices and edges in the graph, respectively. 
The next step is to integrate the auxiliary function $f$ (Definition \ref{def:AuxiliaryFunction}) over the edge, which can be done in constant time. Moreover, we used Wolfram Mathematica to process the closed form of $F$, its antiderivative, and implemented it as an R function in order to speed up the calculations, so this step is very short in practice. The total time complexity is therefore of $O(m \log m + n)$ for this measure, while the space complexity is $O(m + n)$, i.e. the memory required to represent the graph. It is interesting to remark that if the graph is \textit{sparse} (i.e. $m=O(n)$), which (by Euler's Formula) is the case of planar graphs, then these expressions become $O(n \log n)$ and $O(n)$, respectively.

When processing $S_G(p)$, the average Straightness between a point $p$ and the whole graph $G$, we can repeat roughly the same process over all edges. So, for a graph containing $m$ edges, we get a time complexity of $O(m(m \log m + n))=O(m^2 \log m + mn)$ (or $O(n^2 \log n)$ for a sparse graph), while the space complexity stays $O(m + n)$. However, it is worth noticing that, with this approach, some of the distances will be processed several times (since a vertex can be attached to several distinct edges), and as we already mentioned, this has a high computational cost. So, instead of processing the graph distances between $p$ and the end-vertices on-demand, i.e. at each of the $m$ iterations, an alternative approach consists in computing them separately, and once and for all. This can also be done using Dijkstra's algorithm, resulting in a time complexity of $O((m \log m + n) + m)=O(m \log m + n)$ (or $O(n \log n)$ for a sparse graph). The space complexity is the same, although in practice this method requires more memory, since we need to store all $n$ distance values. Our implementation proposes both these versions, since the slower one allows handling large graphs, for which the faster one takes too much memory. As we will see, the same choice can be made for all the other average Straightness variants involving loops over vertices or edges.
 
\paragraph{Straightness of an Edge} 
We now focus on $S_{u_2v_2}(u_1,v_1)$, the average Straightness between two edges $(u_1,v_1)$ and $(u_2,v_2)$. We need the Euclidean and graph distances between their end-vertices: like before, this step has a time complexity of $O(m \log m + n)$. Because we must now average over two edges, function $f$ has to be integrated twice. As mentioned before, we already have the closed form of $F$, its antiderivative. However, we were not able to obtain the antiderivative of $F$ itself, which is why we used a numerical approach to perform this integration, through the default R \texttt{integrate} function. Note that, from the computational complexity perspective, this numerical integration is still considered to be performed in constant time, but in practice it takes significantly longer. In total, we have a time complexity of $O(m \log m + n)$ for this measure (the same than for $S_{uv}(p)$, although it actually takes longer to process) and a space complexity of $O(m + n)$. If the graph is sparse, we respectively get $O(n \log n)$ and $O(n)$ instead.

When processing $S_G(u,v)$, the average Straightness between an edge $(u,v)$ and the graph $G$, we roughly repeat the same process for all $m$ edges. So, its time complexity is $O(m(m \log m + n))=O(m^2 \log m + mn)$ (or $O(n^2 \log n)$ if the graph is sparse), i.e. it is similar to that of $S_G(p)$. The space complexity is also similar, i.e. $O(m+n)$. Like before, if we have enough memory to process the graph distances separately, the time complexity can be reduced to $O(m \log m + n)$ (or $O(n \log n)$ for sparse graphs). This approach requires storing all $2n$ additional values, but this does not increase the asymptotic space complexity. 

\paragraph{Straightness of the Graph} 
Finally, processing $S_G(G)$, the average Straightness over the whole graph $G$, is also a similar process, except this time the sum is performed over $O(m^2)$ pairs of edges. This results in a time complexity of $O(m^2 (m \log m + n))=O(m^3 \log m + m^2n)$ (or $O(n^3 \log n)$ for sparse graphs). The space complexity stays the same, i.e. $O(m + n)$ in general, and $O(n)$ for sparse graphs. Like before, we can speed things up by computing separately the vertex-to-vertex distances. However, this time we need to consider $O(n^2)$ pairs of vertices. The igraph libray can process them in $O(mn)$, according to its documentation, 
leading to a total time complexity of $O(m^2 + mn)$ (or $O(n^2)$ for a sparse graph). Memory-wise, igraph needs a $n \times n$ matrix to store all the distance values, resulting in a total space complexity of $O(m + n^2)$ (or $O(n^2)$ for sparse graphs).
The first $5$ rows of Table \ref{tab:Complexities} summarize these calculations \textit{for sparse graphs}.

\begin{table}[!th]
    \caption{Summary of the considered measures: the $5$ variants of continuous average Straightness proposed in Section~\ref{sec:ContinuousAverageStraightness}, and the $2$ discrete approximations described in Section~\ref{sec:DiscreteApproximations}. The algorithmic complexity of certain measures is described by two expressions: the second corresponds to the case where the graph distances are processed separately (as explained in the text, this speeds up the calculations but requires more memory). All expressions describe the complexity for \textit{sparse} graphs (the general case is described in the text).}
	\label{tab:Complexities}
	\centering
	\begin{tabular*}{\textwidth}{@{\extracolsep{\fill}}l@{}p{8cm}@{}p{2.5cm}@{}p{1.5cm}@{}}
		\hline
		Measure & Description & Complexity & \\
		 &  & Time & Space \\
		\hline
		$S_{uv}(p)$ & Continuous average Straightness between a point $p$ and an edge $(u,v)$ (Definition~\ref{def:AvgStrPtLk}). & $O(n \log n)$ & $O(n)$ \\
		$S_G(p)$ & Continuous average Straightness between a point $p$ and the graph $G$ (Definition~\ref{def:AvgStrPtGr}). & $O(n^2 \log n)$ \newline $O(n \log n)$ & $O(n)$ \newline $O(n)$ \\
		$S_{u_2v_2}(u_1,v_1)$ & Continuous average Straightness between two edges $(u_1,v_1)$ and $(u_2,v_2)$ (Definition~\ref{def:AvgStrLkLk}). & $O(n \log n)$ & $O(n)$ \\
		$S_G(u,v)$ & Continuous average Straightness between an edge $(u,v)$ and the graph $G$ (Definition~\ref{def:AvgStrLkGr}). & $O(n^2 \log n)$ \newline $O(n \log n)$ & $O(n)$ \newline $O(n)$ \\
		$S_G(G)$ & Continuous average Straightness over all pairs of points constituting the graph $G$ (Definition~\ref{def:AvgStrGr}). & $O(n^3 \log n)$ \newline $O(n^2)$ & $O(n)$ \newline $O(n^2)$ \\
		\hline
		$\sigma_\theta(u)$ & Discrete approximation of $S_G(p)$ (Section~\ref{sec:DiscreteApproximations}). & $O(\theta^2 n^2 \log \theta n)$ \newline $O(\theta n \log \theta n)$ & $O(\theta n)$ \newline $O(\theta n)$ \\
		$\sigma_\theta(G)$ & Discrete approximation of $S_G(G)$ (Section~\ref{sec:DiscreteApproximations}). & $O(\theta^3 n^3 \log \theta n)$ \newline $O(\theta^2 n^2)$ & $O(\theta n)$ \newline $O(\theta^2 n^2)$ \\
		\hline
	\end{tabular*}
\end{table}

\subsection{Discrete Approximations}
\label{sec:DiscreteApproximations}
As explained in the introduction, by comparison with the continuous point-to-point approach that we propose in this article, the traditional approach consists in averaging vertex-to-vertex Straightness values, either by fixing one vertex $u$ and considering all the others in the graph $G$ (which allows characterizing this vertex), or by considering all pairs of vertices in the graph or a subgraph (which allows characterizing the graph or subgraph). For convenience, we note these measures $\sigma(u)$ and $\sigma(G)$, respectively. They can be considered as very rough approximations of $S_G(u)$ and $S_G(G)$, respectively, in which a vertex is used as a proxy for the whole edge to which it is attached. However, note that the three other average measures which we introduced have no counterparts in previous works. 


\paragraph{Approximation Method} 
\textcite{Josselin2016} proposed a method allowing to obtain more reliable approximations than $\sigma(u)$ and $\sigma(G)$. We use it in the experimental part of this article (Section \ref{sec:ComparisonDiscrete}). This method is based on the discretization of the original edges. First, additional $2$-degree vertices are added on the graph edges, in a way such that no remaining edge is longer than some predefined parameter $\varepsilon$. This amounts to splitting the edges into shorter pieces, resulting in a modified graph $G'=(V',E')$ (with $V \subseteq V'$). Second, the classic vertex-to-vertex average Straightness values $\sigma(u)$ and $\sigma(G')$ are processed on this modified graph. For the former, one fixes a vertex $u \in V$ in the original graph $G$, averages over $V'$ in the modified graph $G'$ and gets an approximation of $S_G(u)$. For the latter, one averages over all pairs of vertices in $V'$ and obtains an approximation of $S_G(G)$.

The smaller the $\varepsilon$ value and the better the approximation, but this also has an effect on the processing time and memory usage, since there are more and more additional vertices and edges to handle. In the following, instead of using $\varepsilon$ to characterize the discretization, we prefer to use the \textit{average edge segmentation}, noted $\theta$, which corresponds to the average number of times the original edges are split. For instance, a value of $\theta = 4$ means an edge of the original graph $G$ is split in $4$ segments, in average, in the modified graph $G'$. We note $\sigma_\theta(u)$ and $\sigma_\theta(G)$ the discrete approximations of $S_G(u)$ and $S_G(G)$ obtained with an \textit{average edge segmentation} of $\theta$, respectively. According to this notation, we have $\sigma_0(u) = \sigma(u)$ and $\sigma_0(G) = \sigma(G)$, respectively. In the rest of this section, we derive the algorithmic complexity of these measures by first considering separately the discretization process and the computation of $\sigma(u)$ and $\sigma(G')$ on the resulting modified graph, before combining them to get the total expressions.

\paragraph{Complexity of the Discretization Process} 
The process of discretizing the graph edges requires to split each edge into $\theta$ smaller parts. For this purpose, the first step is to build a new graph containing only the original vertices (without any edge). Then, the second step would be to iteratively process each original edge, inserting new vertices at the splitting points and adding the corresponding new edges to connect them. 

However, with the igraph library, the asymptotic complexity of inserting new objects in the graph does not depend on the number of inserted objects, be it vertices or edges. Inserting one or several new vertices runs in $O(n)$, 
whereas for new edges it is $O(m+n)$, 
according to the official documentation. So, instead of inserting new vertices and edges at each iteration, it is faster to iteratively build the lists of new vertices and edges for the whole graph, and insert them all simultaneously, once all the original edges have been processed. This way, the time complexity of the whole discretization process is $O(m + n + (m+n)))=O(m + n)$. The graph $G'$ resulting from the discretization contains $n' = n+(\theta-1) m$ vertices and $m' = \theta m$ edges, so we have $n' = O(n +\theta m)$ and $m' = O(\theta m)$. The discretization process requires to represent simultaneously the original ($G$) and modified ($G'$) graphs, as well as the lists of new vertices and edges, so its space complexity is $O((n+m)+((n+\theta m) + \theta m))=O(n + \theta m)$.

\paragraph{Complexity of the Averaging Process} 
Processing the Straightness between two vertices requires computing both the graph and Euclidean distances between them. As explained before, the former can be handled in constant time, whereas the latter has a $O(m \log m + n)$ time complexity in igraph. Processing $\sigma(u)$, the discrete average Straightness between a fixed vertex $u$ and all the other vertices in the original graph $G$, consists in repeating the same process $n$ times, so we get a total time complexity of $O(n(m \log m + n)) = O(mn \log m + n^2)$. The space complexity is $O(m + n)$, which corresponds to the memory required to represent the graph. Like for the continuous average measures, if the available memory is large enough, one can process all the necessary distances separately, using Dijkstra's algorithm. This leads to a total time complexity of $O((m \log m + n) + n)=O(m \log m + n)$. Memory-wise, we need to store $n$ distance values, in addition to the graph, so the memory usage increases in practice, but the asymptotic space complexity stays $O(m + n)$. 

For $\sigma(G)$, the discrete average Straightness over the \textit{whole} graph, the average must be processed over $O(n^2)$ pairs of vertices, resulting in a time complexity of $O(n^2(m \log m + n)) = O(mn^2 \log m + n^3)$, while the space complexity is $O(m + n)$. If there is enough memory to process all vertex-to-vertex graph distances separately, the time complexity can be decreased to $O((m \log m + n) + n^2) = O(m \log m + n^2)$ whereas the space complexity increases to $O(m + n^2)$.

\paragraph{Total Complexity} 
Based on our previous calculations, we can now derive the total complexity of $\sigma_\theta(u)$ and $\sigma_\theta(G)$: we add the cost of the discretization process to that of the processing of $\sigma(u)$ or $\sigma(G')$ \textit{on the modified graph} $G'$. 
For $\sigma_\theta(u)$, we get a time complexity of $O((m+n) + (m'n' \log m' + n'^2)) = O((n + \theta m) \theta m \log \theta m + n^2)$ and a space complexity of $O((\theta m + n) + (m' + n')) = O(\theta m + n)$. When considering sparse graphs, these expressions can be simplified to $O(\theta^2 n^2 \log \theta n)$ and $O(\theta n)$, respectively.
If the distances are processed separately, the time complexity becomes 
$O((m+n) + (m' \log m' + n')) = O(\theta m \log \theta m + n)$ (or $O(\theta n \log \theta n)$ for sparse graphs), whereas the space complexity stays the same.

For $\sigma_\theta(G)$, the time complexity is $O((m+n) + (m'n'^2 \log m' + n'^3)) = O((\theta m n^2 + \theta^3 m^3) \log \theta m + n^3)$ and the space complexity is $O((m+n) + (m'+n')) = O(\theta m + n)$. When considering sparse graphs, these expressions can be simplified to $O(\theta^3 n^3 \log \theta n)$ and $O(\theta n)$, respectively.
If the distances are processed separately, the time complexity decreases to $O((m+n) + (m' \log m' + n'^2)) = O(\theta^2 m^2 + n^2)$ and the space complexity increases to $O((m+n) + (m'+n'^2)) = O(\theta^2 m^2 + n^2)$. For sparse graphs, we can simplify both these expressions to $O(\theta^2 n^2)$.
The last $2$ rows of Table \ref{tab:Complexities} summarize the main complexity results for these discrete approximations, on \textit{sparse graphs}.

\subsection{Comparison and Practical Considerations}
In this subsection, we compare the complexities of the continuous average Straightness measures $S_G(p)$ and $S_G(G)$, with those of their respective discrete approximations $\sigma_\theta(u)$ and $\sigma_\theta(G)$. We then discuss some important practical computational aspects which do not appear when only considering the asymptotic complexity.

\paragraph{Complexity Comparison}
We first consider the average Straightness for a fixed vertex. With the method consisting in computing the distances on-demand, we have a time complexity of $O(m^2 \log m + mn)$ for $S_G(u)$ vs. $O((n + \theta m) \theta m \log \theta m + n^2)$ for $\sigma_\theta(u)$. The latter can be rewritten as $O(\theta^2 m^2 \log \theta m + (\theta m \log \theta m +n)n)$, and the discrete approximation has therefore a larger time complexity. When processing the distances separately, we have $O(m \log m + n)$ vs. $O(\theta m \log \theta m + n)$, so the conclusion is the same. The space complexities are $O(m+n)$ vs. $O(\theta m + n)$, independently from the way we handle the distances, so again, the continuous approach has a lower complexity.

We now switch to the average Straightness for the whole graph. When processing the distances on-demand, the time complexity is $O(m^3 \log m + m^2 n)$ for $S_G(G)$ vs. $O((\theta m n^2 + \theta^3 m^3) \log \theta m + n^3)$ for $\sigma_\theta(G)$. Unlike before, these expressions cannot be compared directly. We will consider instead separately the three cases corresponding to the possible orders of magnitude of $m$: $O(1)$ (quasi-empty graph), $O(n)$ (sparse graph) and $O(n^2)$ (dense graph). In the first case, the time complexity of $S_G(G)$ simplifies to $O(n)$ whereas that of $\sigma_\theta(G)$ contains a $n^3$ term. In the second case, we obtain $O(n^3 \log n)$ vs. $O(\theta^3 n^3 \log \theta n)$ (cf. Table \ref{tab:Complexities}). In the third case, it is $O(n^6 \log n^2)$ vs. $O(\theta^3 n^6 \log \theta n^2)$. Therefore, in all cases the complexity is lower for the continuous average. Memory-wise, we have $O(m + n)$ for $S_G(G)$ vs. $O(\theta m + n)$ for $\sigma_\theta(G)$, so the space complexity is also lower for the continuous average.

When processing the distances separately, the time complexities are $O(m^2 + mn)$ for $S_G(G)$ vs. $O(\theta^2 m^2 + n^2)$ for $\sigma_\theta(G)$. By using the same approach as before (based on the three possible orders of magnitude of $m$), we conclude the time complexity of the continuous average is lower. For the space complexity, we have $O(m + n^2)$ vs. $O(\theta^2 m^2 + n^2)$, so the complexity is also lower here for the continuous average. In the end, the asymptotic complexities of the continuous average measures are always lower than for their discrete approximations, be it in terms of processing time or memory usage. 

\paragraph{Memory usage}
It is important to notice that, for all considered measures (including the $5$ continuous variants and $2$ discrete approximations), the most costly step in terms of both computational time and memory usage, is the processing and storage of the graph distances. In comparison, the computation of the average Straightness itself is negligible. As explained before, we can speed up the overall processing by computing the graph distances separately, i.e. before iterating over the concerned vertices or edges. However, this requires to store all these data, which in turn can raise a memory problem. The need to use this extra space does not necessarily affect the asymptotic space complexity, but in practice the memory usage still increases significantly. 

For instance, storing all the vertex-to-vertex distances for a $100,000$ vertices graph requires approximately $75$~GB of RAM on a standard programming environment. And this is in addition to the memory already used to represent the graph itself. Processing distances on-demand allows saving a lot of memory, at the price of some extra processing, since the same distances will be processed several times when iterating over the vertices or edges. The difference is that the memory restriction is a hard limit, whereas one can always decide to let the program run longer. So in practice, memory usage, and not processing time, quickly becomes the limitation preventing from handling large graphs.

\section{Empirical Validation}
\label{sec:EmpiricalValidation}
In this section, we study experimentally our $5$ variants of the continuous average Straightness. First, in Section~\ref{sec:ComparisonDiscrete}, we compare the performance of our continuous average variants with that of their discrete counterparts, in terms of computational time, memory usage and precision. Second, in Section~\ref{sec:ComparisonNode} we describe their behavior on a collection variety of graphs, and highlight how it differ from the traditional vertex-to-vertex average Straightness. We summarize and discuss our main findings in Section~\ref{sec:FinalObs}.

\subsection{Performance Study}
\label{sec:ComparisonDiscrete}
We first empirically evaluate the performance of our continuous approach in terms of computational time and memory usage, and compare it to its discrete approximations. Our study is conducted on artificially generated networks.

\paragraph{Methods} 
We randomly generate spatial graphs using the following method. First, we draw $n$ vertices uniformly distributed over a predefined square. The exact dimension of this square is unimportant, since the Straightness is a ratio of distances. Second, we connect these vertices using two alternative methods: on the one hand, Delaunay's triangulation \parencite{Delaunay1934} to obtain a planar graph, and on the other hand, the Erd\H{o}s-R\'enyi model \parencite{Erdos1959} to get a non-planar one. We do not observe any noticeable difference between the performances obtained on both types of graphs. Moreover, the Straightness is generally used to characterize physical networks modeled by planar graphs, such as road networks. For these reasons, in the rest of this section we only present the detailed results obtained for the planar graphs.

To get representative results, we generate $10$ instances of graphs for $n=10$, $25$, $50$ and $100$. We observe a similar behavior for all the considered graph sizes $n$, so in the rest of this section we only show the $n=50$ plots. For each generated graph, we compute two variants of the average continuous Straightness: $S_G(u)$, which characterizes an individual vertex $u$ by its average Straightness with the rest of the points constituting the graph $G$ (Definition~\ref{def:AvgStrPtGr}), and $S_G(G)$, which characterizes the whole graph by the average Straightness between all pairs of its constituting points (Definition~\ref{def:AvgStrGr}). We also process their respective discrete approximations $\sigma_\theta(u)$ and $\sigma_\theta(G)$ using an increasing average edge segmentation $\theta$. In other words, we compute vertex-to-vertex Straightness with an increasing number of vertices. This includes $\sigma(u)$ and $\sigma(G)$, the discrete average processed on the original graph (i.e. we start with $\theta=0$ then increase $\theta$). As explained in Section~\ref{sec:ComplexityApproximation}, we have proposed two approaches to process each one of these four measures: one is time-efficient, the other is memory-efficient. In this part of our experiments, we used only the fastest method, since the size of the graphs allows it.
\begin{figure}[ht!]
	\centering
	\includegraphics[width=0.49\textwidth]{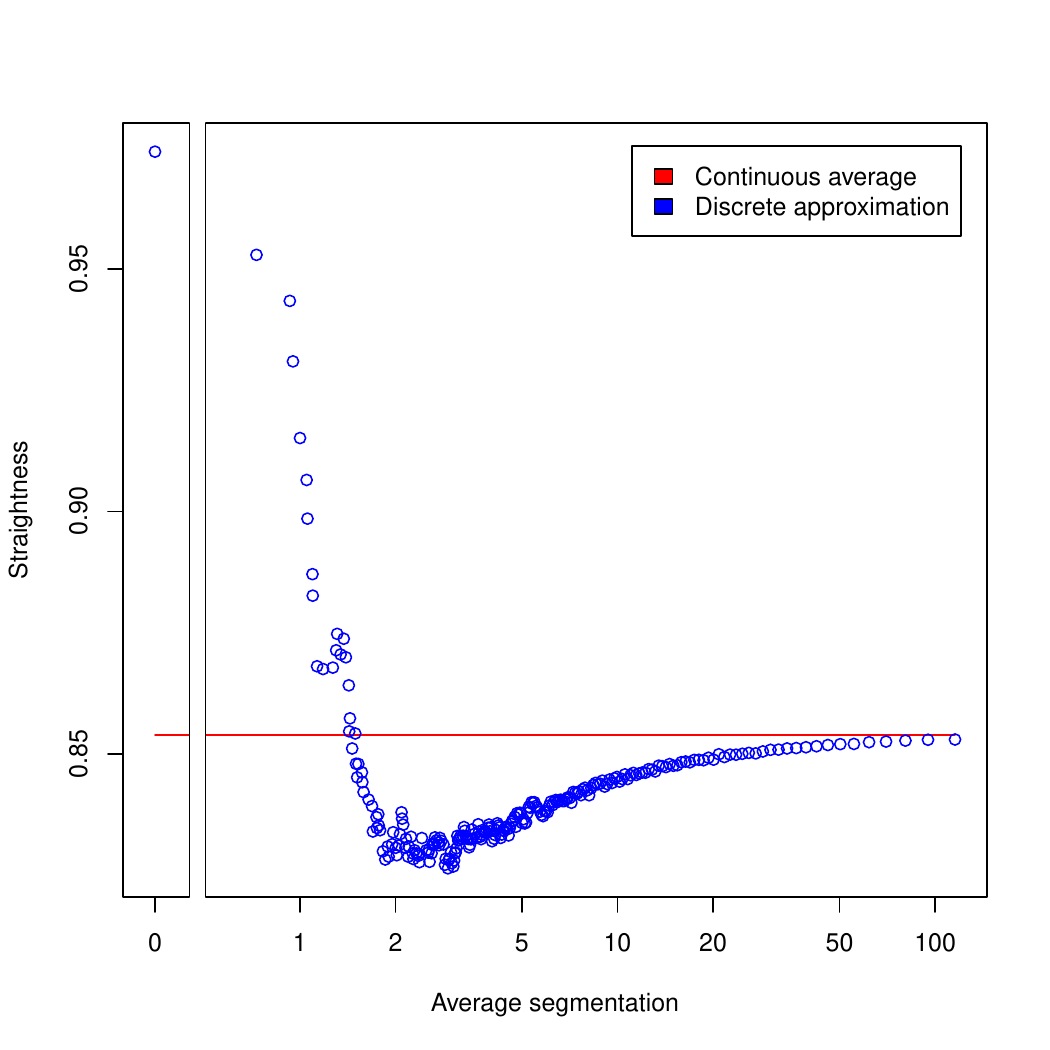}
	\hfill
	\includegraphics[width=0.49\textwidth]{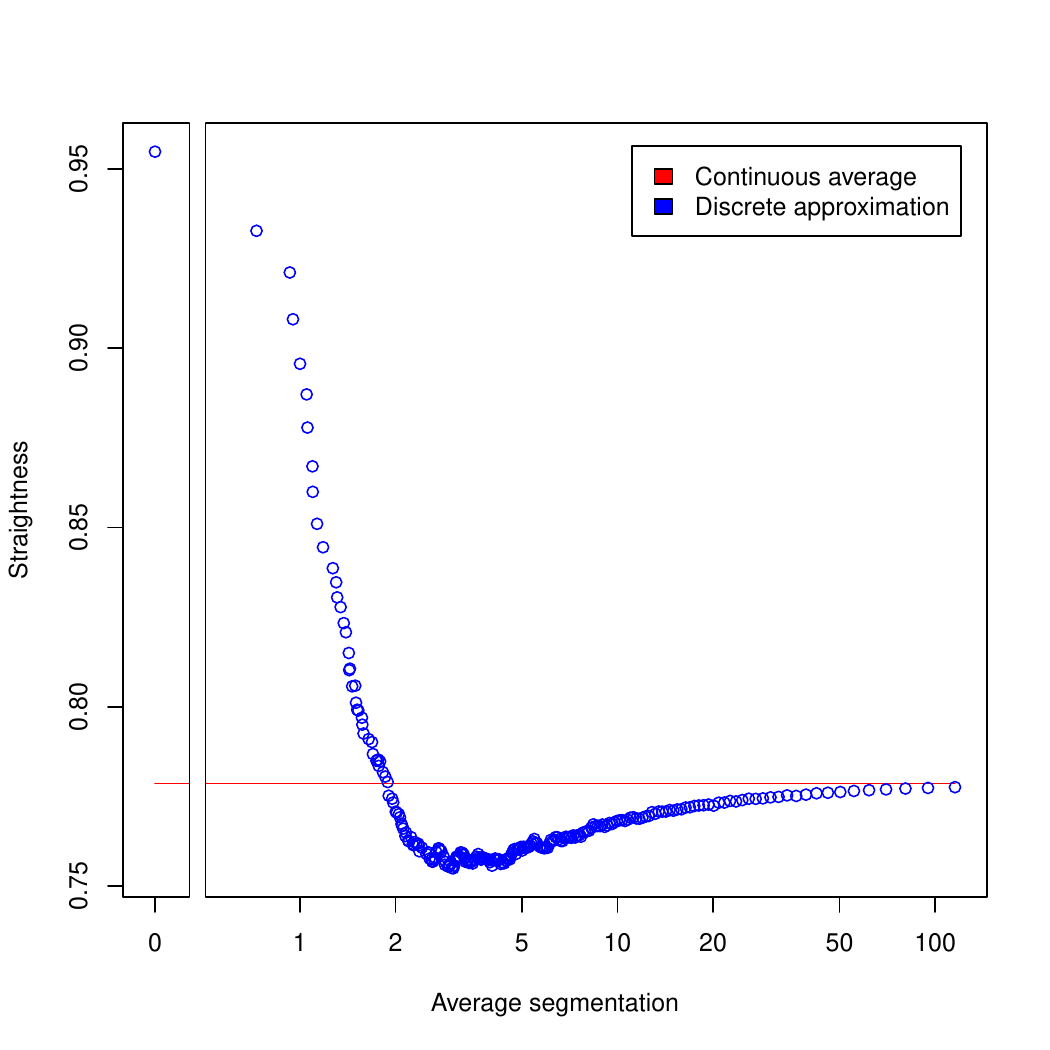}
    \caption{Comparison of the Straightness obtained through the discrete approximation (blue) and the continuous average (red), on random planar graphs. The values are displayed for a given vertex ($S_G(u)$ and $\sigma_\theta(u)$, left) and a given graph ($S_G(G)$ and $\sigma_\theta(G)$, right), as functions of the average edge segmentation $\theta$.}
    \label{fig:ExpRandplanarStraightness}
\end{figure}

\paragraph{Quality of Approximation} 
Figure~\ref{fig:ExpRandplanarStraightness} compares the continuous average Straightness (in red) and its discrete approximation (in blue) for a \textit{single} vertex ($S_G(u)$ and $\sigma_\theta(u)$, left plot) and a \textit{single} graph instance ($S_G(G)$ and $\sigma_\theta(G)$, right plot). These values are plotted as functions of the average edge segmentation $\theta$. Of course, only the discrete approximations depend on this parameter, so the continuous averages take the form of straight lines. Remark that both $x$ axes use a logarithmic scale, which is also the case for the other plots presented in this section. This explains why the values obtained for $\theta=0$ are separated from the rest of the data.

We selected these two specific plots because they are very representative of the other vertices and graphs generated during this experiment. As mentioned before, we ran the same experiments $10$ times, so we could have represented all of them on the same plots. However, doing so is not illustrative at all, since we can get very different Straightness values from one run to the other, due to the stochastic nature of our graph generation process. This is why we first focus on a specific vertex and a specific graph in this figure, whereas the results for all $10$ runs are presented under a different form later, in Figure~\ref{fig:ExpRandplanarDiff}.

Our first observation is that, when averaged over only the original vertices (i.e. for an average edge segmentation of $\theta=0$), $\sigma(u)$ and $\sigma(G)$ are poor approximations of $S_G(u)$ and $S_G(G)$. As explained before, these values are separated from the others on the plots, due to the logarithmic scale. They generally reach very high Straightness values (close to $1$), as illustrated by the examples of Figure~\ref{fig:ExpRandplanarStraightness}. The approximation gets better when we start discretizing the edges: the values quickly decrease. However, they drop to the point where they underestimate the continuous average. Then, they get asymptotically closer to it when the average edge segmentation increases, i.e. when the edges are split into smaller and smaller pieces. Incidentally, this is a good empirical validation of our implementation of the continuous average Straightness, since there seems to be convergence.
\begin{figure}[ht!]
	\centering
	\includegraphics[width=0.49\textwidth]{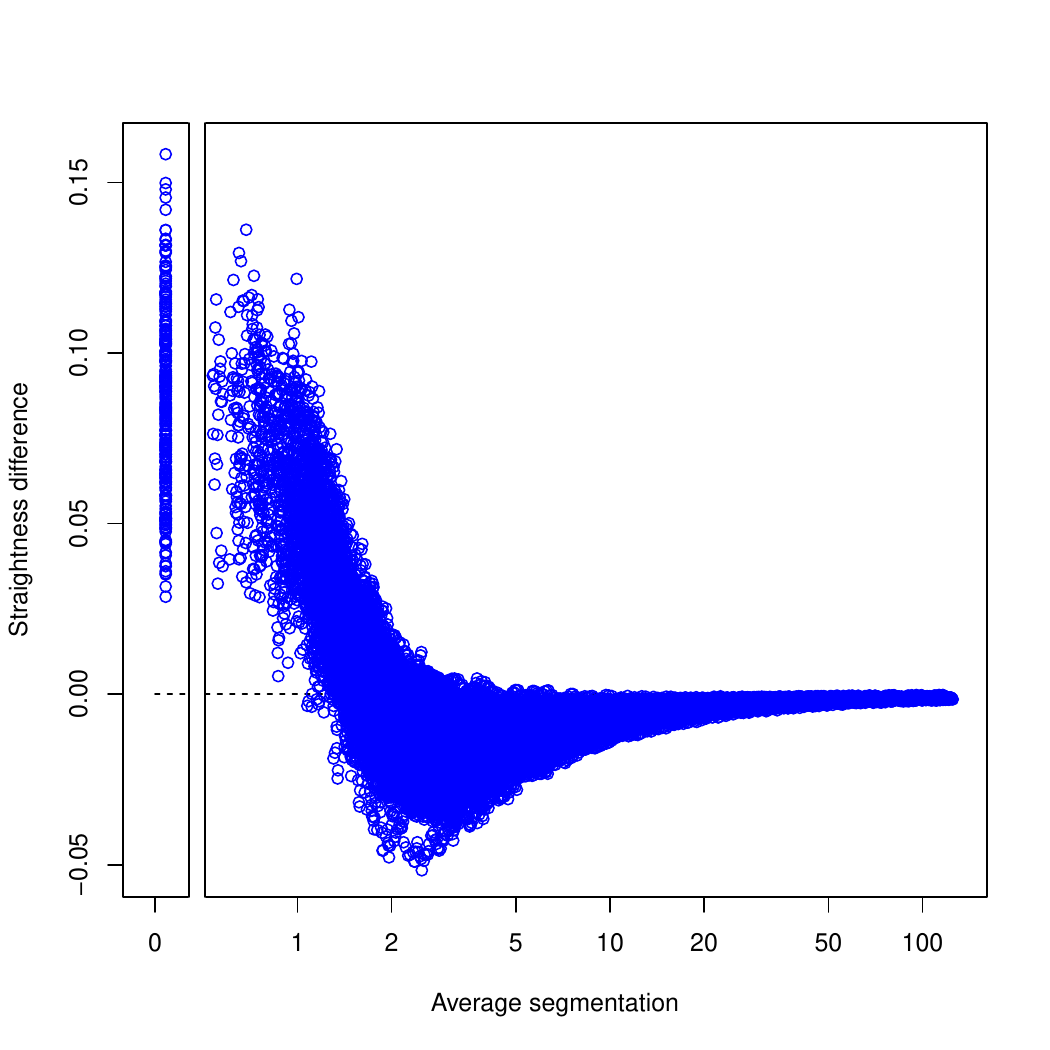}
	\hfill
	\includegraphics[width=0.49\textwidth]{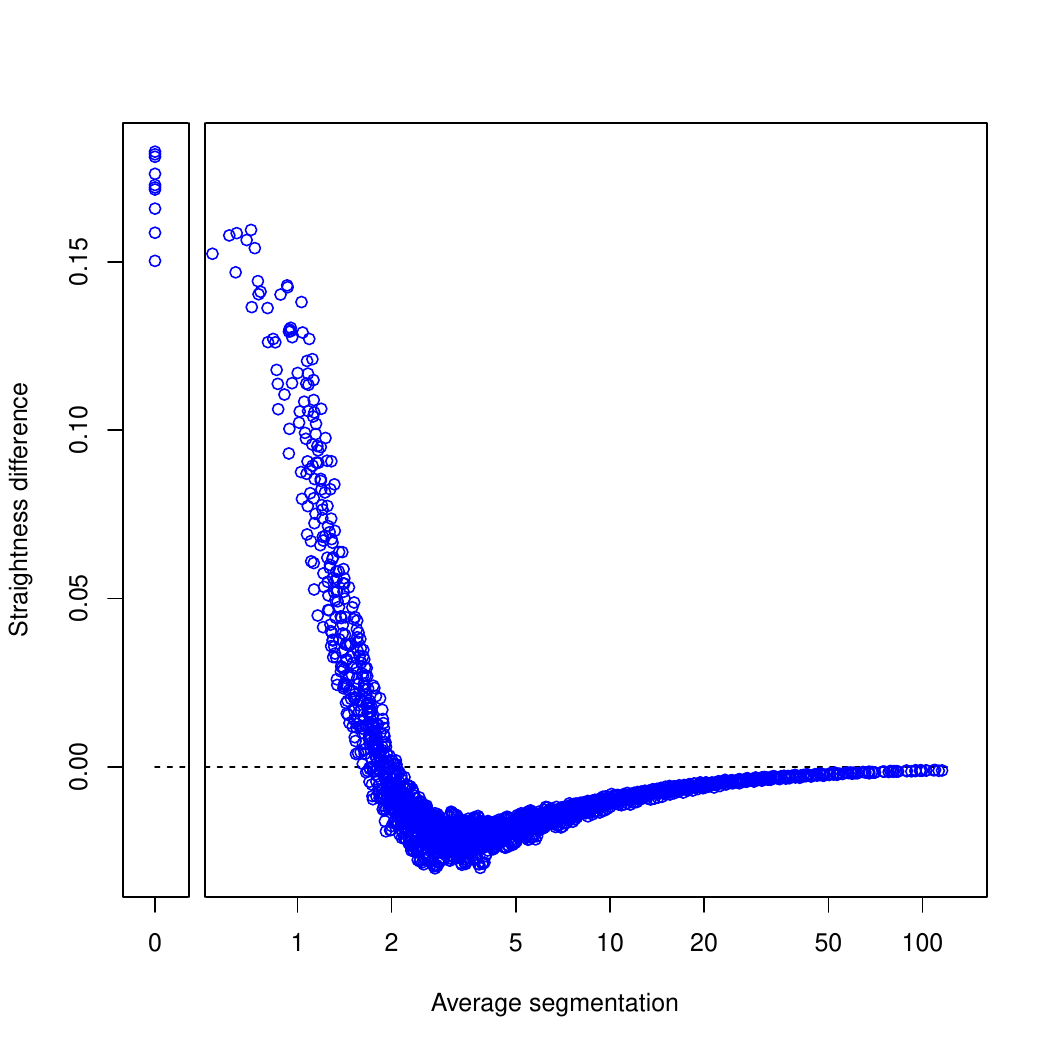}
    \caption{Difference between the discrete approximation and the continuous Straightness, on random planar graphs. The displayed values correspond to individual vertices ($\sigma_\theta(u) - S_G(u)$, left) and whole graphs ($\sigma_\theta(G) - S_G(G)$, right), for all $10$ runs, as functions of the average edge segmentation $\theta$.}
    \label{fig:ExpRandplanarDiff}
\end{figure}

Figure~\ref{fig:ExpRandplanarDiff} shows the results obtained for all $10$ runs (and not a single vertex or graph like before), but under a slightly different form. The $x$ axis has the same meaning as in Figure~\ref{fig:ExpRandplanarStraightness}, but the $y$ axis now represents the \textit{difference} between the discrete approximation and the continuous average Straightness. All vertices are represented in the left plot by their corresponding $\sigma_\theta(u) - S_G(u)$ values, and all graphs in the right one by their $\sigma_\theta(G) - S_G(G)$ values. In both plots, the dashed line materializes a zero difference.    

Both plots confirm the trend observed on the individual vertex and graph plots from Figure \ref{fig:ExpRandplanarStraightness}. Intuitively, when considering Figure~\ref{fig:ExpRandplanarDiff}, the discrete approximation seems to become fairly close to the continuous average when the edges are split in $40$--$50$ pieces or more. However, it seems difficult to generalize this threshold, because Figure~\ref{fig:ExpRandplanarDiff} shows a considerable variance among the considered vertices and graphs, even though they were generated using the same procedure and parameters. This difference could be even stronger with graphs of completely different origins. To summarize, the discrete approximation is unreliable for low average segmentation values, and it is hard to determine an ideal threshold for which it becomes a good approximation of the continuous average.
\begin{figure}[ht!]
	\centering
	\includegraphics[width=0.49\textwidth]{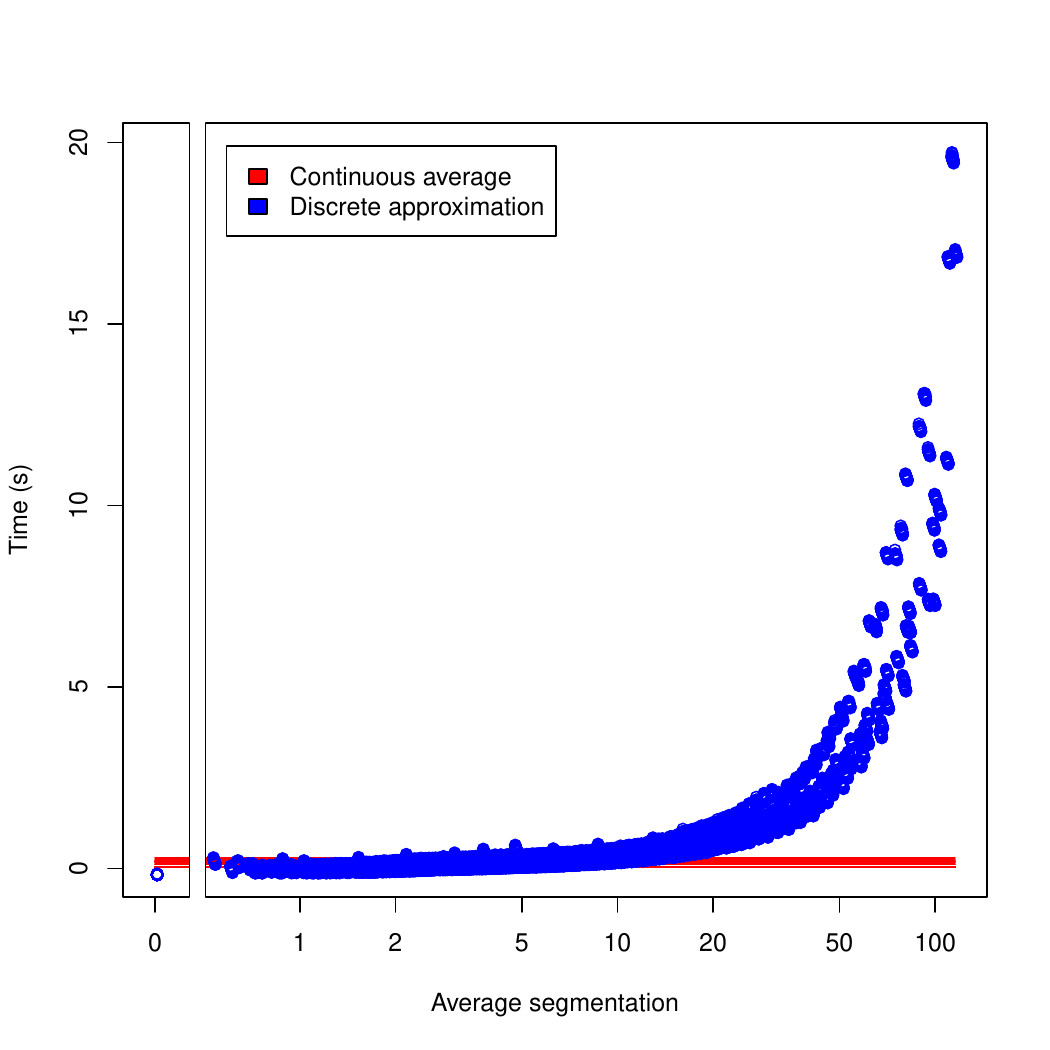}
	\hfill
	\includegraphics[width=0.49\textwidth]{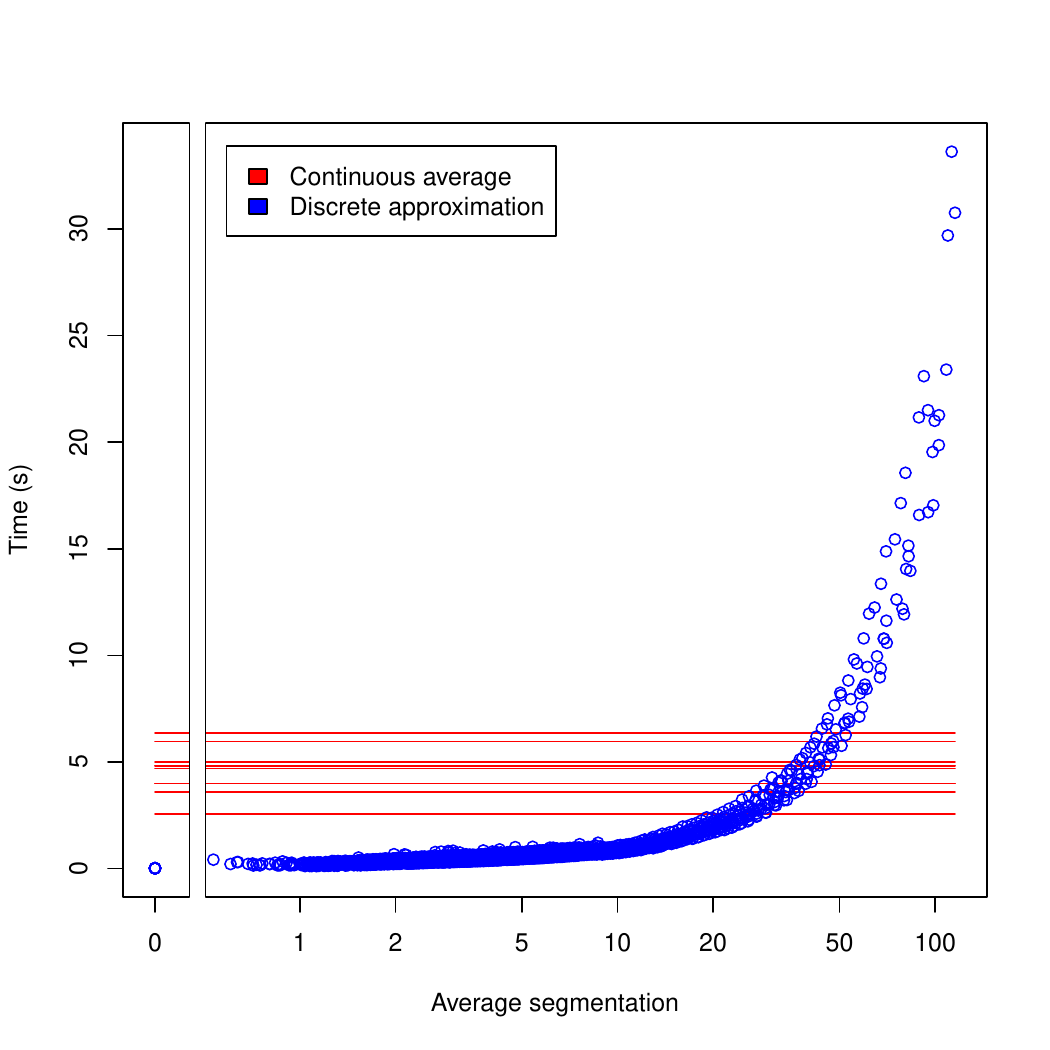}
    \caption{Comparison of the processing times required by the continuous average Straightness (red) and its discrete approximation (blue), on random planar graphs. The displayed values represent individual vertices (left) and whole graphs (right), as functions of the average edge segmentation $\theta$, for all $10$ runs.}
    \label{fig:ExpRandplanarTime}
\end{figure}

\paragraph{Computational Time} 
Figure \ref{fig:ExpRandplanarTime} displays the computational times measured when processing the results shown in Figure~\ref{fig:ExpRandplanarDiff}: individual vertices on the left, whole graphs on the right, all $10$ runs at once. Like before, the red lines represent the continuous average Straightness whereas the blue dots are the discrete approximations. The times are expressed in seconds, and were obtained using a plain desktop Intel i5 3Ghz 8GB machine running Ubuntu 16.10. When dealing with individual vertices ($S_G(u)$ and $\sigma_\theta(u)$, left plot), processing the continuous average is generally much faster. The discrete approximation is actually faster when dealing with the original graph, and when the discretization is very rough, but as mentioned before, this leads to very poor estimations of the continuous average. When reaching the previously established intuitive threshold of a $40$--$50$ average edge segmentation, the processing times are already much higher for the discrete approximation.

The results are not as favorable to the continuous average when considering the whole graph ($S_G(G)$ and $\sigma_\theta(G)$, right plot). As explained in Section~\ref{sec:AlgorithmicComplexity}, this is due to the fact the auxiliary function $f$ (from Definition~\ref{def:AuxiliaryFunction}) is integrated analytically to get $S_G(u)$, whereas we need to use a numerical, slower approach when computing $S_G(G)$. Consequently, the processing times reached by the discrete averages are comparable to those of the continuous ones when reaching the $40$--$50$ threshold. However, additional results obtained on real-world road networks using this threshold (cf. Table~\ref{tab:RealWorldPerf}) show that, in more realistic situations, the processing time of the discrete approximation increases much faster. Moreover, the processing time of the continuous average Straightness could be made shorter by taking advantage of a faster tool to perform the numerical integration, rather than the default R function we used in this experiment.
\begin{figure}[ht!]
	\centering
	\includegraphics[width=0.49\textwidth]{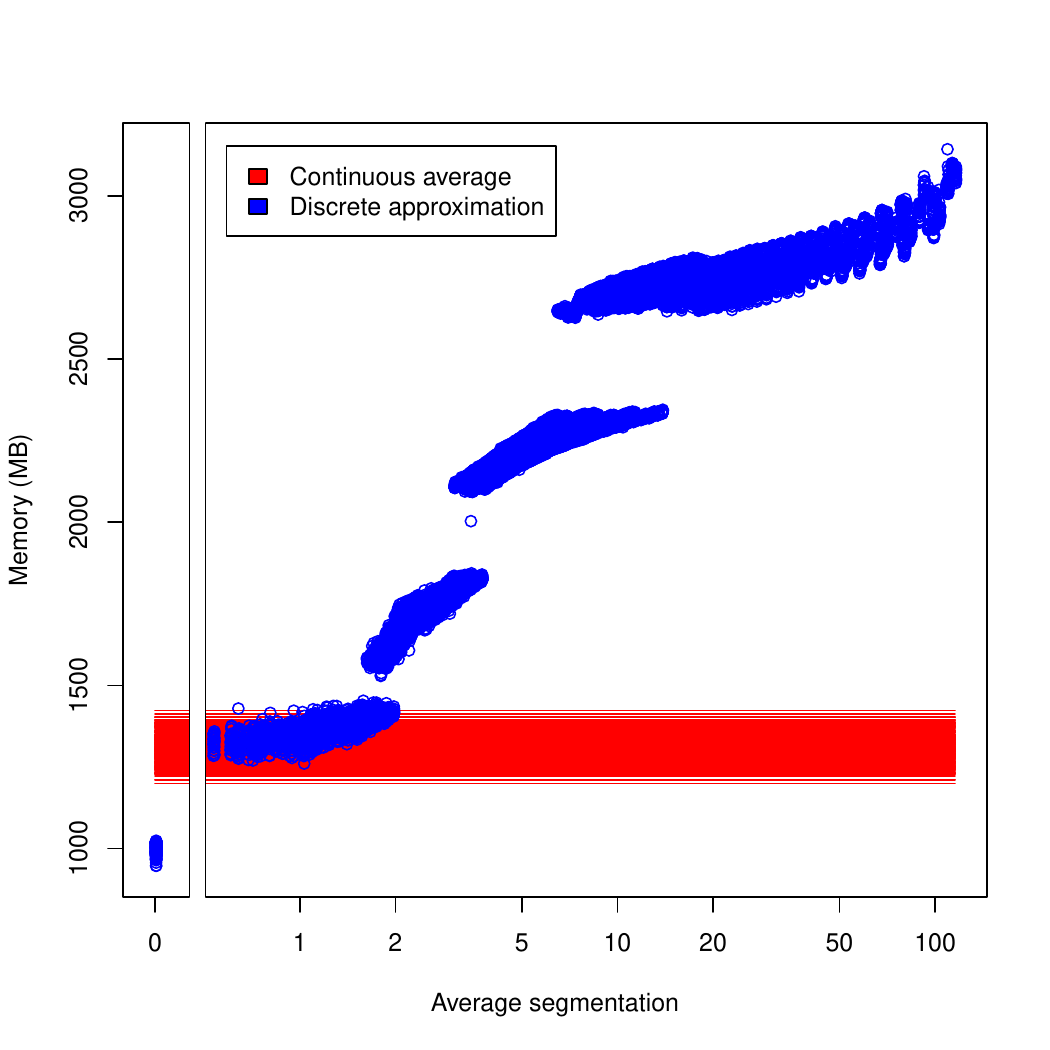}
	\hfill
	\includegraphics[width=0.49\textwidth]{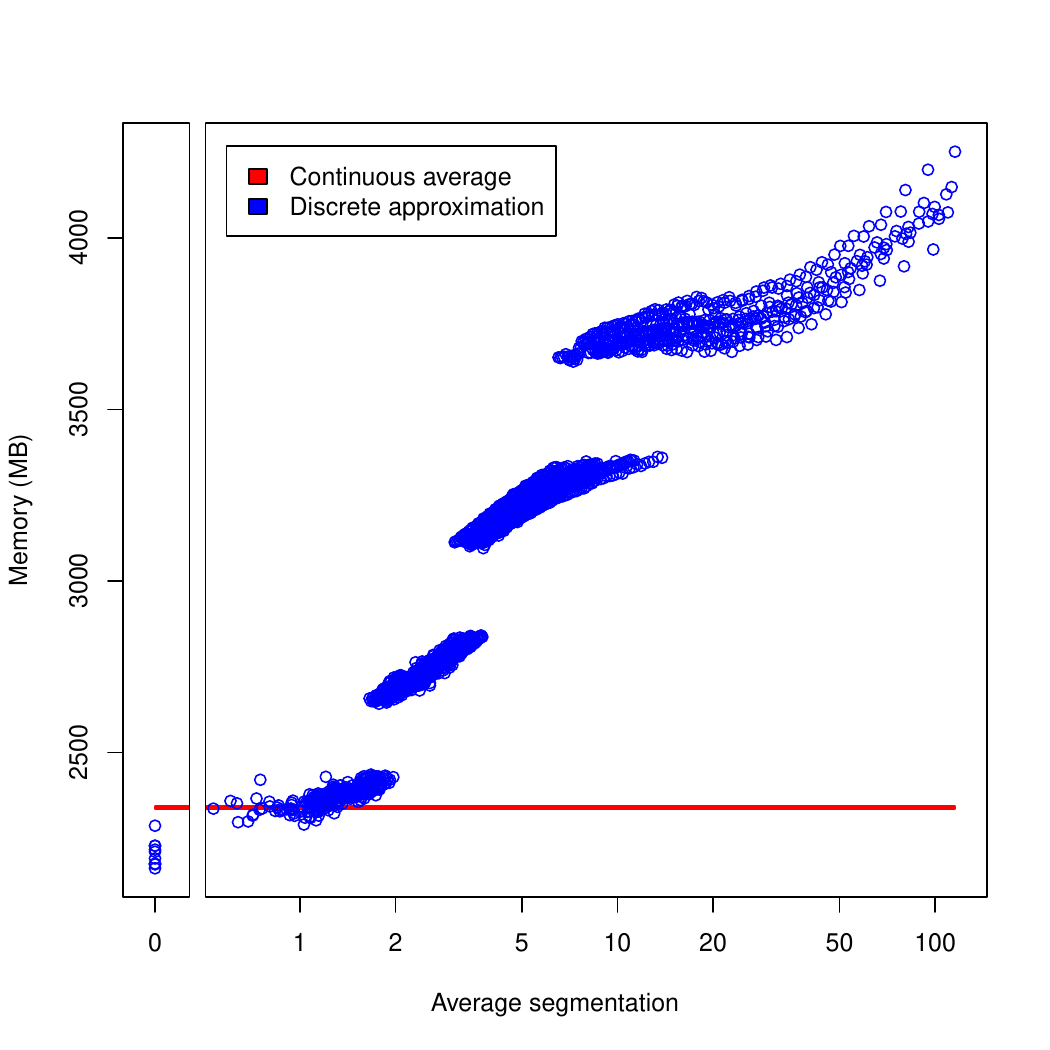}
    \caption{Comparison of the memory used when processing the continuous average Straightness (red) and its discrete approximation (blue), on random planar graphs, for all $10$ runs. The layout is similar to that of Figure \ref{fig:ExpRandplanarTime}.}
    \label{fig:ExpRandplanarMemory}
\end{figure}

\paragraph{Memory Usage} 
Figure \ref{fig:ExpRandplanarMemory} displays the memory usage, expressed in MB, for the same data than Figures~\ref{fig:ExpRandplanarDiff} and~\ref{fig:ExpRandplanarTime}: all $10$ runs at once, individual vertices on the left plot and whole graphs on the right one, continuous average Straightness in red and discrete approximations in blue. Each plot presents three discontinuities for the discrete approximation, which we suppose to be caused by the garbage collector of the R environment: it is automatically triggered, and the user has no control over it.

The comparison is clearly in favor of the continuous average Straightness, both for single vertices and whole graphs. Its memory usage is much lower than that of the discrete approximation, long before the $40$--$50$ average segmentation limit we previously identified. This is an important point, since the discrete approximation could quickly become intractable in practice on a computer with insufficient memory. This is mainly due to the large number of vertices and edges introduced by the edge discretization process described in Section~\ref{sec:DiscreteApproximations}.

\subsection{Behavior of the Measures}
\label{sec:ComparisonNode}
We now study how the various versions of the average Straightness behave over regular graphs, before switching to non-regular and random graphs, and finally real-world road networks.
\begin{figure}[ht!]
	\centering
    \makebox[0.9cm][c]{}
	\hfill
    \makebox[4.4cm][c]{\footnotesize $S(v_1,v_i)$}
	\hfill
    \makebox[4.4cm][c]{\footnotesize $S_{v_1v_2}(v_i)$}
	\hfill
    \makebox[4.4cm][c]{\footnotesize $S_{v_iv_j}(v_1)$}
    
	\includegraphics[height=4.4cm]{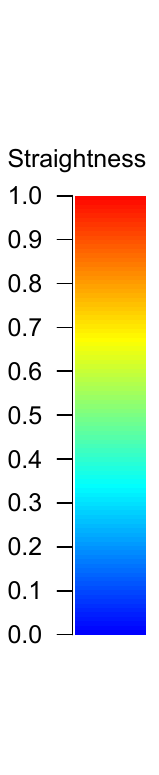}
	\hfill
	\includegraphics[height=4.4cm]{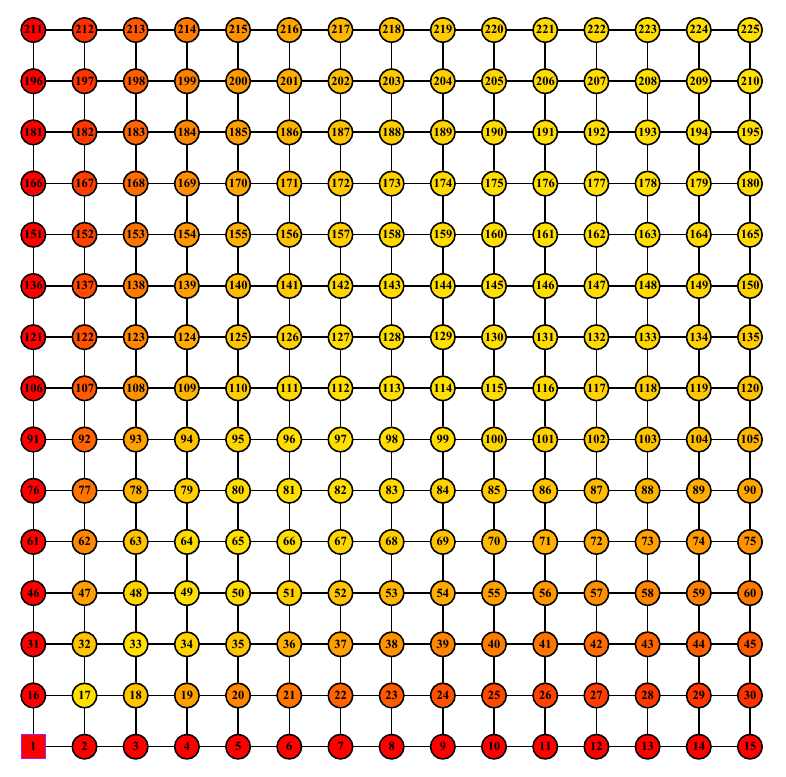}
	\hfill
	\includegraphics[height=4.4cm]{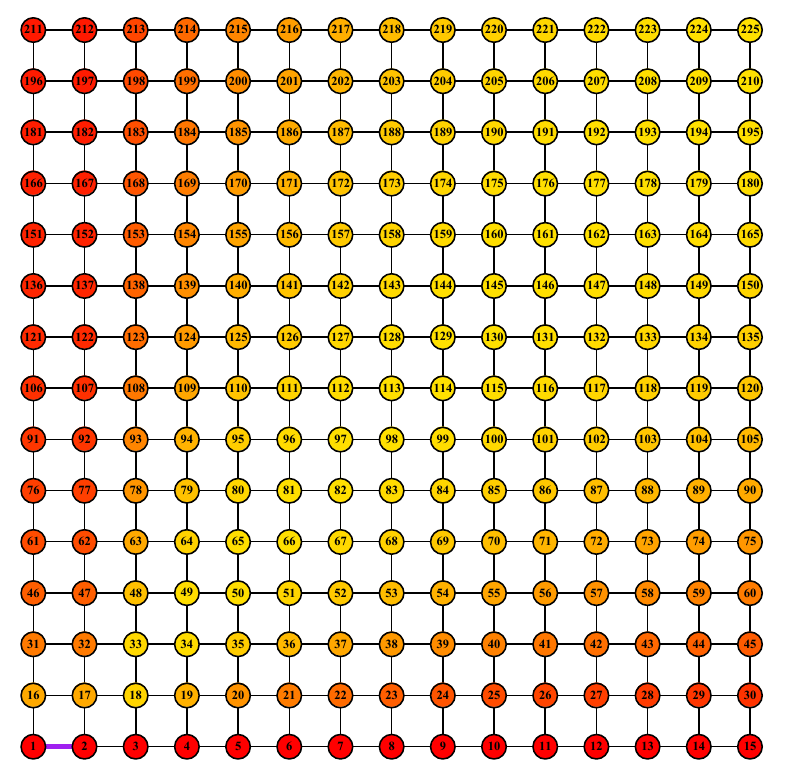}
	\hfill
	\includegraphics[height=4.4cm]{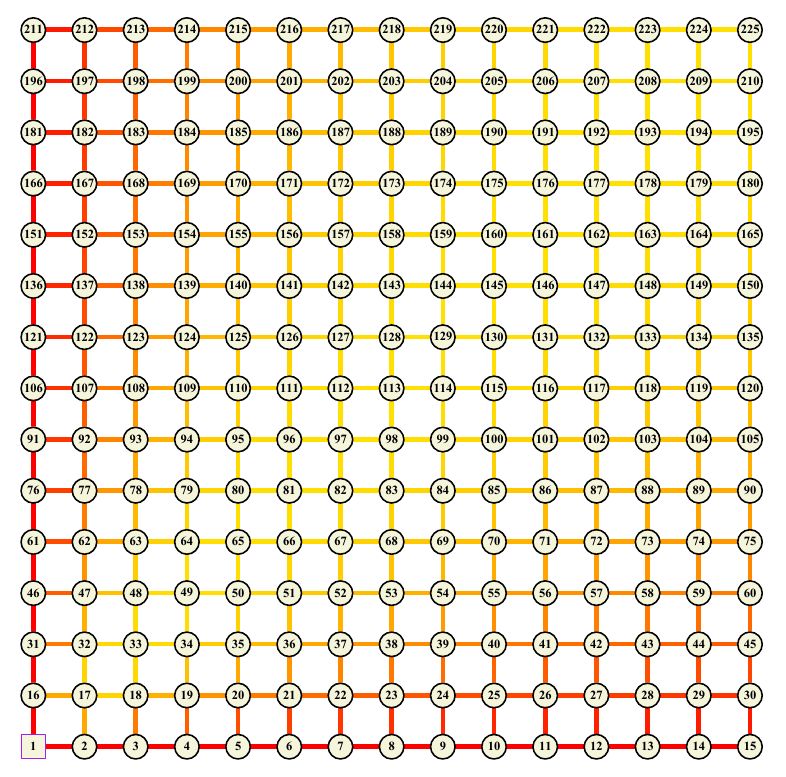}
    \caption{Straightness obtained on a square regular graph, represented through a color gradient. In the left graph, each vertex $v_i \in V$ is colored depending on $S(v_1,v_i)$, the vertex-to-vertex Straightness between this vertex $v_i$ and the fixed vertex $v_1$ (bottom-left vertex in the graph). The center graph displays $S_{v_1v_2}(v_i)$, the continuous average Straightness between $v_i$ and the fixed edge $(v_1,v_2)$ in a similar fashion. In the right graph, each edge $(v_i,v_j) \in E$ is colored depending on $S_{v_iv_j}(v_1)$, the continuous average Straightness between the fixed vertex $v_1$ and this edge $(v_i,v_j)$. The fixed vertices and edge are represented in purple.}
    \label{fig:Squares}
\end{figure}

Figure~\ref{fig:Squares} displays some results obtained for square grids, using a color gradient to represent the Straightness of vertices and edges. The left graph displays $S(v_1,v_i)$, i.e. the traditional Straightness between two vertices (so, \textit{not} an average value). Notice that the first vertex is fixed to $v_1$, which is located at the bottom-left corner of the graph and represented in purple. So, the color of a given vertex v$_i$ corresponds to its Straightness with $v_1$, as traditionally defined (Definition~\ref{def:Straightness}). The center and right graphs show $S_{v_1v_2}(v_i)$ and $S_{v_iv_j}(v_1)$, respectively, i.e. they both represent the continuous average Straightness between a vertex and an edge, but under two different forms. In the former, the edge is fixed to $(v_1,v_2)$ (represented in purple) and each vertex is colored depending to its average Straightness with this edge. In the latter, the vertex is fixed to $v_1$ (also in purple) and each edge is colored depending on its average Straightness with $v_1$.

The left graph illustrates the behavior of the traditional Straightness on such a grid: the vertices located on the same column or row as $v_1$ have a maximal Straightness, whereas it gets lower and lower when getting closer to the diagonal. This is due to the fact there is no diagonal edges in this graph (only vertical and horizontal ones), so being on the diagonal means making more detours. As expected, the continuous average Straightness exhibit a similar behavior. For the center graph, which focuses on edge $(v_1,v_2)$, the difference is that the orientation of the edges affects the obtained values. For instance, $v_{16}$ gets a lower value than in the left graph, whereas $v_{17}$ gets a higher value. This is because $v_1$, $v_2$ and $v_{16}$ are not collinear, so going from a point on $(v_1,v_2)$ to a point on $(v_1,v_{16})$ is not a straight path (unlike going from $v_1$ to $v_{16}$, which is the path considered in the left graph). For the right graph, the difference with the traditional Straightness (left graph) is more visible: it shows the average Straightness of edges, instead of vertices. There is not much difference, in terms of color distribution, with the left graph, due to the regular nature of the graph. However, we will see later that this remark does not necessarily apply in general: the continuous average Straightness of an edge is not necessary similar to those of its end-vertices.

Figure~\ref{fig:Squares} also allows highlighting an important property of the Straightness: it tends to increase with the Euclidean distance to the vertex/edge of interest. Indeed, when considering vertices farther and farther from the vertex/edge of interest, for the ratio defining the Straightness (Definition~\ref{def:Straightness}) to be constant, the detour caused when following the graph edges must grow proportionally to the Euclidean distance, which is often not the case (it grows slower). This is visible on the figure: in the center graph for instance, $v_{30}$ is located one hop away from the row containing the edge of interest $(v_1,v_2)$, but its Straithgness is higher than that of $v_{17}$ which is also located one hop from the row (but much closer from the edge of interest, in terms of Euclidean distance).
\begin{figure}[ht!]
	\centering
    \makebox[0.9cm][c]{}
	\hfill
    \makebox[4.4cm][c]{\footnotesize $S_G(v_i,v_j)$}
	\hfill
    \makebox[4.4cm][c]{\footnotesize $S_G(v_i,v_j)$}
	\hfill
    \makebox[4.4cm][c]{\footnotesize $S_G(v_i,v_j)$}
    
	\includegraphics[height=4.4cm]{legend.pdf}
	\hfill
	\includegraphics[height=4.4cm]{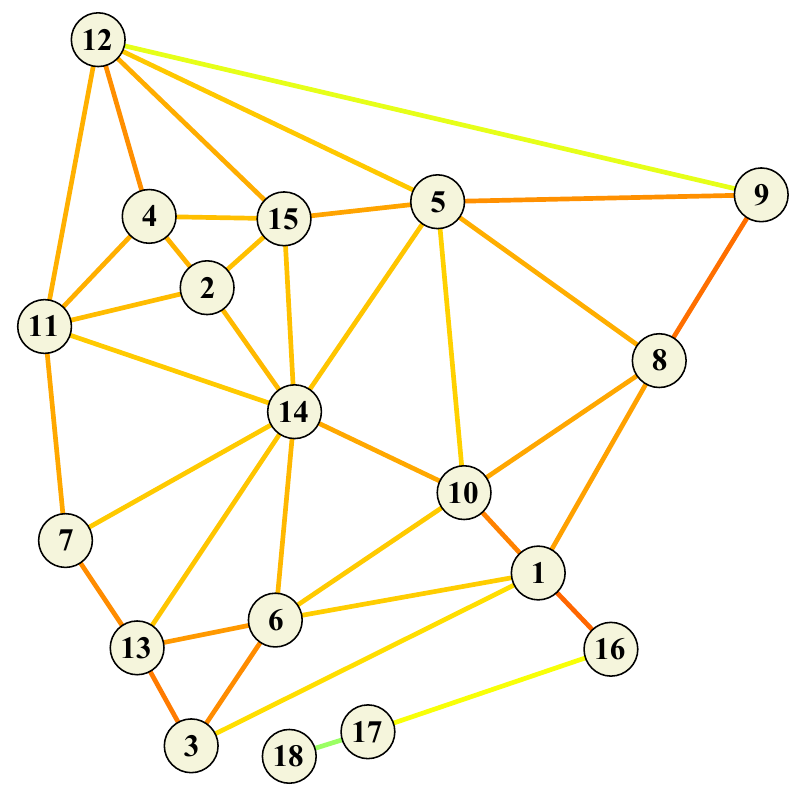}
	\hfill
	\includegraphics[height=4.4cm]{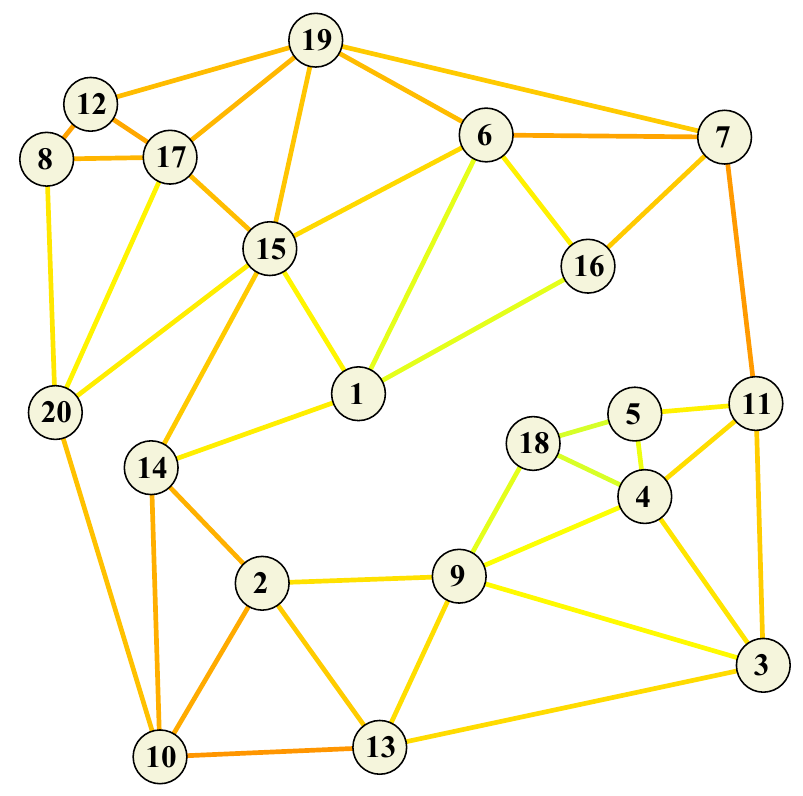}
	\hfill
	\includegraphics[height=4.4cm]{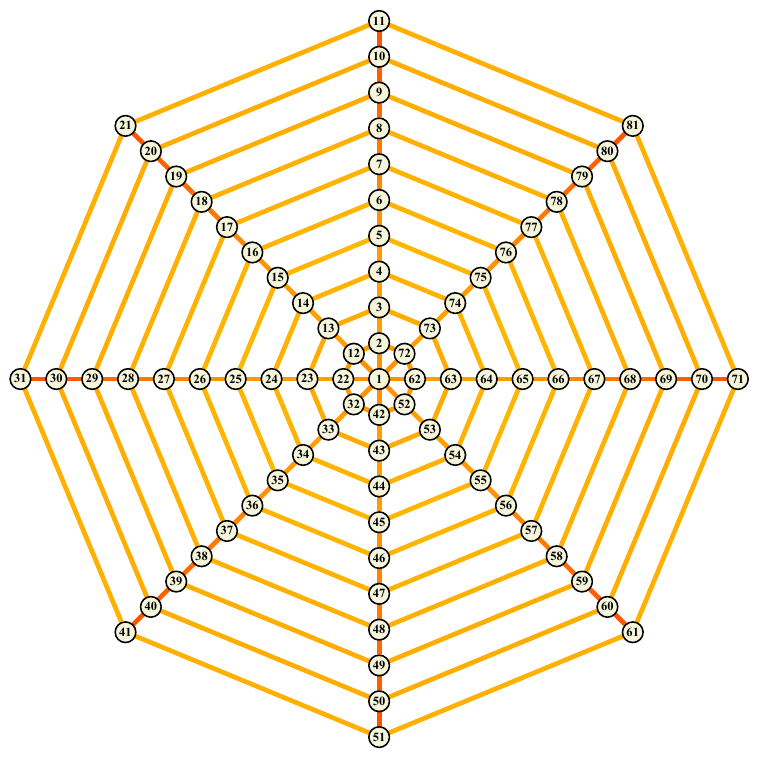}
    \caption{Straightness obtained on two types of non-regular graphs: heterogeneous (left and center) and radio-concentric (right). The color gradient corresponds to $S_G(v_i,v_j)$, the continuous average Straightness between each edge and the rest of the graph.}
    \label{fig:HeteroGrph}
\end{figure}

Of course, this property is present in the original, vertex-to-vertex, Straightness (as shown by the left graph of Figure~\ref{fig:Squares}). But since we consider averages over whole edges in the measures we propose, this distance effect is coupled to another factor: the length of the considered edges. A long edge is more likely to get a low Straightness (and vice versa), since reaching its most inner points requires a longer detour. This does not appear on the regular graphs presented in the previous figures, because their edges are all of relatively similar lengths. However, it shows in the non-regular graphs appearing in Figure~\ref{fig:HeteroGrph}, e.g. edge $(v_9,v_{12})$ in the left graph. This is not to say that only long edges have low Straightness though, as illustrated by edge $(v_{17},v_{18})$ in the same graph, nor that all long edges necessarily have low Straightness, e.g. the peripheral spires in the radioconcentric graph (right graph). 
\begin{figure}[ht!]
	\centering
    \makebox[0.9cm][c]{}
	\hfill
    \makebox[4.4cm][c]{\footnotesize $\sigma(v_i)$}
	\hfill
    \makebox[4.4cm][c]{\footnotesize $S_G(v_i)$}
	\hfill
    \makebox[4.4cm][c]{}
    
    \vspace{-0.5cm}
	\includegraphics[height=4.4cm]{legend.pdf}
	\hfill
	\includegraphics[height=4.4cm]{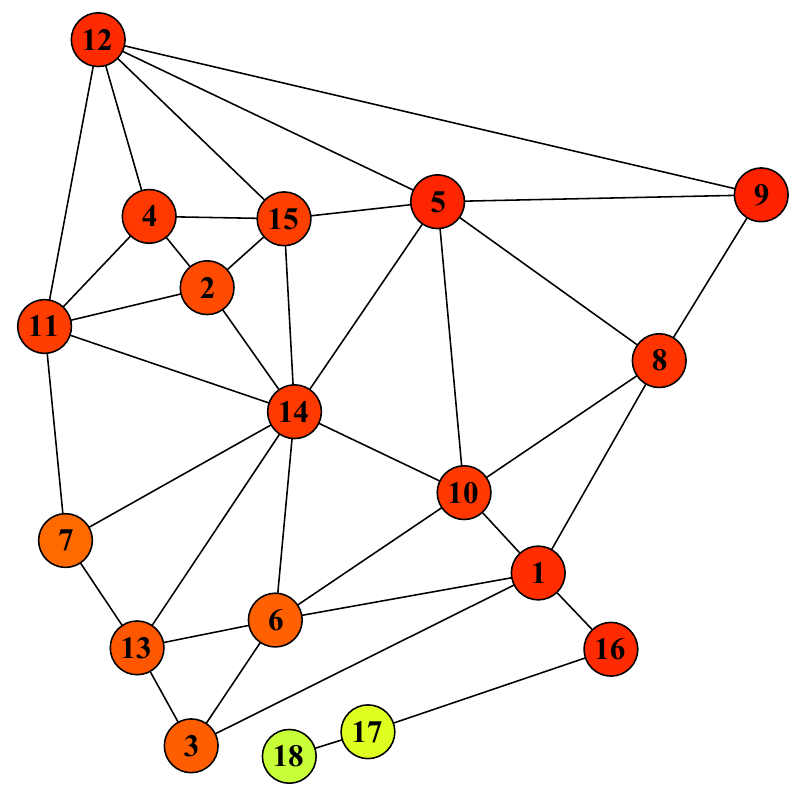}
	\hfill
	\includegraphics[height=4.4cm]{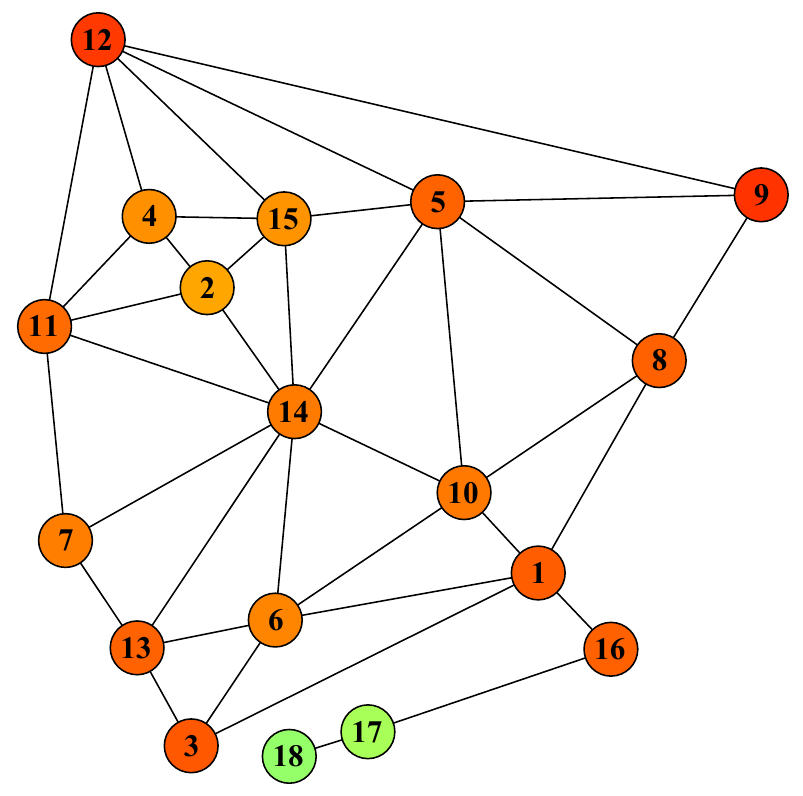}
	\hfill
	\includegraphics[height=4.4cm]{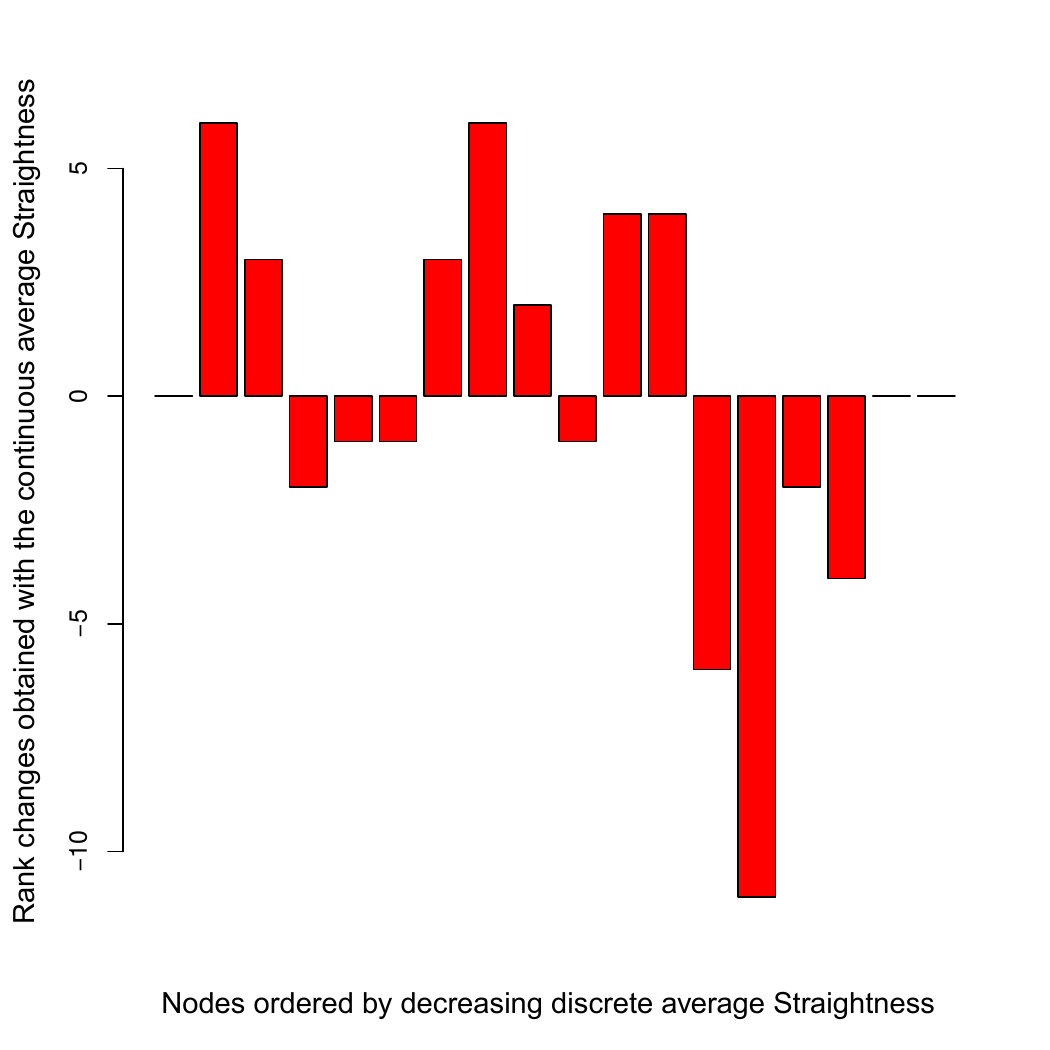}
    \caption{Focus on the left graph of Figure~\ref{fig:HeteroGrph}. The color gradient corresponds to $\sigma(v_i)$, the vertex-to-vertex average Straightness (left), and $S_G(v_i)$, the continuous average Straightness (center) between each vertex and the rest of the graph. The right plot represents the changes in vertex ranking when comparing $\sigma(v_i)$ and $S_G(v_i)$. Each bar corresponds to a specific vertex, and they are ordered in terms of decreasing $\sigma(v_i)$. The bar height displays how the vertex rank changes when considering $S_G(v_i)$ instead of $\sigma(v_i)$.}
    \label{fig:RandomGrph}
\end{figure}

Figures~\ref{fig:RandomGrph} and~\ref{fig:RadioGrph} focus on the left and right graphs from Figure \ref{fig:HeteroGrph}, respectively. In both of them, The left and center graphs display $\sigma(v_i)$ and $S_G(v_i)$, the classic vertex-to-vertex average Straightness and the continuous average Straightness between each vertex and the rest of the graph, respectively. Two points are worth mentioning. First, the continuous average Straightness of an edge and its attached vertices are not necessarily similar. This was not obvious based on Figure~\ref{fig:Squares}, but is much clearer when considering the heterogeneous graphs. For instance, in the left graph of Figure~\ref{fig:HeteroGrph}, we can see that $S_G(v_9,v_{12})$, the continuous average Straightness between edge $(v_9,v_{12})$ and the rest of the graph, is under $0.7$. In  the center graph of Figure~\ref{fig:RandomGrph}, the values observed for $S_G(v_9)$ and $S_G(v_{12})$, the continuous average Straightness between these vertices and the rest of the graph, is clearly higher, reaching approximately $0.9$.
\begin{figure}[ht!]
	\centering
    \makebox[0.9cm][c]{}
	\hfill
    \makebox[4.4cm][c]{\footnotesize $\sigma(v_i)$}
	\hfill
    \makebox[4.4cm][c]{\footnotesize $S_G(v_i)$}
	\hfill
    \makebox[4.4cm][c]{}
    
	\includegraphics[height=4.4cm]{legend.pdf}
	\hfill
	\includegraphics[height=4.4cm]{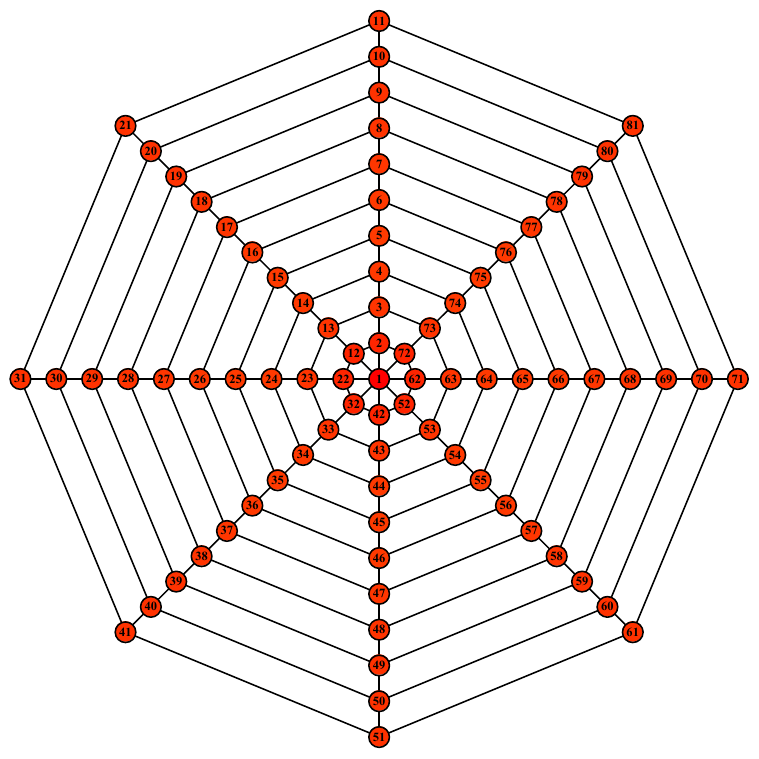}
	\hfill
	\includegraphics[height=4.4cm]{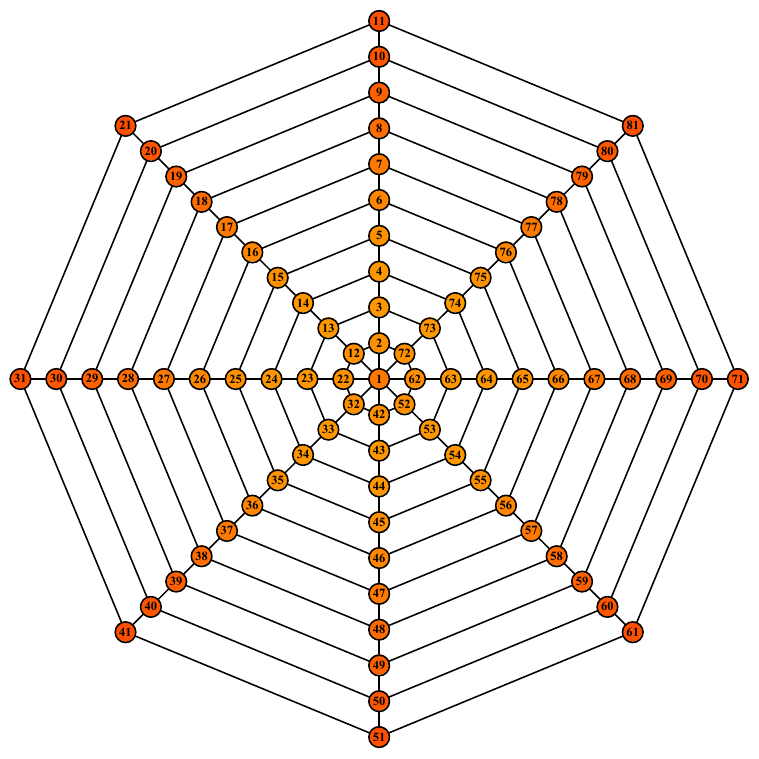}
	\hfill
	\includegraphics[height=4.4cm]{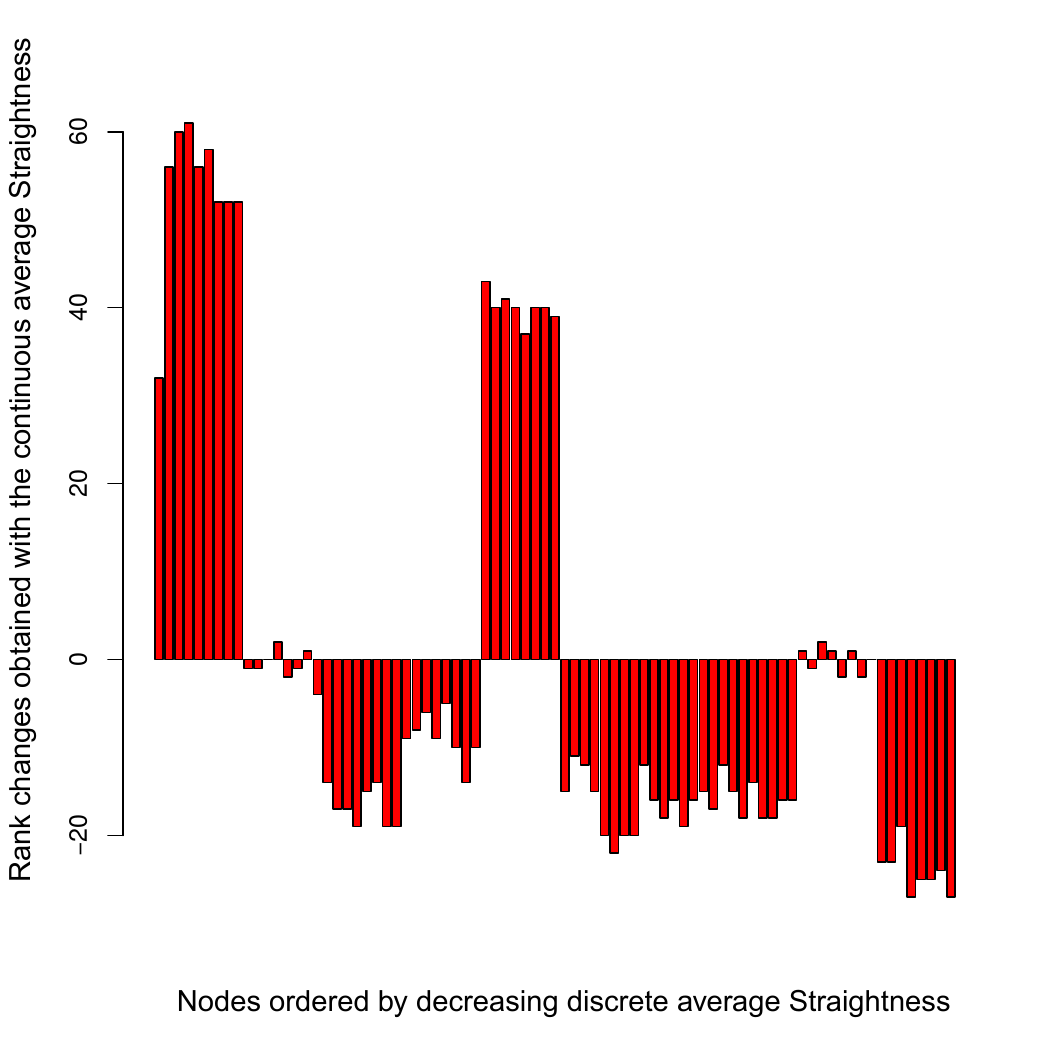}
    \caption{Focus on the right graph of Figure~\ref{fig:HeteroGrph}. The layout and color code are similar to those of Figure~\ref{fig:RandomGrph}.}
    \label{fig:RadioGrph}
\end{figure}

Second, we observe that the values obtained with $\sigma(v_i)$ are generally larger, which is consistent with what we previously noticed during the performance study (Figure~\ref{fig:ExpRandplanarStraightness}), but was not obvious on Figure~\ref{fig:Squares}. However, the continuous average is not just a monotic function of its discrete counterpart. For instance, in Figure~\ref{fig:RandomGrph}, $\sigma(v_{15}) > \sigma(v_3)$, but $S_G(v_{15}) < S_G(v_3)$. In Figure~\ref{fig:RadioGrph}, the vertex with the highest average Straightness is the center when considering $\sigma(v_i)$, but the peripheral vertices reach a higher value than the center with $S_G(v_i)$. This is due to the previous observation, regarding the fact the continuous average Straightness is affected by the edge length. In the case of the radioconcentric graph, the peripheral vertices have the best access to a number of peripheral spires (in addition to certain radii), which happen to be the longest edges in the graph. 

This aspect is studied further through the right plots in Figures~\ref{fig:RandomGrph} and~\ref{fig:RadioGrph}, which both display the changes in vertex ranks when switching from $\sigma(v_i)$ to $S_G(v_i)$. Each bar represents a vertex, and the vertices are horizontally sorted by decreasing $\sigma(v_i)$, i.e. the top (resp. bottom) vertex in terms of $\sigma(v_i)$ is located on the left (resp. right) side of the plot. The $y$ axis corresponds to the rank changes undergone by the vertices when considering $S_G(v_i)$ instead of $\sigma(v_i)$. For a given vertex $v_i$, a positive value means it is better ranked when considering $\sigma(v_i)$ than $S_G(v_i)$, whereas a negative value means the opposite. On the heterogeneous graph from Figure~\ref{fig:RandomGrph}, one can see that the most extreme vertices in terms of $\sigma(v_i)$ are ranked similarly in terms of $S_G(v_i)$, but that there are important changes from most other vertices. On the radioconcentric graph from Figure~\ref{fig:RadioGrph}, it appears that the vertices best ranked according to $\sigma(v_i)$ (left-side bars), i.e. the most central vertices, are the worst ranked in terms of $S_G(v_i)$ (large positive values). On the contrary, the worst ranked vertices according to $\sigma(v_i)$ (right-side bars) get better ranks with $S_G(v_i)$ (large negative values).

\begin{table}[!th]
    \caption{Main topological properties of the considered real-world road networks, and corresponding processing times. These are indicated for both $S_G(u)$, the continuous average Straightness between a vertex and the rest of the graph, and $\sigma_\theta(u)$, its discrete approximation. For both measures, the first specified value is an average over a $25$-vertex sample, and the second one (between parenthesis) is the associated standard deviation. Both are expressed in seconds.}
	\label{tab:RealWorldPerf}
	\centering
	\begin{tabular*}{\textwidth}{@{\extracolsep{\fill}}l@{}r@{}r@{}r@{}r@{}l@{}r@{}l@{}}
		\hline
		City & Vertex count & Edge count & Density & \multicolumn{4}{c}{Processing time} \\
		 & $n$ & $m$ & & \multicolumn{2}{c}{For $\sigma_{\theta}(u)$} & \multicolumn{2}{c}{For $S_G(u)$} \\
		\hline
		Abidjan & 2,577 & 2,908 & $8.76\cdot 10^{-4}$ & 490.12 & (0.02) & 4.00 & ~(0.08) \\
		Karlskrona & 2,619 & 2,999 & $8.75\cdot 10^{-4}$ & 672.36 & (0.02) & 4.11 & ~(0.06) \\
		Soustons & 6,803 & 7,351 & $3.18\cdot 10^{-4}$ & 3,889.20 & (0.10) & 18.13 & ~(0.10) \\
		Maastricht & 9,539 & 10,929 & $2.40\cdot 10^{-4}$ & 7,379.10 & (0.08) & 35.84 & ~(0.18) \\
		Trois-Rivi\`eres & 12,001 & 14,014 & $1.95\cdot 10^{-4}$ & 11,504.23 & (0.12) & 55.54 & ~(0.26) \\
		Alice Springs & 17,011 & 17,790 & $1.23\cdot 10^{-4}$ & 16,765.85 & (0.17) & 93.38 & ~(0.56) \\
		Sfax & 17,152 & 19,702 & $1.34\cdot 10^{-4}$ & 21,122.19 & (0.14) & 131.18 & ~(0.42) \\
		Avignon & 19,481 & 21,898 & $1.15\cdot 10^{-4}$ & 25,438.31 & (0.20) & 136.30 & ~(0.77) \\
		Liverpool & 28,739 & 33,424 & $8.09\cdot 10^{-5}$ & 52,773.02 & (0.20) & 313.55 & ~(3.32) \\
		Ljubljana & 30,854 & 34,684 & $7.29\cdot 10^{-5}$ & 60,572.10 & (0.31) & 369.35 & (26.91) \\
		Lisbon & 35,231 & 40,853 & $6.58\cdot 10^{-5}$ & 79,702.34 & (0.44) & 495.99 & ~(9.46) \\
		Dakar & 36,561 & 45,041 & $6.74\cdot 10^{-5}$ & 98,568.74 & (0.40) & 530.90 & ~(7.62) \\
		Hong Kong & 46,145 & 49,559 & $4.65\cdot 10^{-5}$ & 111,724.71 & (0.36) & 790.45 & ~(8.67) \\
		\hline
	\end{tabular*}
\end{table}

Finally, we processed the continuous average Straightness of a selection of $13$ real-world road networks of various sizes and shapes. These were obtained using the OpenStreetMap website\footnote{\url{https://www.openstreetmap.org}}. The processing times are much larger than for the artificially generated networks considered before, due to the size of the real-world networks. Table~\ref{tab:RealWorldPerf} describes the networks (numbers of vertices $n$ and edges $m$, and edge density), and gives the processing times for $S_G(u)$, the continuous average Straightness between a vertex and the rest of the graph, and $\sigma_{\theta}(u)$, its discrete approximation. We used $\theta = 50$, as this is the threshold empirically identified as appropriate to get a reliable estimation (cf. Section~\ref{sec:ComparisonDiscrete}). These measures were processed for a sample of $25$ vertices in each network, and the table therefore displays average values and standard deviations. The durations increase much faster for the discrete approximation. With regards to the obtained Straightness values, the difference between $\sigma_{\theta}(u)$ and $S_G(u)$ is smaller than $10^3$, in average. 
\begin{figure}[ht!]
	\centering
	\raisebox{-0.5\height}{\includegraphics[height=4.3cm]{legend.pdf}}
	\raisebox{-0.5\height}{\includegraphics[height=7cm]{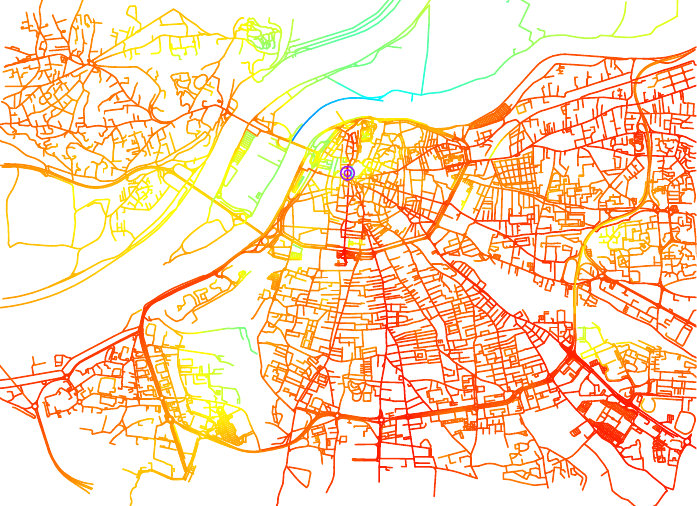}}
    \raisebox{-0.5\height}{
    	\adjustbox{trim={0\width} {0\height} {0.2\width} {0\height},clip}{
    		\includegraphics[height=7cm]{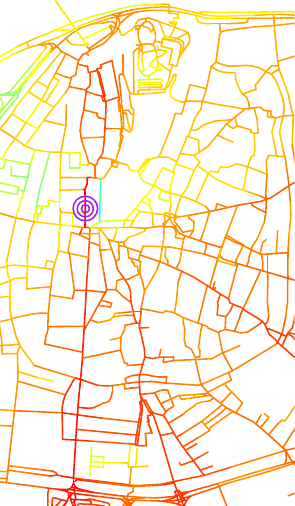}
		}
	}
    \caption{Straightness obtained for the city of Avignon, France: whole city (left) and city center (right). The color gradient corresponds to $S_{uv}(r)$, the average Straightness between a fixed vertex corresponding to the city hall (represented in purple) and each edge of the graph.}
    \label{fig:UrbanGrph}
\end{figure}

As an example, Figure~\ref{fig:UrbanGrph} displays a part of the city of Avignon, France. It represents $S_{uv}(r)$ the continuous average Straightness between a vertex of interest $r$ and each edge of the graph. Vertices are not represented because this would prevent clearly distinguishing the edges color. The right-hand plot is a zoom-in of the left-hand part, focusing on the city center. The fixed vertex corresponds to the city hall, and is represented in purple. The color distribution shows that most of the city can be accessed efficiently when starting from this point, due to its radial structure. However, there are exceptions, which can be explained. 

There are two small yellow patches in the eastern part. Both are located near the city bypass, but not directly connected to it, which explains their low accessibility (Straightness). Moreover, the bottom one neighbors a seemingly empty zone, which is actually a freight train station. The fact that there is no road crossing the railroad in this zone also explains the observed low Straightness. The larger yellow patch on the southwestern part is not very well connected due to a similar reason, except this time the empty zone corresponds to the hight-speed train station (TGV). The lower accessibility in this area is also caused by the presence of an industrial activity sector, a water treatment plant and a recycling plant. On the western side, the accessibility is also globally lower, because the Rh\^one river splits the urban area in two along a southwest to northeast axis. The blue parts correspond to a large, non-urbanized river island, which is only lightly connected to the road network. Finally, there are also traces of low Straightness in the city center, very near to the town hall. This is due to the distance effect we observed before when discussing the Straightness properties: on close destinations, it is more likely for the graph shortest path to be much longer than the distance as the crow flies, much more so than for distant destinations. This is particularly true for the center of Avignon, which is a medieval area with very convoluted streets, as illustrated by the right-hand plot of Figure~\ref{fig:UrbanGrph}. 

\subsection{Final Observations}
\label{sec:FinalObs}
Let us now summarize the main findings of this experimental section. As mentioned in the introduction, there is a conceptual interest to the computation of a \textit{continuous} average of the Straightness, by opposition to the traditional vertex-based discrete approach, because it allows taking into account any itineraries, and not only the vertex-to-vertex ones. The question is to know whether this changes anything in practice.

The first conclusion of our experiments is that, indeed, both approaches lead to different values in practice, as illustrated on random graphs (Figures~\ref{fig:ExpRandplanarStraightness} and~\ref{fig:RandomGrph}), regular graphs (Figure~\ref{fig:Squares}) and non-regular graphs (Figure~\ref{fig:RadioGrph}). Moreover, if the continuous values seem to generally be smaller than the vertex-to-vertex average, we also showed that this was not true in general (Figures~\ref{fig:ExpRandplanarStraightness}, \ref{fig:ExpRandplanarDiff}, \ref{fig:RandomGrph} and~\ref{fig:RadioGrph}). In fact, when sorting vertices in terms of average Straightness, using our continuous approach can lead to largely different rankings compared to the traditional approach, as illustrated by Figures~\ref{fig:RandomGrph} and~\ref{fig:RadioGrph}. 

So in other terms, the discrete and continuous average Straightness values behave differently. We described how these differences can be explained by the fact the continuous average Straightness is affected by the length of the considered edges. We also have shown the relevance of our continuous approach when applied to real-world road networks (Figure~\ref{fig:UrbanGrph}).

The second conclusion of our experiments is that it is not worth using the discrete approach to approximate the continuous average Straightness. Reaching a good approximation level requires performing a very fine discretization of the edges, which in turn strongly increases both the processing time and memory usage (Figures~\ref{fig:ExpRandplanarTime} and~\ref{fig:ExpRandplanarMemory}, and Table~\ref{tab:RealWorldPerf}).

\section{Conclusion}
\label{sec:Conclusion}
In this article, we derived $5$ different versions of the average Straightness for spatial graphs, based on a continuous approach, by contrast with the discrete (vertex-to-vertex) method traditionally adopted in the literature. We then validated them experimentally on a wide variety of graphs, and showed that they are qualitatively different from their discrete counterparts, in the sense they allow to describe the graphs in a different way. Moreover, we also showed that using a discrete approach to approximate the continuous average Straightness is not efficient, since it requires more resources (both time and memory) to obtain a similar precision.

We think this work could be extended mainly in two ways. First, the computational performance could be improved for some of the proposed measures, either by selecting a faster tool to perform the numerical integration, or more straightforwardly by calculating a closed form of the anti-derivative of $F$ (itself the anti-derivative of the auxiliary function $f$ from Definition~\ref{def:AuxiliaryFunction}). However $F$ is itself a convoluted function, and state-of-the-art automatic tools cannot handle it as of now.

Second, and more importantly, the method we proposed is generic and could be applied to any measure defined for spatial graphs: we focused on the Straightness here only because we needed this specific measure for a larger project. In particular, in Section~\ref{sec:StraightnessReformulation} we have expressed both the Euclidean and graph distances under a form suitable for integration, so processing the continuous average of any distance-based measure (for instance the spatial closeness centrality, or the tortuosity measure) would require performing only the extra steps from Section~\ref{sec:ContinuousAverageStraightness}. Of course, depending on the considered measure, these steps might also raise problems of their own, e.g. regarding the integrability of the measure.



\section*{Funding}
This work was supported by the Centre National de la Recherche Scientifique [CNRS PEPS MoMIS Urbi\&{}Orbi] and the University of Avignon [UAPV Projet d'excellence SpiderNet].

\renewcommand*{\bibfont}{\footnotesize}
\addcontentsline{toc}{section}{\refname}
\printbibliography

\end{document}